  \theoremstyle{plain}
  \newtheorem{theorem}{Theorem}[section]
  \newtheorem{lemma}[theorem]{Lemma}  
  \newtheorem{corollary}[theorem]{Corollary}  
  \newtheorem{fact}[theorem]{Fact}
  \newtheorem{observation}[theorem]{Observation}
  \theoremstyle{definition}
  \newtheorem{definition}[theorem]{Definition}
  \newtheorem{example}[definition]{Example}
  \newtheorem{remark}[definition]{Remark}
  \newtheorem*{claim}{Claim}
  \title{Pattern Matching and Consensus Problems on Weighted Sequences and Profiles\footnote{Work supported by the Polish Ministry of Science and Higher Education under the `Iuventus Plus' program in 2015-2016 grant no 0392/IP3/2015/73.}}
\author[1]{Tomasz Kociumaka}
\author[2]{Solon P. Pissis}
\author[1,2]{Jakub Radoszewski\footnote{The author is a Newton International Fellow.}$^{,}$}
\affil[1]{Institute of Informatics, University of Warsaw, Warsaw, Poland\\
    \texttt{[kociumaka,jrad]@mimuw.edu.pl}}
\affil[2]{Department of Informatics, King's College London, London, UK\\
    \texttt{solon.pissis@kcl.ac.uk}
}
\date{\vspace{-5ex}}
\newsavebox{\mybox}
\newenvironment{dsproblem}[1]
{\begin{center}\begin{lrbox}{\mybox}\begin{minipage}{0.96\columnwidth}#1 \textsc{Problem}\\}
{\end{minipage}\end{lrbox}\fbox{\usebox{\mybox}}\end{center}}
  \newcommand{\defdsproblem}[3]{
  \begin{dsproblem}{#1}
\textbf{Input:} #2\\
\textbf{Output:} #3
  \end{dsproblem}
  }
  \newcommand{\defdsproblempar}[4]{
  \begin{dsproblem}{#1}
\textbf{Input:} #2\\
\textbf{Question:} #3\\
\textbf{Parameters:} #4
  \end{dsproblem}
  }
  \newcommand{\defdsproblemoutpar}[4]{
  \begin{dsproblem}{#1}
\textbf{Input:} #2\\
\textbf{Output:} #3\\
\textbf{Parameters:} #4
  \end{dsproblem}
  }
  \newcommand{\MK}{\textsc{Multichoice Knapsack}\xspace}
  \newcommand{\PM}{\textsc{Profile Matching}\xspace}
  \newcommand{\WPM}{\textsc{Weighted Pattern Matching}\xspace}
  \newcommand{\GWPM}{\textsc{GWPM}\xspace}
  \newcommand{\GWPMFull}{\textsc{General Weighted Pattern Matching}\xspace}
  \newcommand{\WC}{\textsc{Weighted Consensus}\xspace}
  \newcommand{\SDWCFull}{\textsc{Short Dissimilar Weighted Consensus}\xspace}
  \newcommand{\SDWC}{\textsc{SDWC}\xspace}
  \newcommand{\PC}{\textsc{Profile Consensus}\xspace}
  \newcommand{\Knapsack}{\textsc{Knapsack}\xspace}
  \newcommand{\SubsetSum}{\textsc{Subset Sum}\xspace}
  \newcommand{\Sum}{\textsc{Sum}\xspace}
  \newcommand{\floor}[1]{\left\lfloor #1 \right\rfloor}
  \newcommand{\ceil}[1]{\left\lceil #1 \right\rceil}
  \newcommand{\Oh}{\mathcal{O}}
  \newcommand{\Ohtilde}{\tilde{\mathcal{O}}}
    \newcommand{\Ohstar}{\mathcal{O}^*}
  \newcommand{\sub}{\subseteq}
  \newcommand{\Occ}{\mathit{Occ}}
  \newcommand{\lcp}{\mathit{lcp}}
  \renewcommand{\H}{\mathcal{H}}
  \newcommand{\C}{\mathcal{C}}
  \newcommand{\B}{\mathcal{B}}
  \renewcommand{\L}{\mathcal{L}}
  \newcommand{\R}{\mathcal{R}}
  \renewcommand{\P}{\mathcal{P}}
  \newcommand{\Ls}{\mathcal{L}^*}
  \newcommand{\Rs}{\mathcal{R}^*}
  \newcommand{\fr}{\ensuremath{\frac1z}}
  \newcommand{\frsq}{$\frac{1}{\sqrt{z}}$}
  \newcommand{\match}{\approx_{\frac1z}}
  \newcommand{\mayqed}{}
  \newcommand{\mmid}{\mathrm{mid}}
  \DeclareMathOperator*{\rank}{rank}
  \newcommand{\HammingDistance}{d_H}
  \newcommand{\Score}{\mathrm{Score}}
  \newcommand{\NumStrings}{\mathrm{NumStrings}}
\begin{document}
  \maketitle
\begin{abstract}
  We study pattern matching problems on two major representations of uncertain sequences used in molecular biology:
  weighted sequences (also known as position weight matrices, PWM) and profiles (i.e., scoring matrices).
  In the simple version, in which only the pattern or only the text is uncertain, 
  we obtain efficient algorithms with theoretically-provable running times using a variation of the lookahead scoring technique.
  We also consider a general variant of the pattern matching problems in which both the pattern and the text are uncertain.
  Central to our solution is a special case where the sequences have equal length, called the consensus problem.
  We propose algorithms for the consensus problem parameterized by the number of strings that match one of the sequences.
  As our basic approach, a careful adaptation of the classic meet-in-the-middle algorithm for the knapsack problem is used.
  On the lower bound side, we prove that our dependence on the parameter is optimal up to lower-order terms
  conditioned on the optimality of the original algorithm for the knapsack problem.
\end{abstract}

  \section{Introduction}
  We study two well-known representations of uncertain texts: \emph{weighted sequences} and \emph{profiles}.
  A \emph{weighted sequence} (also known as uncertain sequence or position weight matrix, PWM)
  for every position and every letter of the alphabet specifies the probability of occurrence of this letter at this position;
  see \cref{table:weighted_sequence} for an example.
  A weighted sequence represents many different strings, each with the probability
  of occurrence equal to the product of probabilities of its letters at subsequent positions of the weighted sequence.
  Usually a threshold \fr\ is specified, and one considers only strings that match the weighted sequence with probability at least \fr.
  A \emph{scoring matrix} (or a profile) of length $m$ is an $m \times \sigma$ matrix.
  The \emph{score} of a string of length $m$ is the sum of scores in the scoring matrix of the subsequent
  letters of the string at the respective positions.
  A string is said to match a scoring matrix if its matching score is above a specified threshold $Z$.

      \begin{figure}[htpb]
      \renewcommand*{\arraystretch}{1.2}
      \begin{center}
      \begin{tabular}{|c|c|c|c|}
        \hline
        $X[1]$ & $X[2]$ & $X[3]$ & $X[4]$ \\
        \hline
        \ $\pi_1^{(X)}(\mathtt{a})=1/2$ \ & \ $\pi_2^{(X)}(\mathtt{a})=1$ \ & \ $\pi_3^{(X)}(\mathtt{a})=3/4$ \ & \ $\pi_4^{(X)}(\mathtt{a})=0$ \ \\
        \ $\pi_1^{(X)}(\mathtt{b})=1/2$ \ & \ $\pi_2^{(X)}(\mathtt{b})=0$ \ & \ $\pi_3^{(X)}(\mathtt{b})=1/4$ \ & \ $\pi_4^{(X)}(\mathtt{b})=1$ \ \\
        \hline
      \end{tabular}
      \end{center}
      \caption{A weighted sequence $X$ of length 4 over the alphabet $\Sigma=\{\mathtt{a},\mathtt{b}\}$}\label{table:weighted_sequence}
      \end{figure}

  \subparagraph*{\WPM and \PM}
  First of all, we study the standard variants of pattern matching problems on weighted sequences and profiles,
  in which only the pattern or the text is an uncertain sequence.
  In the best-known formulation of the \WPM problem, we are given a weighted sequence of length $n$, called a text,
  a solid (standard) string of length $m$, called a pattern, both over an alphabet of size $\sigma$,
  and a \emph{threshold probability} \fr.
  We are asked to find all positions in the text where the fragment
  of length $m$ represents the pattern with probability at least \fr.
  Each such position is called an \emph{occurrence} of the pattern in the text;
  we also say that the fragment of the text and the pattern \emph{match}.
  The \WPM problem can be solved in $\Oh(\sigma n \log m)$ time via the Fast Fourier Transform~\cite{KCL_publication}.
  In a more general indexing variant of the problem, considered in
  \cite{amir_weighted_property_matching_j,costas_weighted_suffix_tree_j}, one can preprocess a weighted text
  in $\Oh(n z^2 \log z)$ time to report all $occ$ occurrences of a given solid pattern of length $m$ in $\Oh(m+occ)$ time.
  (A similar indexing data structure, which assumes $z = \Oh(1)$, was presented in~\cite{DBLP:conf/edbt/BiswasPTS16}.)
  Very recently, the index construction time was reduced to $\Oh(nz)$ for constant-sized alphabets \cite{CPM2016}.

  In the classic \PM problem, the pattern is an $m \times \sigma$ profile, the text is a solid string of length $n$, and
  our task is to find all positions in the text where the fragment of length $m$ has a score above a specified
  threshold $Z$.
  A naive approach to the \PM problem works in $\Oh(nm+m\sigma)$ time.
  A broad spectrum of heuristics improving this algorithm in practice is known; for a survey see~\cite{DBLP:journals/tcs/PizziU08}.
  One of the principal techniques, coming in different flavours, is \emph{lookahead scoring} that consists in checking if a partial match
  could possibly be completed by the following highest scoring letters in the scoring matrix and, if not, pruning the
  naive search.
  The \PM problem can also be solved in $\Oh(\sigma n \log m)$ time via the Fast Fourier Transform~\cite{DBLP:journals/jcb/RajasekaranJS02}.
  
  \subparagraph*{\WC and \PC}
  As our most involved contribution, we study a general variant of pattern matching on weighted sequences
  and the consensus problems on uncertain sequences, which are closely related to the \MK problem.
  In the \WC problem, given two weighted sequences of the same length, we are to check if there is
  a string that matches each of them with probability at least \fr.
  A routine to compare user-entered weighted sequences with existing weighted sequences in the database is used,
  e.g., in JASPAR, a well-known database of PWMs \cite{JASPAR}.
  In the \GWPMFull (\GWPM) problem, both the pattern and the text are weighted.
  In the most common definition of the problem (see \cite{DBLP:conf/cwords/BartonP15,costas_weighted_suffix_tree_j}),
  we are to find all fragments of the text that give a positive answer to the \WC problem with the pattern.
  The authors of~\cite{DBLP:conf/cwords/BartonP15} proposed an algorithm for the \GWPM problem based on the weighted prefix table
  that works in $\Oh(n z^2 \log z + n\sigma)$ time.

  In an analogous way to the \WC problem, we define the \PC problem.
  Here we are to check for the existence of a string that matches both the scoring matrices above threshold $Z$.
  The \PC problem is actually a special case of the well-known (especially in practice) \MK problem
  (also known as the \textsc{Multiple Choice Knapsack} problem).
  In this problem, we are given $n$ classes $C_1,\ldots,C_n$ of at most $\lambda$ items each---$N$ items in total---each item $c$ characterized by a value $v(c)$ and a weight $v(c)$.
  The goal is to select one item from each class so that the sums of values and of weights of the items are
  below two specified thresholds, $V$ and $W$.
  (In the more intuitive formulation of the problem, we require the sum of values to be \emph{above} a specified threshold,
  but here we consider an equivalent variant in which both parameters are symmetric.)
  The \MK problem is widely used in practice, but most research concerns approximation or heuristic solutions;
  see \cite{DBLP:books/daglib/0010031} and references therein.
  As far as exact solutions are concerned, the classic meet-in-the middle approach by Horowitz and Sahni~\cite{DBLP:journals/jacm/HorowitzS74},
  originally designed for the (binary) \Knapsack problem, immediately generalizes to 
  an $\Oh^*(\lambda^{\lceil{\frac{n}{2}\rceil}})$-time\footnote{The $\Oh^*$ notation suppresses factors polynomial with respect to the instance size (encoded in binary). } solution for \MK. 
  
  %The \emph{integer log-probability version} of the \WC problem, in which the ``probabilities'' of letters
  %in the weighted sequences are integers and the ``probability'' of an occurrence of a string
  %is the \emph{sum} of probabilities of letters at subsequent positions, is clearly equivalent to the \MK problem (see also
  %the NP-hardness proof in \cite{DBLP:journals/dam/CyganKRRW16}).

  Several important problems can be expressed as special cases of the \MK problem using folklore reductions (see~\cite{DBLP:books/daglib/0010031}).
  This includes the \SubsetSum problem, which for a set of $n$ integers asks whether there is a subset summing up to a given integer $Q$,
  and the $k$-\Sum problem which, for $k=\Oh(1)$ classes of $\lambda$ integers, asks to choose one element from each
  class so that the selected integers sum up to zero. 
  These reductions give immediate hardness results for the \MK problem, and they can be adjusted to yield the same consequences for \PC.
  For the \SubsetSum problem, as shown in \cite{DBLP:conf/mfcs/EtscheidKMR15,DBLP:books/daglib/0069796}, the existence of an $\Ohstar(2^{\varepsilon n})$-time solution for every $\varepsilon > 0$
  would violate the Exponential Time Hypothesis (ETH)~\cite{DBLP:journals/jcss/ImpagliazzoP01,ETHsurvey}.
  Moreover, the $\Oh^*(2^{n/2})$ running time, achieved in \cite{DBLP:journals/jacm/HorowitzS74}, has not been improved yet despite much effort.
  The 3-\Sum conjecture \cite{DBLP:journals/comgeo/GajentaanO95} and the more general $k$-\Sum conjecture state that the 3-\Sum and $k$-\Sum problems cannot be solved in
  $\Oh(\lambda^{2-\varepsilon})$ time and $\Oh(\lambda^{\ceil{\frac{k}{2}}(1-\varepsilon)})$ time, respectively, for any $\varepsilon>0$.

  %The authors of \cite{DBLP:conf/sdm/LiBKP14} study a slightly different variant of the EWPM problem
  %in which one is interested in computing the probability that a given weighted sequence represents a string that
  %contains a given string pattern as a factor.
  %In other words, we sum up the probabilities of all strings that match a given weighted sequence with non-zero
  %probability and contain the given pattern.
  %A linear-time algorithm is shown for constant-sized patterns in this model.

  \subparagraph*{Our Results}
  As the first result, we show how the lookahead scoring technique combined with a data structure
  for answering longest common prefix queries in a string can be applied to obtain simple and efficient
  algorithms for the standard pattern matching problems on uncertain sequences.
  For a weighted sequence, by $R$ we denote the size of its list representation, and by $\lambda$ the
  maximal number of letters with score at least $\frac{1}{z}$ at a single position (thus $\lambda \le \min(\sigma,z)$).
  In the \PM problem, we set $M$ as the number of strings that match the scoring matrix with score above $Z$.
  In general $M \le \sigma^m$, however, we may assume that for practical data this number is actually much smaller.
  We obtain the following running times:
  \begin{itemize}
    \item $\Oh(m\sigma+n \log M)$ for \PM;
    \item $\Oh(R\log^2\log \lambda+n \log z)$ deterministic and $\Oh(R+n \log z)$ randomized (Las Vegas, failure with probability $R^{-c}$
    for any given constant $c$) for \WPM.
  \end{itemize}
  
  The more complex part of our study is related to the consensus problems and to the \GWPM problem.
  Instead of considering \PC, we study the more general \MK.
  We introduce parameters based on the number of solutions with \emph{feasible} weight or value:
  $A_V = |\{(c_1,\ldots,c_n)\,:\,c_i \in C_i\mbox{ for all }i=1,\ldots,n,\,\sum_i v(c_i) \le V\}|$, that is,
  the number of choices of one element from each class that satisfy the value threshold;
  $A_W = |\{(c_1,\ldots,c_n)\,:\,c_i \in C_i\mbox{ for all }i=1,\ldots,n,\,\sum_i w(c_i) \le W\}|$;
  $A = \max(A_V,A_W)$, and $a=\min(A_V,A_W)$.
  We obtain algorithms with the following complexities:
  \begin{itemize}
    \item $\Oh(N+\sqrt{a\lambda} \log A)$ for \MK;
    \item $\Oh(R+\sqrt{z \lambda} (\log \log z+\log \lambda))$ for \WC and $\Oh(n\sqrt{z \lambda} (\log \log z+\log \lambda))$ for \GWPMFull.
  \end{itemize}
  
  Since $a \le A \le \lambda^n$, our running time for \MK in the worst case matches (up to lower order terms) the time complexities of the fastest known solutions
  for both \SubsetSum (also binary \Knapsack) and 3-\Sum. 
  The main novel part of our algorithm for \MK is an appropriate (yet intuitive) notion of ranks of partial solutions.
  We also provide a simple reduction from \MK to \WC, which lets us transfer the negative results to the \GWPM problem.
  \begin{itemize}
    \item The existence of an $\Oh^*(z^{\varepsilon})$-time solution for \WC for every $\varepsilon>0$  would violate the Exponential Time Hypothesis.
    \item For every $\varepsilon>0$, an $\Oh^*(z^{0.5-\varepsilon})$-time solution for \WC would imply an $\Oh^*(2^{(0.5 -\varepsilon)n})$-time algorithm for \SubsetSum.
    \item For every $\varepsilon>0$, an $\Ohtilde(R+z^{0.5}\lambda^{0.5-\varepsilon})$-time\footnote{
      The $\Ohtilde$ notation ignores factors polylogarithmic with respect to the input size.
    } solution for \WC would imply an $\Ohtilde(\lambda^{2-\varepsilon})$-time algorithm for 3-\Sum.
  \end{itemize}
  For the higher-order terms our complexities match the conditional lower bounds;
  therefore, we put significant effort to keep the lower order terms of the complexities
  as small as possible.

  \subparagraph*{Model of Computations}
  For problems on weighted sequences, we assume the word RAM model with word size $w = \Omega(\log n + \log z)$ and $\sigma = n^{\Oh(1)}$.
  %In this model, standard arithmetic and bitwise operations on integers of size $(n+z)^{\Oh(1)}$
  %can be performed in constant time.
  We consider the log-probability model of representations of weighted sequences, that is, we assume that
  probabilities in the weighted sequences and the threshold probability \fr\ are all of the form $c^{\frac{p}{2^{dw}}}$,
  where $c$ and $d$ are constants and $p$ is an integer that fits in a constant number of machine words.
  Additionally, the probability 0 has a special representation.
  The only operations on probabilities in our algorithms are multiplications and divisions, which can be
  performed exactly in $\Oh(1)$ time in this model.
  Our solutions to the \MK problem only assume the word RAM model with word size $w=\Omega(\log S+\log a)$,
  where $S$ is the sum of integers in the input instance; this does not affect the $\Oh^*$ running time.

  \subparagraph*{Structure of the Paper}
  We start with Preliminaries, where we formally introduce the problems and the main notions used throughout the paper.
  The following three sections describe our algorithms: in \cref{sec:EWPM} for \PM and \WPM;
  in \cref{sec:MK} for \PC; and in \cref{sec:GWPMReduction} for \WC and \GWPMFull.
  A tailor-made, yet more efficient algorithm for \GWPMFull is presented in \cref{app:SDWC}.
  We conclude with \cref{app:fast}, where we introduce faster algorithms and matching lower bounds
  for \MK and \GWPM in the case that $\lambda$ is large.

  \section{Preliminaries}\label{sec:Preliminaries}
    Let $\Sigma=\{s_1,s_2,\ldots,s_{\sigma}\}$ be an alphabet of size $\sigma$.
    A \emph{string} $S$ over $\Sigma$ is a finite sequence of letters from $\Sigma$.
    We denote the length of $S$ by $|S|$ and, for $1 \le i \le |S|$,
    the $i$-th letter of $S$ by $S[i]$.
    By $S[i..j]$ we denote the string $S[i] \ldots S[j]$ called a \emph{factor} of $S$
    (if $i>j$, then the factor is an empty string).
    A factor is called a \emph{prefix} if $i=1$ and a \emph{suffix} if $j=|S|$.
    For two strings $S$ and $T$, we denote their concatenation by $S \cdot T$ ($ST$ in short).

    For a string $S$ of length $n$, by $\lcp(i,j)$ we denote the length of the longest common prefix of factors $S[i..n]$ and $S[j..n]$.
    The following fact specifies a known efficient data structure answering such queries.
    It consists of the suffix array with its inverse, the LCP table and a data structure for range minimum queries on
    the LCP table; see \cite{AlgorithmsOnStrings} for details.

    \begin{fact}\label{fct:ver}
      Let $S$ be a string of length $n$ over alphabet of size $\sigma = n^{\Oh(1)}$.
      After $\Oh(n)$-time preprocessing, given indices $i$ and $j$ ($1 \le i,j \le n$) one can compute $\lcp(i,j)$ in $\Oh(1)$ time.
    \end{fact}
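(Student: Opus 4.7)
The plan is to follow the standard three-layer construction: suffix array, LCP array, range-minimum oracle. I would proceed in the following order.

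First, I would build the suffix array $SA[1..n]$ of $S$ in $\Oh(n)$ time. Since the alphabet satisfies $\sigma=n^{\Oh(1)}$, the letters can be renumbered to $\{1,\ldots,\sigma\}$ in linear time by radix sorting pairs, so any of the linear-time suffix array constructions for integer alphabets (e.g.\ the DC3/skew algorithm or the SA-IS algorithm) applies directly. At the same time I would store the inverse permutation $SA^{-1}$, so that for any index $i$ the rank $\rank(i):=SA^{-1}[i]$ of the suffix $S[i..n]$ can be retrieved in $\Oh(1)$ time.

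Next, I would compute the LCP array $LCP[2..n]$, where $LCP[k]$ is the length of the longest common prefix of the suffixes $S[SA[k-1]..n]$ and $S[SA[k]..n]$. Kasai's algorithm performs this in $\Oh(n)$ time by processing suffixes in the order of starting positions and exploiting the fact that LCP values along this traversal decrease by at most one. On top of $LCP$ I would install a range-minimum oracle supporting $\Oh(1)$ queries after $\Oh(n)$ preprocessing; the Bender--Farach-Colton reduction to $\pm1$-RMQ via the Cartesian tree and Euler tour, combined with the standard sparse-table trick on blocks of size $\tfrac{1}{2}\log n$, achieves exactly this bound.

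Finally, I would answer an $\lcp(i,j)$ query by computing $p=\min(\rank(i),\rank(j))$ and $q=\max(\rank(i),\rank(j))$, and returning the range minimum of $LCP[p+1..q]$; the correctness is the well-known property that the LCP of any two suffixes equals the minimum LCP value between their ranks in the suffix array (with the convention $\lcp(i,i)=n-i+1$, handled separately). Each step uses $\Oh(1)$ operations, so the total query time is $\Oh(1)$.

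There is no real obstacle here; the only slightly delicate point is obtaining the suffix array in linear time for alphabets of polynomial size rather than only for constant alphabets, but this is handled by the standard preliminary radix sort that relabels the alphabet before invoking a linear-time integer-alphabet suffix array construction. The remaining components (Kasai's algorithm and constant-time RMQ) are textbook.
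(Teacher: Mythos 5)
Your proposal is correct and matches the construction the paper relies on: the paper states this fact with a pointer to the standard combination of suffix array with its inverse, the LCP table (Kasai), and constant-time range-minimum queries, citing a textbook for details. Your write-up simply spells out these same components, including the standard alphabet relabeling that justifies linear-time suffix array construction for $\sigma=n^{\Oh(1)}$.
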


    The \emph{Hamming distance} between two strings $X$ and $Y$ of the same length,
    denoted by $\HammingDistance(X,Y)$, is the number of positions where the strings have different letters.

    \subsection{Profiles}
    In the \PM problem, we consider a \emph{scoring matrix} (a profile) $P$ of size $m \times \sigma$.
    For $i \in \{1,\ldots,m\}$ and $j \in \{1,\ldots,\sigma\}$, we denote the integer score of the letter $s_j$
    at the position~$i$ by $P[i,s_j]$.
    The \emph{matching score} of a string $S$ of length $m$ with the matrix $P$ is
    $$\Score(S,P) = \sum_{i=1}^m P[i,S[i]].$$
    If $\Score(S,P) \ge Z$ for an integer \emph{threshold} $Z$, then we say that the string $S$ \emph{matches the matrix $P$
    with threshold $Z$}.
    We denote the number of strings $S$ that math $P$ with threshold $Z$ by $\NumStrings_Z(P)$.
    
    For a string $T$ and a scoring matrix $P$, we say that $P$ \emph{occurs in $T$ at position $i$
    with threshold $Z$} if $T[i..i+m-1]$ matches $P$ with threshold $Z$.
    We denote the set of all positions where $P$ occurs in $T$ by $\Occ_Z(P,T)$.
    These notions let us define the \PM problem:

    \defdsproblemoutpar{\PM}{
      A string $T$ of length $n$, a scoring matrix $P$ of size $m \times \sigma$, and
      a threshold $Z$.
    }{
      The set $\Occ_Z(P,T)$.
    }{
      $M = \NumStrings_Z(P)$.
    }

    \subsection{Weighted Sequences}
    A \emph{weighted sequence} $X=X[1] \ldots X[n]$ of length $|X|=n$ over alphabet $\Sigma=\{s_1,s_2,\ldots,s_{\sigma}\}$
    is a sequence of sets of pairs of the form:
    $$X[i] = \{(s_j,\ \pi^{(X)}_i(s_j))\ :\ j \in \{1,2,\ldots,\sigma\}\}.$$
    Here, $\pi_i^{(X)}(s_j)$ is the occurrence probability of the letter $s_j$ at the position $i \in \{1,\ldots,n\}$.
    These values are non-negative and sum up to 1 for a given $i$.
 
    For all our algorithms, it is sufficient that the probabilities sum up to \emph{at most} 1 for each position.
    Also, the algorithms sometimes produce auxiliary weighted sequences with sum of probabilities being smaller than 1 on some positions.

    We denote the maximum number of letters occurring at a single position of the weighted sequence
    (with non-zero probability) by $\lambda$
    and the total size of the representation of a weighted sequence by $R$.
    The standard representation consists of $n$ lists with up to $\lambda$ elements each,
    so $R = \Oh(n \lambda)$.   However, the lists can be shorter in general.
    Also, if the threshold probability $\frac1z$ is specified, at each position of a weighted sequence
    it suffices to store letters with probability at least $\frac1z$, and clearly
    there are at most $z$ such letters for each position. This reduction can be performed in linear time, so we shall always assume
    that $\lambda \le z$.
        
    %A position $i \in \{1,\ldots,m\}$ of a weighted sequence $X$ is called \emph{solid} if
    %$x_i$ contains exactly one pair $(s_j,\pi_i(s_j))$, with $\pi_i(s_j)=1$.
    %For succinctness of presentation, for a non-solid position $i$, the set of pairs  
    %is denoted by $[(s_{j_1}, \pi_i(s_{j_1})),\ldots,(s_{j_k}, \pi_i(s_{j_k}))]$; for a solid position $i$ it is
    %simply denoted by the letter $s_j$ with $\pi_i(s_j)=1$. E.g., the weighted sequence in \cref{table:weighted_sequence}
    %is denoted by $[(\mathtt{a},0.5), (\mathtt{b},0.5)]\mathtt{a}[(\mathtt{a},0.75), (\mathtt{b},0.25)]\mathtt{b}$. 

    The \emph{probability of matching} of a string $S$ with a weighted sequence $X$, $|S|=|X|=m$, is 
    $$\P(S,X) = \prod_{i=1}^m \pi^{(X)}_i(S[i]).$$
    We say that a string $S$ \emph{matches} a weighted sequence $X$
    with probability at least \fr, denoted by $S \match X$, if $\P(S,X) \ge \frac1z$.
    Given a weighted sequence $T$, by $T[i..j]$ we denote weighted sequence,
    called a \emph{factor} of $T$, equal to $T[i] \ldots T[j]$ (if $i>j$, then the factor is empty).
    We then say that a string $P$ \emph{occurs} in $T$ at position $i$ if $P$ matches the factor $T[i..i+m-1]$.
    We also say that $P$ is a \emph{\fr-solid factor} of $T$ at position $i$ (a \emph{\fr-solid prefix} if $i=1$
    and a \emph{\fr-solid suffix} if $j=n$).
    We denote the set of all positions where $P$ occurs in $T$ by $\Occ_\fr(P,T)$.

    \defdsproblem{\WPM}{
      A string $P$ of length $m$ and a weighted sequence $T$ of length $n$ with at most $\lambda$ letters at each position
      and $R$ in total, and a threshold probability \fr.
    }{
      The set $\Occ_\fr(P,T)$.
    }

    %\begin{example}
    %  If $P=\mathtt{aab}$ and $T=[(\mathtt{a},0.5), (\mathtt{b},0.5)]\mathtt{a}[(\mathtt{a},0.75), (\mathtt{b},0.25)]\mathtt{b}$,
    %  then $\mathit{Occ}_{\frac18}(P,T)=\{1,2\}$ and $\mathit{Occ}_{\frac14}(P,T)=\{2\}$.
    %\end{example}

  \section{Profile Matching and Weighted Pattern Matching}\label{sec:EWPM}
    In this section we present a solution to the \PM problem.
    Afterwards, we show that it can be applied for \WPM as well.

    For a scoring matrix $P$, the \emph{heavy string} of $P$, denoted $\H(P)$,
    is constructed by choosing at each position the heaviest letter, that is,
    the letter with the maximum score (breaking ties arbitrarily).
    Intuitively, $\H(P)$ is a string that matches $P$ with the maximum score.

    \begin{observation}\label{obs:crucial_profile}
      If we have $\Score(S,P) \ge Z$ for a string $S$ of length $m$ and an $m \times \sigma$ scoring matrix $P$,
      then $\HammingDistance(\H(P),S) \le \floor{\log M}$ where $M = \NumStrings_Z(P)$.
    \end{observation}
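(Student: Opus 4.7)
The plan is to use a direct counting argument that constructs many threshold-matching strings by hybridizing $S$ with the heavy string $\H(P)$. Let $H = \H(P)$, let $k = \HammingDistance(H,S)$, and let $D = \{i_1,\ldots,i_k\} \subseteq \{1,\ldots,m\}$ be the set of positions where $S$ and $H$ disagree. For every subset $I \subseteq D$, define a string $S_I$ of length $m$ by setting $S_I[i] = H[i]$ if $i \in I$ and $S_I[i] = S[i]$ otherwise.

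First I would verify that each $S_I$ matches $P$ with threshold $Z$. By the definition of the heavy string, $P[i,H[i]] \ge P[i,c]$ for every letter $c \in \Sigma$ and every position $i$, so in particular $P[i,H[i]] \ge P[i,S[i]]$ at each position $i \in D$. Writing
\[
\Score(S_I,P) - \Score(S,P) = \sum_{i \in I} \bigl(P[i,H[i]] - P[i,S[i]]\bigr) \ge 0,
\]
we conclude $\Score(S_I,P) \ge \Score(S,P) \ge Z$, so $S_I$ contributes to $\NumStrings_Z(P)$.

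Next I would observe that the $2^k$ strings $\{S_I : I \subseteq D\}$ are pairwise distinct: if $I \ne I'$, then on any position $i$ in their symmetric difference (which lies in $D$, so $H[i] \ne S[i]$), exactly one of $S_I[i],S_{I'}[i]$ equals $H[i]$ and the other equals $S[i]$. Hence $M = \NumStrings_Z(P) \ge 2^k$, which after taking logarithms and using that $k$ is an integer gives $k \le \floor{\log M}$, as required.

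There is no substantive obstacle here: the only thing to be careful about is the monotonicity step, which relies crucially on $H$ being chosen as the \emph{pointwise} heaviest letter (not merely a string of maximum total score). The argument is the standard hybridization trick; everything else is bookkeeping.
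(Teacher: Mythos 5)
Your proof is correct and follows exactly the same hybridization argument as the paper: replacing any subset of the mismatch positions between $S$ and $\H(P)$ by the heavy letters yields $2^{d}$ distinct strings with score at least $Z$, hence $2^{d}\le M$. The paper's proof is just a terser statement of the same construction, so there is nothing to add.
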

    \begin{proof}
      Let $d = \HammingDistance(\H(P),S)$.
      We can construct $2^{d}$ strings of length $|S|$ that match $P$ with a score above $Z$
      by taking either of the letters $S[j]$ or $\H(P)[j]$ at each position $j$ such that $S[j]\ne \H(P)[j]$.
      Hence, $2^{d} \le M$, which concludes the proof.
    \mayqed\end{proof}

    Our solution for the \PM problem works as follows.
    We first construct $P' = \H(P)$ and the data structure for finding lcp values between suffixes of $P'$ and $T$.
    Let the variable $s$ store the matching score of $P'$.
    In the $p$-th step, we calculate the matching score of $T[p..p+m-1]$ by iterating through
    subsequent mismatches between $P'$ and $T[p..p+m-1]$ and making adequate updates in the matching score $s'$, which
    starts at $s'=s$.
    The mismatches are found using lcp-queries.
    This process terminates when the score $s'$ drops below $Z$ or when all the mismatches have been found.
    In the end, we include $p$ in $\Occ_Z(P,T)$ if $s' \ge Z$.
    A pseudocode of this approach is given below for completeness.

    \begin{theorem}
      \PM problem can be solved in $\Oh(m\sigma + n \log M)$ time.
    \end{theorem}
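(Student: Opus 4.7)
The plan is to realize the three-phase algorithm already sketched: a preprocessing phase on $P$, a preprocessing phase enabling $\Oh(1)$-time mismatch queries, and the actual scan over $T$ at every starting position. First, compute $P' = \H(P)$ in $\Oh(m\sigma)$ time by selecting, at each position $i$, a letter maximizing $P[i,\cdot]$. Simultaneously, maintain $s = \Score(P',P)$, which is the maximum possible matching score. Then build the LCP data structure of \cref{fct:ver} on the concatenation $P'\# T$ (with $\#\notin\Sigma$); because $\sigma=n^{\Oh(1)}$, the preprocessing fits in $\Oh(n+m)$ time and yields $\Oh(1)$-time $\lcp$ queries between any suffix of $P'$ and any suffix of $T$.

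The scan itself visits every candidate position $p\in\{1,\ldots,n-m+1\}$. It keeps a running score $s' := s$ and a pointer $j:=1$ into the window. In each step it uses one $\lcp$ query to jump to the next index $j'$ (within $[j, m]$) where $P'[j']\neq T[p+j'-1]$; if no such index exists it accepts $p$; otherwise it updates $s' \mathrel{-}= P[j',P'[j']]-P[j',T[p+j'-1]]$, sets $j:=j'+1$, and continues while $s'\ge Z$. The correctness is immediate: after all mismatches in the window are processed, $s'$ equals $\Score(T[p..p+m-1],P)$, and any earlier termination only happens once $s'$ already dropped below $Z$, a value that can only further decrease as additional mismatches are processed (each update subtracts a nonnegative quantity because $P'[j']$ is a heaviest letter at position~$j'$).

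For the running time, the decisive observation is a small strengthening of \cref{obs:crucial_profile}: whenever the loop at a fixed $p$ has processed $d$ mismatches and still satisfies $s' \ge Z$, we can build the string $S$ which agrees with $T[p..p+m-1]$ on those $d$ mismatch positions and with $P'$ elsewhere, so $\Score(S,P)=s'\ge Z$. The same mixing argument used in the proof of \cref{obs:crucial_profile} then yields $2^d$ distinct strings matching $P$ with threshold $Z$, hence $d\le \floor{\log M}$. Consequently each candidate position contributes $\Oh(1+\log M)$ to the total work in the scan, which is $\Oh(n\log M)$ after absorbing the $+1$ into the preprocessing term. Adding the $\Oh(m\sigma)$ and $\Oh(n+m)$ spent in preprocessing gives the claimed $\Oh(m\sigma+n\log M)$ bound.

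The only delicate point is the bound on mismatches, which must hold at every intermediate state of the scan and must survive ties (positions where a mismatch subtracts~$0$): the string $S$ constructed above still witnesses $2^d\le M$ because each of the $2^d$ mixed strings has score at least $s'\ge Z$, and they are pairwise distinct as letters differ at the mismatch indices. Everything else is bookkeeping within the word-RAM model.
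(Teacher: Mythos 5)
Your proposal is correct and follows essentially the same route as the paper: compute the heavy string $\H(P)$, use the $\lcp$ structure of \cref{fct:ver} for constant-time mismatch jumps, and bound the number of mismatches processed per position via the $2^d$-mixing argument of \cref{obs:crucial_profile} (your ``strengthening'' is just the intended application of that observation to the intermediate hybrid string whose score equals $s'$, including the tie case), yielding $\Oh(m\sigma+n\log M)$ exactly as in the paper.
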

    \begin{proof}
      Let us bound the time complexity of the presented algorithm.
      The heavy string $P'$ can be computed in $\Oh(m\sigma)$ time.
      The data structure for $\lcp$-queries in $P'T$ can be constructed in $\Oh(n+m)$ time by \cref{fct:ver}.
      Each query for $\lcp(P'[i..m],T[j..n])$ can then be answered in constant time by a corresponding $\lcp$-query
      in $P'T$, potentially truncated to the end of $P'$.
      Finally, for each position $p$ in the text $T$ we will consider at most $\floor{\log M}+1$ mismatches between $P'$ and $T$,
      as afterwards the score $s'$ drops below $Z$ due to \cref{obs:crucial_profile}.
    \mayqed\end{proof}

      \begin{procedure}[htpb]
      $m:=|P|$;\ \ $n:=|T|$;\ \ $\Occ:=\emptyset$\;
      $P' := \H(P)$\;
      Compute the data structure for $\lcp$-queries in $P'T$\;
      $s := \sum_{j=1}^m P[j,P'[j]]$\;
      \For{$p:=1$ \KwSty{to} $n-m+1$}{
        $s':=s$;\ \  $i:=1$;\ \ $j:=p$\;
        \While{$s' \ge Z$ \KwSty{and} $i \le m$}{
          $\Delta := \lcp(P'[i..m],T[j..n])$\;
          $i:=i+\Delta+1$;\ \ $j:=j+\Delta+1$\;
          \If{$i \le m+1$}{
            $s' := s' + P[i-1,T[j-1]] - P[i-1,P'[i-1]]$;
          }
        }
        \lIf{$s' \ge Z$}{insert $p$ to $\Occ$}
      }
      \Return{$\Occ$}\;
      \caption{ProfileMatching($P$, $T$, $Z$)}
    \end{procedure}
    
    Basically the same approach can be used for \WPM.
    In a natural way, we extend the notion of a heavy string to weighted sequences.
    Now we can restate \cref{obs:crucial_profile} in the language of probabilities instead of scores:
    \begin{observation}\label{obs:crucial}
      If a string $P$ matches a weighted sequence $X$ of the same length with probability at least \fr,
      then $\HammingDistance(\H(X),P) \le \floor{\log z}$.
    \end{observation}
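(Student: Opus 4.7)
The plan is to mimic the proof of \cref{obs:crucial_profile}, but instead of counting alternative strings we will directly bound the matching probability from above. Set $d = \HammingDistance(\H(X), P)$; I want to show $(1/2)^d$ is an upper bound on $\P(P,X)$, which combined with the hypothesis $\P(P,X) \ge 1/z$ yields $2^d \le z$, i.e., $d \le \floor{\log z}$.

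The key local observation, to be verified position by position, is that at every index $i$ where $P[i] \ne \H(X)[i]$, the value $\pi_i^{(X)}(P[i])$ is at most $1/2$. Indeed, let $p = \pi_i^{(X)}(\H(X)[i])$ and $q = \pi_i^{(X)}(P[i])$. Since $\H(X)[i]$ is, by definition of the heavy string, a letter of maximum probability at position $i$, we have $p \ge q$. Because $P[i]$ and $\H(X)[i]$ are distinct letters, their two probabilities both appear in the sum $\sum_j \pi_i^{(X)}(s_j) \le 1$, giving $p + q \le 1$ and hence $2q \le 1$, so $q \le 1/2$.

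It then remains to assemble the global bound by multiplying over positions: at each of the $d$ mismatch positions the factor $\pi_i^{(X)}(P[i])$ is at most $1/2$, while at the remaining $m - d$ positions we use the trivial bound $\pi_i^{(X)}(P[i]) \le 1$. This gives
\[
\tfrac{1}{z} \;\le\; \P(P,X) \;=\; \prod_{i=1}^m \pi_i^{(X)}(P[i]) \;\le\; \left(\tfrac{1}{2}\right)^{d},
\]
and rearranging yields $d \le \log_2 z$, so $d \le \floor{\log z}$ since $d$ is an integer.

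I do not expect any real obstacle here; the only subtle point is the $p + q \le 1$ step, which relies on $P[i]$ and $\H(X)[i]$ being \emph{distinct} letters (so that their probabilities are two different terms in the per-position sum). This is exactly where the hypothesis that we are at a mismatch position is used. Notice also that the argument does not require the per-position probabilities to sum to exactly $1$; the weaker ``sum at most $1$'' assumption already mentioned in the preliminaries suffices, which is convenient for later use on auxiliary weighted sequences produced by the algorithms.
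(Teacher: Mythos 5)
Your proof is correct, and it takes a slightly different route from the one the paper intends. The paper states \cref{obs:crucial} without proof, as a ``restatement'' of \cref{obs:crucial_profile}, whose proof is a counting argument; translated to probabilities it reads: the $2^d$ strings obtained by choosing either $P[j]$ or $\H(X)[j]$ at each of the $d$ mismatch positions all match $X$ with probability at least $\frac1z$ (swapping in the heavy letter never decreases the probability), and since distinct strings carry disjoint probability mass whose total is at most $1$, there can be at most $z$ strings matching with probability at least $\frac1z$, whence $2^d \le z$. You instead bound $\P(P,X)$ directly, position by position: at every mismatch the non-heavy letter is dominated by the heavy one and the two probabilities together sum to at most $1$, so its probability is at most $\frac12$, giving $\frac1z \le \P(P,X) \le 2^{-d}$. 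Both arguments are valid and both only need the relaxed ``probabilities sum to at most $1$ per position'' assumption. The counting argument has the advantage of being the one that transfers to the profile setting (where there is no per-position normalization, so the bound must be expressed via $M=\NumStrings_Z(P)$), and it is the pattern reused later via \cref{fct:maxprefixes}; your multiplicative bound is specific to the probabilistic setting but is arguably more elementary and self-contained, since it never needs the fact that at most $z$ strings can match with probability at least $\frac1z$.
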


    Comparing to the solution to \PM, we compute the heavy string of the text instead
    of the pattern.
    An auxiliary variable $\alpha$ stores the matching probability between a factor
    of $\H(T)$ and the corresponding factor of $T$; it needs to be updated when we move to the next position
    of the text.
    In the implementation, we perform the following operations on a weighted sequence:
    \begin{itemize}
      \item computing the probability of a given letter at a given position,
      \item finding the letter with the maximum probability at a given position.
    \end{itemize}

      \begin{procedure}[h]
      \caption{WeightedPatternMatching($P$, $T$, \fr)}
      $m:=|P|$;\ \ $n:=|T|$;\ \ $\Occ:=\emptyset$\;
      $T' := \H(T)$\;
      Compute the data structure for $\lcp$-queries in $PT'$\;
      $\alpha := \prod_{j=1}^m \pi^{(T)}_j(T'[j])$\;
      \For{$p:=1$ \KwSty{to} $n-m+1$}{
        $\alpha':=\alpha$;\ \ $i:=1$;\ \ $j:=p$\;
        \While{$\alpha' \ge \frac1z$ \KwSty{and} $i \le m$}{
          $\Delta := \lcp(P[i..m],T'[j..n])$\;
          $i:=i+\Delta+1$;\ \ $j:=j+\Delta+1$\;
          \If{$i \le m+1$}{
            $\alpha' := \alpha'\,\cdot\,\pi^{(T)}_{j-1}(P[i-1]) \,/\, \pi^{(T)}_{j-1}(T'[j-1])$;
          }
          }
        \lIf{$\alpha' \ge \frac1z$}{insert $p$ to $\Occ$}
        \If{$p \le n-m$}{
          $\alpha := \alpha \,\cdot\, \pi^{(T)}_{p+m}(T'[p+m]) \,/\, \pi^{(T)}_p(T'[p])$\;
        }
      }
      \Return{$\Occ$}\;
    \end{procedure}

    In the standard list representation, the latter can be performed on a single weighted sequence
    in $\Oh(1)$ time after $\Oh(R)$-time preprocessing.
    We can perform the former in constant time if, in addition to the list representation,
    we store the letter probabilities in a dictionary implemented using perfect hashing \cite{DBLP:journals/jacm/FredmanKS84}.
    This way, we can implement the algorithm in $\Oh(n \log z + R)$ time w.h.p.
    Alternatively, a deterministic dictionary~\cite{DBLP:conf/icalp/Ruzic08} can be used to obtain a deterministic solution in $\Oh(R\log^2\log \lambda + n\log z)$ time.
    We arrive at the following result.

    \begin{theorem}\label{thm:wpm}
      \WPM can be solved in $\Oh(R+n \log z)$ time with high probability by a Las-Vegas algorithm
      or in $\Oh(R\log^2 \log \lambda + n\log z)$ time deterministically.
    \end{theorem}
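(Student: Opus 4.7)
My plan is to analyse the WeightedPatternMatching procedure shown above. For correctness I would maintain two invariants: at the start of outer iteration $p$, $\alpha = \prod_{j=p}^{p+m-1}\pi^{(T)}_{j}(T'[j])$, which is preserved by the telescoping update at the bottom of the loop; and throughout the inner while loop, $\alpha'$ equals the partial product obtained from $\alpha$ by replacing the factor $\pi^{(T)}_{j-1}(T'[j-1])$ with $\pi^{(T)}_{j-1}(P[i-1])$ at every mismatch already located by $\lcp$-queries. Because $T'[j-1]$ is the heaviest letter, each such replacement multiplies $\alpha'$ by a value in $[0,1]$, so $\alpha'$ is monotonically non-increasing; the early exit is therefore safe, and when the loop terminates normally $\alpha'$ equals $\P(P,T[p..p+m-1])$. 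Hence the final test $\alpha'\ge\frac{1}{z}$ reports exactly $\Occ_{\frac{1}{z}}(P,T)$.

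The crux of the running-time analysis is an $\Oh(\log z)$ bound on the number of iterations of the inner loop per outer position. I would show that after $d$ mismatch updates, $\alpha'\le 2^{-d}$. At each mismatch, $T'[j-1]$ is heaviest and distinct from $P[i-1]$, so $\pi^{(T)}_{j-1}(T'[j-1])\ge\pi^{(T)}_{j-1}(P[i-1])$; together with $\pi^{(T)}_{j-1}(T'[j-1])+\pi^{(T)}_{j-1}(P[i-1])\le 1$ this forces $\pi^{(T)}_{j-1}(P[i-1])\le\tfrac{1}{2}$, and all other factors of the partial product are at most $1$. Consequently once $d>\log z$ we have $\alpha'<\frac{1}{z}$ and the loop exits. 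This is the only delicate step: \cref{obs:crucial} by itself bounds mismatches only at true occurrences, and without the partial-product estimate a non-occurrence position could naively consume $\Theta(m)$ work.

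All remaining costs are standard. Building $T'=\H(T)$ takes $\Oh(R)$ by scanning each list; the $\lcp$-structure on $P\cdot T'$ is built in $\Oh(n+m)$ via \cref{fct:ver}; the initial $\alpha$ is computed in $\Oh(m)$. Each inner iteration then runs in $\Oh(1)$ time---one $\lcp$-query, one probability lookup, constant arithmetic---giving $\Oh(n\log z)$ for the main loop. The only primitive that is not yet $\Oh(1)$ in the standard list representation is evaluating $\pi^{(T)}_j(c)$ for arbitrary $c$. I would attach at each position a dictionary keyed by letter: with the Fredman--Koml\'os--Szemer\'edi randomized perfect hash the total build cost is $\Oh(R)$ expected, yielding the Las Vegas bound $\Oh(R+n\log z)$ w.h.p.; with the deterministic dictionary of Ru\v{z}i\'c, each position of size at most $\lambda$ is built in $\Oh(\lambda\log^2\log\lambda)$, summing to $\Oh(R\log^2\log\lambda)$ and giving the deterministic bound.
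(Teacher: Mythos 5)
Your proposal is correct and follows essentially the same route as the paper: heavy string $T'=\H(T)$, $\lcp$-queries on $P\cdot T'$ via \cref{fct:ver}, a telescoping update of $\alpha$ between windows, an $\Oh(\log z)$ bound on mismatch iterations per position, and perfect hashing (respectively Ru\v{z}i\'c's deterministic dictionary) for letter-probability lookups. Your explicit $\alpha'\le 2^{-d}$ partial-product estimate is just an unrolled version of the paper's \cref{obs:crucial} applied to the hybrid string at each position, so the two arguments coincide in substance.
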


    \begin{remark}
      In the same complexity one can solve the \GWPM problem with a solid text.
    \end{remark}

 \section{Profile Consensus as Multichoice Knapsack}\label{sec:MK}
 Let us start with a precise statement of the \MK problem.
    \defdsproblempar{\MK}{
      A set $\C$ of $N$ items partitioned into $n$ disjoint classes $C_i$, each of size at most $\lambda$,
      two integers $v(c)$ and $w(c)$ for each item $c\in \C$, and two thresholds $V$ and $W$.
    }{
      Does there exist a \emph{choice} $S$ (a set $S\sub \C$ such that $|S\cap C_i|=1$ for each $i$)
      satisfying both
      $\sum_{c\in S} v(c) \le V$ and 
      $\sum_{c\in S} w(c) \le W$?
    }{
      $A_V$ and $A_W$: the number of choices $S$ satisfying $\sum_{c\in S} v(c) \le V$
      and $\sum_{c\in S} w(c) \le W$, respectively, as well as $A = \max(A_V,A_W)$ and $a=\min(A_V,A_W)$.
    }

  Indeed we see that the \PC problem reduces to the \MK problem.
  For two $m \times \sigma$ scoring matrices, we construct $n=m$ classes of $\lambda=\sigma$ items each,
  with values equal to the scores of the letters in the first matrix and weights equal to the scores in the second matrix;
  both thresholds $V$ and $W$ are equal to $Z$.
    
    For a fixed instance of \MK, we say that $S$ is a \emph{partial choice} if $|S\cap C_i|\le 1$ for each class.
    The set $D=\{i : |S\cap C_i|=1\}$ is called its \emph{domain}.
    For a partial choice $S$, we define $v(S) = \sum_{c \in S} v(c)$ and $w(S) = \sum_{c \in S} w(c)$.

     The classic $\Oh(2^{n/2})$-time solution to the \Knapsack problem~\cite{DBLP:journals/jacm/HorowitzS74}
 partitions $D=\{1,\ldots,n\}$ into two domains $D_i$ of size roughly $n/2$,
 and for each $D_i$ it generates all partial choices $S$ ordered by $v(S)$.
 Hence, it reduces the problem to an instance of \MK with $2$ classes.
 It is solved using the following lemma, proved below for completeness.
  \begin{lemma}\label{lem:knap2}
    The \MK problem can be solved in $\Oh(N)$ time if $n=2$ and
    the elements $c$ of $C_1$ and $C_2$ are sorted by $v(c)$.
  \end{lemma}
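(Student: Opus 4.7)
The plan is to use a two-pointer sweep that exploits the given sorted order of $C_1$ and $C_2$ by $v$. Traverse $C_1$ in reverse, i.e., by \emph{decreasing} $v(c_1)$, while a second pointer $q$ walks forward through $C_2$ in its given order of \emph{increasing} $v(c_2)$. The key monotonicity is that as $v(c_1)$ decreases, the threshold $V-v(c_1)$ grows, so the set $\{c_2\in C_2 : v(c_1)+v(c_2)\le V\}$ remains an initial segment of $C_2$ in the increasing-$v$ order, and this prefix can only lengthen as $c_1$ advances.

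Alongside $q$, I would maintain a running minimum $w^\star$ of $w(c_2)$ over the elements $C_2[1],\ldots,C_2[q]$ scanned so far, together with a witness $c_2^\star$ attaining it. For the current $c_1$, advance $q$ while $v(c_1)+v(C_2[q+1])\le V$, updating $w^\star$ and $c_2^\star$ in $\Oh(1)$ per insertion. Then a feasible completion of $c_1$ exists if and only if $w(c_1)+w^\star\le W$, in which case $\{c_1,c_2^\star\}$ is a valid choice and we answer Yes; if no $c_1$ succeeds, the answer is No. Correctness follows because for each $c_1$ the running minimum is exactly $\min\{w(c_2) : c_2 \in C_2,\ v(c_1)+v(c_2)\le V\}$.

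The outer loop makes $|C_1|$ iterations and $q$ advances at most $|C_2|$ times in total, so the running time is $\Oh(|C_1|+|C_2|)=\Oh(N)$. The only mild conceptual point is coping with the two simultaneous objectives: the trick is to use the value constraint to define a \emph{monotonically growing} candidate set in $C_2$, which reduces the remaining task to maintaining a minimum of $w$ over a set that only grows by insertions---and this is trivial in $\Oh(1)$ per update.
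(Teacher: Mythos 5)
Your proof is correct and takes essentially the same approach as the paper: a linear two-pointer sweep over the two sorted classes, exploiting that the set $\{c_2 \in C_2 : v(c_1)+v(c_2)\le V\}$ is a prefix of $C_2$ that changes monotonically as $c_1$ is swept in order of $v$. The only cosmetic difference is that the paper first discards dominated items so that the largest feasible $v(c_2)$ automatically minimizes $w(c_2)$, whereas you keep a running prefix-minimum of $w$ over the growing candidate set; both yield the same $\Oh(N)$ bound.
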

    \begin{proof}
      Since the items of $C_1$ and $C_2$ are sorted by $v(c)$, 
      a single scan through these items lets us remove all irrelevant elements, that is, elements dominated by other elements in their class.
      Next, for each $c_1\in C_1$ we compute $c_2\in C_2$ such that $v(c_2)\le V-v(c_1)$ but otherwise $v(c_2)$ is largest possible.
      As we have removed irrelevant elements from $C_2$, this item also minimizes $w(c_2)$ among all elements satisfying $v(c_2)\le V-v(c_1)$.
      Hence, if there is a feasible solution containing $c_1$, then $\{c_1,c_2\}$ is feasible. 
      If we process elements $c_1$ by non-decreasing values $v(c_1)$, the values $v(c_2)$ do not increase, and
      thus the items $c_2$ can be computed in $\Oh(N)$ time in total.
    \end{proof} 

  The same approach generalizes to \MK. 
  The partition is chosen to balance the number of partial choices
  in each domain, and the worst-case time complexity is $\Oh(\sqrt{Q\lambda})$, where $Q=\prod_{i=1}^n |C_i|$ is the number of choices.
  
  Our aim in this section is to replace $Q$ with the parameter $a$ (which never exceeds $Q$).
  The overall running time is going to be  $\Oh(N+\sqrt{a\lambda}\log A)$: an overhead  of $\Oh(\log A)$ appears.
%   Below, we provide an $\Oh(N+\sqrt{a\lambda}\cdot n\log\lambda)$-time algorithm.
%    In \cref{app:MK}, we provide a preprocessing procedure which reduces $n$ to $\Oh(\frac{\log A}{\log \lambda})$,
%    thus improving the overall running time to $\Oh(N+\sqrt{a\lambda}\log A)$.
  
  Two challenges arise once we adapt the meet-in-the-middle approach:
  how to restrict the set of partial choices to be generated so that a feasible solution is not missed,
  and how to define a partition $D=D_1\cup D_2$ to balance the number of partial choices generated for $D_1$ and $D_2$.
  A natural idea to deal with the first issue is to consider only partial choices with small values $v(S)$ or $w(S)$.
  This is close to our actual solution, which is based on the notion of \emph{ranks} of partial choices.
  Our approach to the second problem is to consider multiple partitions: those of the form $D=\{1,\ldots,j\}\cup\{j+1,\ldots,n\}$ for $1\le j \le n$.
  This results in an extra $\Oh(n)$ factor in the time complexity.
  However, in \cref{ss} we introduce preprocessing that lets us assume that $n=\Oh(\frac{\log A}{\log \lambda})$.
  While dealing with these two issues, some further effort is required to avoid few other extra terms in the running time.
  In case of our algorithm, this is only $\Oh(\log \lambda)$, which stems from the fact that we need to keep partial solutions ordered by $v(S)$.

    For a partial choice $S$, we define $\rank_V(S)$ as the number of partial choices $S'$ with the same domain for which $v(S')\le v(S)$. 
    We symmetrically define $\rank_W(S)$.
    Ranks are introduced as an analogue of probabilities in weighted sequences.
    Probabilities are multiplicative, while for ranks we have submultiplicativity:
      
    \begin{fact}\label{fct:comb}
      Assume that $S=S_1\cup S_2$ is a decomposition of a partial choice $S$ into two disjoint subsets.
      Then $\rank_V(S_1)\rank_V(S_2)\le \rank_V(S)$ (and same for $\rank_W$).
    \end{fact}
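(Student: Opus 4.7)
The plan is to prove the inequality by exhibiting an injection from pairs of partial choices bounding $S_1$ and $S_2$ into partial choices bounding $S$. Write $D_1$ and $D_2$ for the domains of $S_1$ and $S_2$; by assumption these sets are disjoint, and their union $D = D_1 \cup D_2$ is the domain of $S$.

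By definition, $\rank_V(S_1)$ counts partial choices $S_1'$ with domain $D_1$ satisfying $v(S_1') \le v(S_1)$, and analogously for $\rank_V(S_2)$. I would consider the map sending a pair $(S_1', S_2')$ of such partial choices to $S' := S_1' \cup S_2'$. Since $D_1 \cap D_2 = \emptyset$, the union $S'$ is a partial choice with $|S' \cap C_i| = 1$ exactly for $i \in D$, so its domain equals $D$. Moreover,
\[
  v(S') \;=\; v(S_1') + v(S_2') \;\le\; v(S_1) + v(S_2) \;=\; v(S),
\]
so $S'$ is counted by $\rank_V(S)$. The map is injective because the disjointness of $D_1$ and $D_2$ lets us recover $S_1' = S' \cap \bigcup_{i \in D_1} C_i$ and symmetrically $S_2'$. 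Hence the number of targets is at least the number of source pairs, which gives $\rank_V(S_1)\,\rank_V(S_2) \le \rank_V(S)$. The argument for $\rank_W$ is identical with $w$ in place of $v$.

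There is no real obstacle here; the only point worth double-checking is that the constructed $S'$ is genuinely a partial choice (rather than, say, containing two items from one class), which is precisely where the disjointness $D_1 \cap D_2 = \emptyset$ enters, and that the map is injective, which follows from the same disjointness.
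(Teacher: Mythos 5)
Your proof is correct and follows essentially the same route as the paper: the paper also maps each pair $(S_1',S_2')$ with $v(S_1')\le v(S_1)$, $v(S_2')\le v(S_2)$ to the union $S_1'\cup S_2'$, observes $v(S_1'\cup S_2')\le v(S)$, and counts these unions in $\rank_V(S)$. Your write-up merely makes explicit the injectivity and the partial-choice property of the union, both of which the paper leaves implicit.
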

    \begin{proof}
      Let $D_1$ and $D_2$ be the domains of $S_1$ and $S_2$, respectively.
      For every partial choices $S'_1$ over $D_1$ and $S'_2$ over $D_2$ such that $v(S'_1) \le v(S_1)$ and $v(S'_2) \le v(S_2)$, we have
      $v(S'_1 \cup S'_2)=v(S'_1)+v(S'_2)\le v(S)$.
      Hence, $S'_1\cup S'_2$ must be counted while determining $\rank_V(S)$.
    \end{proof}

   For $0\le j \le n$, let $\L_j$ be the list of partial choices with domain $\{1,\ldots,j\}$ ordered by value $v(S)$,
   and for $\ell>0$ let $V^{(\ell)}_{\L_j}$ be the value $v(S)$ of $\ell$-th element of $\L_j$ ($\infty$ if $\ell>|\L_j|$).
  Analogously, for $1\le j \le n+1$, we define $\R_j$ as the list of partial choices over $\{j,\ldots,n\}$ ordered by $v(S)$,
  and for $r>0$, $V^{(r)}_{\R_j}$ as the value of the $r$-th element of $\R_j$ ($\infty$ if $r>|\R_j|$).
  
 %    \begin{observation}
%   Let $S$ be a partial choice over $\{1,\ldots,j\}$. If $\rank_V(S)\le r$, then $S\in \L^{(r)}_j$.
%   \end{observation}
  The following two observations yield a decomposition of each choice into a single item and two partial solutions of a small rank.
  In particular, we do not need to know $A_V$ in order to check if the ranks are sufficiently large.
  
  \begin{lemma}\label{lem:decomp}
  Let $\ell$ and $r$ be positive integers such that $V^{(\ell)}_{\L_j}+V^{(r)}_{\R_{j+1}}> V$ for each $0\le j \le n$.
  For every choice $S$ with $v(S)\le V$, there is an index $j\in\{1,\ldots,n\}$ and a decomposition $S=L\cup\{c\}\cup R$
  such that $v(L) < V^{(\ell)}_{\L_{j-1}}$, $c\in C_j$, and $v(R) < V^{(r)}_{\R_{j+1}}$.
  \end{lemma}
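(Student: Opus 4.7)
The plan rests on a single minimality argument. For each $j \in \{0,1,\ldots,n\}$, introduce the prefix $S_j := S \cap (C_1 \cup \cdots \cup C_j)$, which belongs to $\L_j$, while $S \setminus S_j$ belongs to $\R_{j+1}$; note that $v(S_j) + v(S \setminus S_j) = v(S) \le V$. A decomposition $S = L \cup \{c\} \cup R$ with $j \in \{1,\ldots,n\}$, $L := S_{j-1}$, $\{c\} := S \cap C_j$, and $R := S \setminus S_j$ satisfies the conclusion of the lemma exactly when $v(S_{j-1}) < V^{(\ell)}_{\L_{j-1}}$ and $v(S \setminus S_j) < V^{(r)}_{\R_{j+1}}$, so the task reduces to locating such a $j$.

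The first step is a pointwise dichotomy: for every $j \in \{0,\ldots,n\}$ at least one of $v(S_j) < V^{(\ell)}_{\L_j}$ and $v(S \setminus S_j) < V^{(r)}_{\R_{j+1}}$ holds; otherwise the hypothesis would give $v(S) \ge V^{(\ell)}_{\L_j} + V^{(r)}_{\R_{j+1}} > V$, contradicting $v(S) \le V$.

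The central step is to take $j^*$ to be the smallest index in $\{0,\ldots,n\}$ at which the first inequality fails, i.e.\ $v(S_{j^*}) \ge V^{(\ell)}_{\L_{j^*}}$. Minimality preserves the first inequality one step earlier, $v(S_{j^*-1}) < V^{(\ell)}_{\L_{j^*-1}}$, while the dichotomy at $j^*$ combined with the failure of the first inequality forces the second, $v(S \setminus S_{j^*}) < V^{(r)}_{\R_{j^*+1}}$. Setting $j := j^*$ together with the associated $L, c, R$ therefore yields the required decomposition.

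The main (and essentially only) obstacle is the boundary bookkeeping for $j^*$: one must rule out $j^* = 0$ and handle the case in which no such failure index exists in $\{0,\ldots,n\}$. Both are automatic under the standard convention that $V^{(\ell)}_{\L_j} = +\infty$ whenever the requested rank exceeds the list size, which in particular gives $V^{(\ell)}_{\L_0} = +\infty$ for $\ell \ge 2$ (so the first inequality holds at $j=0$, forcing $j^* \ge 1$) and $V^{(r)}_{\R_{n+1}} = +\infty$ for $r \ge 2$ (so if the first inequality never fails in $\{0,\ldots,n\}$ one simply takes $j := n$ and the second inequality $0 < V^{(r)}_{\R_{n+1}}$ is automatic). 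All remaining steps are elementary.
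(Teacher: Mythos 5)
Your proof is correct and follows essentially the same route as the paper's: the paper likewise takes the smallest index $j$ with $v(S_j)\ge V^{(\ell)}_{\L_j}$, obtains $v(S_{j-1})<V^{(\ell)}_{\L_{j-1}}$ from minimality, and forces $v(R)<V^{(r)}_{\R_{j+1}}$ from the hypothesis $v(S)\le V<V^{(\ell)}_{\L_j}+V^{(r)}_{\R_{j+1}}$ (your ``dichotomy'' is just this inequality packaged up front). The only caveat in your boundary handling --- that one needs $\ell,r\ge 2$ so that $V^{(\ell)}_{\L_0}=V^{(r)}_{\R_{n+1}}=\infty$ --- is equally implicit in the paper's own proof, so it is not a gap specific to your write-up.
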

  \begin{proof}
    Let $S=\{c_1,\ldots,c_n\}$ with $c_i\in C_i$ and, for $0\le i \le n$, let $S_i = \{c_1,\ldots,c_i\}$.
    If $v(S_{n-1})< V^{(\ell)}_{\L_{n-1}}$, we set $L=S_{n-1}$, $c=c_n$, and $R=\emptyset$, satisfying the claimed conditions.
    
    Otherwise, we define $j$ as the smallest index $i$ such that $v(S_i) \ge V^{(\ell)}_{\L_i}$,
    and we set $L=S_{j-1}$, $c=c_j$, and $R=S\setminus S_j$.
    The definition of $j$ implies $v(L)<V^{(\ell)}_{\L_{j-1}}$ and $v(L\cup\{c\})\ge V^{(\ell)}_{\L_j}$.
    Moreover, we have $v(L\cup \{c\})+v(R)=v(S)\le V < V^{(\ell)}_{\L_j}+V^{(r)}_{\R_{j+1}},$
    and thus $v(R) < V^{(r)}_{\R_{j+1}}$. %Consequently, $R\in \R_{j+1}^{(r)}$, as claimed.
  \end{proof}
  
  \begin{fact}\label{fct:bound}
  Let $\ell,r>0$.
  If $V^{(\ell)}_{\L_j}+V^{(r)}_{\R_{j+1}}\le  V$ for some $j\in\{0,\ldots,n\}$, then $\ell \cdot r \le A_V$.
  \end{fact}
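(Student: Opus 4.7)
The plan is to exhibit $\ell\cdot r$ pairwise distinct choices, each of value at most $V$; by definition of $A_V$, this immediately yields the bound. First, I would observe that the hypothesis $V^{(\ell)}_{\L_j}+V^{(r)}_{\R_{j+1}}\le V$ forces both summands to be finite, so $|\L_j|\ge \ell$ and $|\R_{j+1}|\ge r$. Thus I may let $L_1,\ldots,L_\ell$ denote the first $\ell$ entries of $\L_j$ and $R_1,\ldots,R_r$ the first $r$ entries of $\R_{j+1}$; by construction $v(L_i)\le V^{(\ell)}_{\L_j}$ and $v(R_k)\le V^{(r)}_{\R_{j+1}}$ for all admissible indices.

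Next, for each pair $(i,k)$ I set $S_{i,k}=L_i\cup R_k$. Since the domains $\{1,\ldots,j\}$ of $L_i$ and $\{j+1,\ldots,n\}$ of $R_k$ partition $\{1,\ldots,n\}$, the union $S_{i,k}$ is a full choice. Its value is $v(S_{i,k})=v(L_i)+v(R_k)\le V^{(\ell)}_{\L_j}+V^{(r)}_{\R_{j+1}}\le V$. Distinctness of these $\ell r$ choices is transparent: the restriction of $S_{i,k}$ to $C_1\cup\cdots\cup C_j$ equals $L_i$ and its restriction to $C_{j+1}\cup\cdots\cup C_n$ equals $R_k$, so both indices can be read off from $S_{i,k}$. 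This produces $\ell\cdot r$ distinct choices with value at most $V$, finishing the argument.

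No step poses a genuine obstacle; the statement is essentially a Cartesian-product argument. The only cosmetic point worth spelling out is the boundary behaviour at $j=0$ and $j=n$: here $\L_0$ and $\R_{n+1}$ are the singletons containing the empty partial choice of value $0$, forcing $\ell=1$ or $r=1$, and the argument collapses to $r\le A_V$ (respectively, $\ell\le A_V$), which follows directly from the definitions of $V^{(r)}_{\R_1}$ and $V^{(\ell)}_{\L_n}$.
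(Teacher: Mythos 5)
Your proof is correct and takes essentially the same route as the paper: the paper picks the $\ell$-th entry $L$ of $\L_j$ and the $r$-th entry $R$ of $\R_{j+1}$, observes $\rank_V(L)\ge \ell$, $\rank_V(R)\ge r$ and $\rank_V(L\cup R)\le A_V$, and concludes via the rank submultiplicativity of \cref{fct:comb}, whose proof is exactly your Cartesian-product count. The only difference is that you inline that counting argument instead of citing the lemma, which is fine.
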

  \begin{proof}
  Let $L$ and $R$ be the $\ell$-th and $r$-th entry in $\L_j$ and $\R_{j+1}$, respectively.
  Note that $v(L\cup R)\le  V$ implies
  $\rank_V(L\cup R) \le A_V$ by definition of~$A_V$.
  Moreover, $\rank_V(L)\ge \ell$ and $\rank_V(R)\ge r$
  (the equalities may be sharp due to draws).
  Now, \cref{fct:comb} yields the claimed bound.
  \end{proof}

   Note that $\L_j$ can be obtained by interleaving $|C_j|$ copies of $\L_{j-1}$, where each copy corresponds
  to extending the choices from $\L_{j-1}$ with a different item.
  If we were to construct $\L_{j}$ having access to the whole $\L_{j-1}$, we could proceed as follows.
  For each $c\in C_j$, we maintain an \emph{iterator} on $\L_{j-1}$
  pointing to the first element $S$ on $\L_{j-1}$ for which $S\cup\{c\}$ has not yet been added to $\L_{j}$.
  The associated \emph{value} is $v(S\cup\{c\})$.
  All iterators initially point at the first element of $\L_{j-1}$. 
  Then the next element to append to $\L_j$ is always $S\cup\{c\}$ corresponding to the iterator with minimum value. 
  Having processed this partial choice, we advance the pointer (or remove it, once it has already scanned the whole $\L_{j-1}$).
  This process can be implemented using a binary heap $H_j$ as a priority queue,
  so that initialization requires $\Oh(|C_j|)$ time and outputting a single element takes $\Oh(\log |C_j|)$ time.

  For all $r \ge 0$, let $\L^{(r)}_j$ be the prefix of $\L_j$ of length $\min(r,|\L_j|)$ and $\R^{(r)}_j$ be the prefix of $\R_j$ of length $\min(r, |\R_j|)$.
  A technical transformation of the procedure stated above
  leads to an online algorithm that constructs the prefixes $\L^{(r)}_j$ and $\R^{(r)}_j$.
  Along with each reported partial choice $S$, the algorithm also computes $w(S)$.

  \begin{lemma}\label{lem:generate}
    After $\Oh(N)$-time initialization, one can construct $\L^{(i)}_1,\ldots,\L^{(i)}_n$
    online for $i=0,1,\ldots$, spending $\Oh(n\log \lambda)$ time per each step.
    Symmetrically, one can construct $\R^{(i)}_1,\ldots,\R^{(i)}_n$ in the same time complexity.
  \end{lemma}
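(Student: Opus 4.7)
The plan is to maintain, for each $j \in \{1,\ldots,n\}$, a binary min-heap $H_j$ containing one entry per $c \in C_j$; the entry for $c$ stores an \emph{iterator} $p$ pointing to a position of $\L_{j-1}$ and is keyed by $v(\L_{j-1}[p])+v(c)$. Each element of $\L_j$ is represented succinctly as a pair $(\L_{j-1}[p], c)$ together with cached value and weight, so creating a new element takes $\Oh(1)$ time once its predecessor is available.

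I would carry out the $\Oh(N)$ initialization as follows. A single scan of each $C_j$ identifies its minimum-value element $c_j^{\ast}$ in $\Oh(|C_j|)$ time, totalling $\Oh(N)$. Because the cheapest partial choice in $\L_j$ is $\{c_1^{\ast},\ldots,c_j^{\ast}\}$, the element $\L_j[1]$ together with its value and weight is built in $\Oh(n)$ total time via the pointer recurrence $\L_j[1]=(\L_{j-1}[1], c_j^{\ast})$. I then build each $H_j$ in $\Oh(|C_j|)$ time: instantiate $|C_j|$ entries with iterators all at position $1$ and apply linear-time heapify; for $j=1$ the iterators point into a sentinel list $\L_0=\{\emptyset\}$ of length one. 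The whole initialization fits in $\Oh(N)$.

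To extend $\L_j$ on demand I pop the minimum entry $(p,c)$ of $H_j$, form the new partial choice $(\L_{j-1}[p], c)$, and compute its weight in $\Oh(1)$ as $w(\L_{j-1}[p]) + w(c)$. The iterator is then advanced to position $p+1$: if $\L_{j-1}$ has not yet been generated that far, I recursively extend $\L_{j-1}$ first; the heap key is then updated using the cached value of $\L_{j-1}[p+1]$, or the entry is deleted when $\L_{j-1}$ is exhausted. Each heap operation costs $\Oh(\log|C_j|)=\Oh(\log\lambda)$.

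The main obstacle is bounding the per-step work, since an explicit extension at level $j$ may trigger a chain of cascading extensions down toward $\L_1$. The key combinatorial bound I would use is that after $F_j$ extensions of $\L_j$ the iterator positions in $H_j$ sum to exactly $|C_j|+F_j$, so no iterator ever points past position $F_j+1$; consequently each extension of $\L_j$ triggers at most one further extension of $\L_{j-1}$. By induction on $j$ this yields $F_j \le i + (n-j)$ after $i$ steps, so the cumulative number of heap operations is $\Oh(ni+n^2)$, which amortizes to $\Oh(n\log\lambda)$ per step. The construction of $\R^{(i)}_j$ is entirely symmetric, with the roles of prefixes and suffixes interchanged.
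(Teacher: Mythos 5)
Your construction is, at its core, the same as the paper's: one binary heap $H_j$ per class holding an iterator per item $c\in C_j$ keyed by $v(\L_{j-1}[p])+v(c)$, $\Oh(N)$-time heap initialization, and $\Oh(\log\lambda)$ per pop. The difference is how you handle an iterator that must advance past the currently generated portion of $\L_{j-1}$: you materialize the missing element by a recursive, cascading extension. This is correct (your invariant that iterators point to existing elements gives at most one cascaded extension per level, and the induction $F_j\le i+(n-j)$ checks out, although the claim that positions sum to exactly $|C_j|+F_j$ breaks once entries are deleted), but it does not prove the lemma as stated. The cascades force $\L_j$ to run ahead of the requested prefix by up to $n-j$ elements, so after $i$ steps you have done $\Theta(ni+n^2)$ heap operations in the worst case. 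That is only an amortized $\Oh(n\log\lambda)$ per step once $i=\Omega(n)$; for the early steps (and as an additive $\Oh(n^2\log\lambda)$ overall) it exceeds the budget, and $n^2\log\lambda$ is not absorbed by the $\Oh(N)$ initialization (e.g.\ $\lambda=2$ gives $N=\Oh(n)$). Since the lemma is invoked in \cref{thm:knap} with the number of steps possibly much smaller than guaranteed lower bounds on $N$ only in degenerate regimes, this is a lower-order issue, but the paper explicitly fights for these lower-order terms, and the stated worst-case per-step bound is what the theorem's accounting uses.

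The paper avoids the overshoot entirely by never materializing the lookahead element: within each step it extends the lists in the fixed order $j=1,\ldots,n$, and an iterator that advances past the end of $\L^{(i)}_{j-1}$ is simply left pointing at the top of $H_{j-1}$, i.e.\ at the \emph{next} element of $\L_{j-1}$ to be emitted, whose value (and hence the iterator's key) is already known as the minimum key of $H_{j-1}$, even though the element itself has not yet been output. With this convention each step performs $\Oh(1)$ operations on each heap, giving a worst-case $\Oh(\sum_j\log|C_j|)=\Oh(n\log\lambda)$ per step with no cascading. You could repair your argument either by adopting this ``virtual pointer to the top of $H_{j-1}$'' device, or by proving the weaker amortized statement and rechecking that it suffices where the lemma is applied; as written, the claimed per-step bound is not established.
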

   \begin{proof}
      Our online algorithm is going to use the same approach as the offline computation of lists $\L_j^{(r)}$.
      The order of computations is going to be changed, though.
      
      At each step, for $j=1$ to $n$ we shall extend lists $\L^{(i-1)}_j$ with a single element 
      (unless the whole $\L_j$ has been already generated) from the top of the heap $H_j$.
      Note that this way each iterator in $H_j$ always points to an element 
      that is already in $\L^{(i-1)}_{j-1}$ or to the first element that has not been yet added to $\L_{j-1}$,
      which is represented by the top of the heap $H_{j-1}$. 
      
      We initialize the heaps as follows: we introduce $H_0$ which represents the empty choice $\emptyset$ with $v(\emptyset)=0$.
      Next, for $j=1,\ldots,n$ we build the heap $H_j$ representing $|C_j|$ iterators initially pointing to the top of $H_{j-1}$.
      The initialization takes $\Oh(N)$ time in total since a binary heap can be constructed in time linear in its size.
      
      At each step, the lists $\L^{(i-1)}_j$ are extended for consecutive values $j$ from $1$ to $n$. 
      Since $\L^{(i-1)}_{j-1}$ is extended before $\L^{(i-1)}_j$, all iterators in $H_j$ point to the elements of $\L^{(i)}_{j-1}$ while we compute $\L_j^{(i)}$.
      We take the top of $H_j$ and move it to $\L^{(i)}_j$.
      Next, we advance the corresponding iterator and update its position in the heap $H_j$.
      After this operation, the iterator might point to the top of $H_{j-1}$.
      If $H_{j-1}$ is empty, this means that the whole list $\L_{j-1}$ has already been generated and traversed by the iterator.
      In this case, we remove the iterator.
      
      It is not hard to see that this way we indeed simulate the previous offline solution.
      A single phase makes $\Oh(1)$ operations on each heap $H_j$.
      The running time is bounded by $\Oh(\sum_{j} \log |C_j|)=\Oh(n\log \lambda)$ at each step of the algorithm.
    \end{proof}

  The reduction of the following lemma is presented in \cref{ss}.
  Note that we may always assume that $\lambda \le a \le A$.
  Indeed, if we order the items $c\in C_i$ according to $v(c)$, 
  then only the first $A_V$ of them might belong to a choice $S$ with $v(S)\le V$. 

%   Similarly, if we order the items according to $w(c)$, only the first $A_W$
%   items might belong to a choice satisfying $w(S)\le W$.
%   We can remove the superfluous elements $c$ from each class $i$ by checking if $v(c)-v_{\min}(i)>V-V_{\min}$.
%   Symmetrically, we may remove items satisfying $w(c)-w_{\min}(i)>W-W_{\min}$ to make sure that $\lambda\le A_W$.
     \begin{lemma}\label{lem:knapred2}
    Given an instance $I$ of the \MK problem, one can compute in $\Oh(N+\lambda\log A)$ time
    an equivalent instance $I'$ with $A_V'\le A_V$, $A_W'\le A_W$,
    $\lambda'\le \lambda$, and $n'=\Oh(\frac{\log A}{\log \lambda})$.
    \end{lemma}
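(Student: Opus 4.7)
My plan is to construct $I'$ by combining an \emph{item filter} that removes items which cannot appear in any feasible choice with a left-to-right \emph{class merging} sweep that greedily groups consecutive classes of $I$ into super-classes of size at most $\lambda$, while preserving the thresholds $V,W$.

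First, I would compute $m_i^V=\min_{c\in C_i} v(c)$ and $m_i^W=\min_{c\in C_i} w(c)$ for every class. An item $c\in C_i$ can belong to a feasible choice only if $v(c)\le V-\sum_{j\ne i} m_j^V$ and $w(c)\le W-\sum_{j\ne i} m_j^W$; items failing either inequality are safely deleted in $\Oh(N)$ time. Within each class I would additionally collapse items sharing the same pair $(v,w)$ to a single representative---they are interchangeable for the feasibility question---and, whenever a class reduces to a singleton, absorb its contribution into the thresholds. None of these steps increases $A_V$, $A_W$, or $\lambda$, and they preserve the answer.

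The merging sweep maintains an accumulator $C^*$ whose entries are pairs $(v(S),w(S))$ representing useful partial choices over a contiguous block of original classes. To incorporate the next class $C_i$, I form the filtered-and-collapsed Cartesian product (the filter uses the minima of the still-unprocessed classes) and test whether the resulting size is at most $\lambda$. If so, I replace $C^*$ by this product; otherwise I emit $C^*$ as a class of $I'$ and restart the accumulator with the filtered $C_i$. Keeping $C^*$ sorted by $v$ (ties broken by $w$), each product-with-filter step runs in $\Oh(\lambda \log \lambda)$ via a heap-based merge in the style of \cref{lem:generate}, so the whole sweep costs $\Oh(N+\lambda \log A)$. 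Since every super-item still represents a useful partial choice, $|C^*|\le \lambda$ is maintained and hence $\lambda'\le \lambda$; concatenating super-items across super-classes preserves feasibility, so $A_V'\le A_V$ and $A_W'\le A_W$.

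The main obstacle is establishing $n'=\Oh(\log A/\log \lambda)$. The crucial observation is that any useful super-item can be extended to a feasible full choice by completing each unprocessed class with its min-$v$ or min-$w$ item, so the number of useful super-items over \emph{any} contiguous block of classes is bounded by $A$. My plan is to charge each emission to a multiplicative factor of $\Omega(\sqrt{\lambda})$ in a global quantity bounded by $A^{\Oh(1)}$: at the moment of an emission, $|C^*|\le \lambda$ but adding the next class would overflow, so either the emitted super-class already has $|C^*|\ge \sqrt{\lambda}$ (consuming a $\sqrt{\lambda}$ factor directly), or the attempted $|C^{\mathit{new}}|$ leaps by more than $\sqrt{\lambda}$ in a single step, in which case an application of the submultiplicativity in \cref{fct:comb} amortises the factor over the next batch. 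A straightforward induction then yields $n' \log \lambda=\Oh(\log A)$. The principal risk in this plan is the amortisation step: if the direct charging turns out to be too loose in edge cases where a class has many items tightly clustered in value but spread in weight, the fallback is to pre-apply the rank-based decomposition of \cref{lem:decomp} to separate out either a small equivalent subproblem or an immediate infeasibility witness before the merging sweep begins.
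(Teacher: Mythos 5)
Your overall architecture (filter items, then greedily merge consecutive classes into super-classes of size at most $\lambda$) cannot establish the key bound $n'=\Oh(\frac{\log A}{\log\lambda})$, and the charging argument you sketch for it is in fact false. Concretely, take $\lambda=2$, classes $C_i=\{0,x_i\}$ with $v=w$ for every item, $V=W$, and $x_i$ slightly larger than $V/2$. Every item passes your filter (each $x_i$ alone, completed by zeros, is feasible), no class becomes a singleton, and over any block $B$ the jointly useful partial choices are exactly the all-zero one plus the $|B|$ single-$x_i$ ones; hence any filtered product of two classes has $3>\lambda$ elements and your sweep emits $n'=\Theta(n)$ classes, while $A=A_V=A_W=n+1$, so the required bound $n'=\Oh(\log A)$ fails badly. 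The root cause is that counts of \emph{jointly} feasible-after-completion partial choices do not multiply across disjoint blocks: two partial choices that are each useful assuming minima elsewhere may together violate the thresholds, so the product of emitted class sizes is not bounded by any $A^{\Oh(1)}$ (here it is $2^{n-1}$ versus $A=n+1$). Your appeal to \cref{fct:comb} does not repair this, because submultiplicativity holds for $\rank_V$ and $\rank_W$ \emph{separately}, whereas your usefulness notion couples both constraints; turning rank information into a lower bound on $A$ requires an additional pigeonhole step choosing one of the two coordinates globally.

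The paper's proof avoids this trap with two ingredients you are missing. First (\cref{lem:knapred}), classes containing a single item that simultaneously minimizes $v$ and $w$ are eliminated \emph{greedily} by committing to that item and lowering the thresholds (this alone disposes of the counterexample above); afterwards every class contains at least two non-dominated items. Second, the bound on $n'$ is not proved by showing the merged instance is small per se, but via a dichotomy: after per-class pruning (\cref{lem:redstep} guarantees $\max(\rank_V(c),\rank_W(c))>\frac13|C|$) and pairing up small classes via Cartesian products, one checks a condition of the form $V_{\min}+\Delta_V^{\mmid}\le V$ (and its $W$-analogue); if it holds, then by \cref{fct:comb} and a ``half the classes use $\rank_V$ or half use $\rank_W$'' pigeonhole one certifies $A\ge\lambda^{\Omega(n')}$, and otherwise the instance is provably infeasible and the reduction may simply output a constant-size dummy NO-instance. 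Without the greedy elimination and without the licence to emit a dummy NO-instance when infeasibility is detected, no amount of contiguous merging with an $\le\lambda$ size cap can achieve $n'=\Oh(\frac{\log A}{\log\lambda})$, so the gap in your amortisation step is not a technicality but a fundamental obstruction.
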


  \begin{theorem}\label{thm:knap}
    \MK can be solved in $\Oh(N+\sqrt{a\lambda}\log A)$ time.
  \end{theorem}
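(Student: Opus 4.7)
The plan is to combine all the machinery developed in this section. First I would apply the reduction of \cref{lem:knapred2} in $\Oh(N+\lambda \log A)$ time; afterwards $n=\Oh(\log A/\log \lambda)$, hence $n\log\lambda=\Oh(\log A)$, and since $\lambda\le a$ (as noted before the lemma) the additive $\lambda\log A$ is already dominated by the target $\sqrt{a\lambda}\log A$. Then I would initialize the online generators of \cref{lem:generate} for every $\L_j$ and $\R_j$ in $\Oh(N)$ additional time.

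The main phase consists of two symmetric threads executed in parallel, one tracking $V$-ranks and one tracking $W$-ranks, terminating as soon as either finishes; the faster one corresponds to $a=\min(A_V,A_W)$. Focusing on the $V$-thread, it maintains two parameters $\ell$ and $r:=\ell\lambda$, starting at $\ell=1$. In each step it advances the generators to expose $\L_j^{(\ell)}$ and $\R_j^{(r)}$ (so one new entry of each $\L_j$ and $\lambda$ new entries of each $\R_j$), and then tests whether $V^{(\ell)}_{\L_j}+V^{(r)}_{\R_{j+1}}>V$ for every $j\in\{0,\ldots,n\}$; the loop stops as soon as this test succeeds. By the contrapositive of \cref{fct:bound}, success is forced once $\ell r>A_V$, which with $r=\ell\lambda$ yields $\ell=\Oh(\sqrt{a/\lambda})$. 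One step costs $\Oh((1+\lambda)\cdot n\log \lambda)=\Oh(\lambda \log A)$ by \cref{lem:generate}, so the loop contributes $\Oh(\sqrt{a\lambda}\log A)$ in total.

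Once the loop stops, \cref{lem:decomp} guarantees that every feasible choice $S$ decomposes as $S=L\cup\{c\}\cup R$ with $L\in \L_{j-1}^{(\ell-1)}$, $c\in C_j$, and $R\in \R_{j+1}^{(r-1)}$ for some $j\in\{1,\ldots,n\}$. For each such $j$ I would enumerate the merged list $M_j:=\{L\cup\{c\}:L\in \L_{j-1}^{(\ell-1)},\ c\in C_j\}$ in increasing order of $v$, using a priority queue of size $|C_j|$ in $\Oh(\ell|C_j|\log \lambda)$ time, and then feed $M_j$ and $\R_{j+1}^{(r-1)}$ (each element of $M_j$ treated as a single item, with the additive $v$- and $w$-values) into the $2$-class routine of \cref{lem:knap2} with thresholds $V$ and $W$, in $\Oh(\ell|C_j|+r)$ extra time. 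Summing over $j$ and substituting $\sum_j |C_j|=N=\Oh(n\lambda)$, $r=\ell\lambda$, $\ell=\Oh(\sqrt{a/\lambda})$, and $n\log \lambda=\Oh(\log A)$ gives $\Oh(\sqrt{a\lambda}\log A)$ for this phase as well. Adding all contributions yields the claimed bound $\Oh(N+\sqrt{a\lambda}\log A)$.

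The main obstacle, I expect, is identifying the correct asymmetric balance $r=\ell\lambda$. The symmetric choice $r=\ell$ would make the generation phase cheaper ($\Oh(\sqrt{a}\log A)$), but phase~5 would degrade to $\sum_j \ell|C_j|=\Oh(\sqrt{a}\cdot N)$, overshooting the target by a factor of roughly $\sqrt{\lambda}/\log \lambda$. Setting $r=\ell\lambda$ pays the $\sqrt{\lambda}$ factor up front during generation, but in exchange it lets the entire class $C_j$ be folded into a single merged list $M_j$, so \cref{lem:knap2} is invoked only once per~$j$ rather than once per item of $C_j$; the two phases then meet at $\Oh(\sqrt{a\lambda}\log A)$.
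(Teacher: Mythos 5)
Your proposal is correct and follows essentially the same route as the paper: the same parallel $V$/$W$ threads, the same asymmetric balance between prefix and suffix ranks (your invariant $r=\ell\lambda$ is the paper's $\ell=\lceil r/\lambda\rceil$ up to loop granularity), the same use of \cref{lem:knapred2}, \cref{lem:generate}, \cref{fct:bound}, and \cref{lem:decomp}, and the same reduction per index $j$ to a two-class instance $\L_{j-1}^{(\ell)}\odot C_j$ versus $\R_{j+1}^{(r)}$ solved via \cref{lem:knap2}. The accounting matches the paper's as well, so no changes are needed.
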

  \begin{proof}
  Below, we give an algorithm in $\Oh(N+\sqrt{A_V\lambda}\log A)$ time. 
  The final solution runs it in parallel on the original instance and on the instance with $v$ and $V$ swapped with $w$ and~$W$,
   waiting until at least one of them terminates.
    
    We increment an integer $r$ starting from~$1$, maintaining $\ell=\ceil{\frac{r}{\lambda}}$ and the lists $\L_{j}^{(\ell)}$ and $\R_{j+1}^{(r)}$ for $0\le j \le n$, as long as $V^{(\ell)}_{\L_j}+V^{(r)}_{\R_{j+1}}\le V$ for some $j$ (or until all the lists have been completely generated).
    By \cref{fct:bound}, we stop at $r=\Oh(\sqrt{A_V  \lambda})$.
    \cref{lem:knapred2} lets us assume that $n=\Oh(\frac{\log A}{\log \lambda})$, so  
    the running time of this phase is $\Oh(N+\sqrt{A_V  \lambda}\log A)$ due to \cref{lem:generate}.

    Due to \cref{lem:decomp}, every feasible solution $S$ admits a decomposition $S=L\cup\{c\}\cup R$
    with $L\in \L_{j-1}^{(\ell)}$, $c\in C_j$, and $R\in \R_{j+1}^{(r)}$ for some index $j$. 
    We consider all possibilities for $j$. For each of them we will reduce searching for $S$ to an instance of the \MK
    problem with $N'=\Oh(\sqrt{A_V\lambda})$ and $n'=2$. By \cref{lem:knap2}, these instances can be solved in $\Oh(n\sqrt{A_V\lambda})=\Oh(\sqrt{A_V\lambda}\frac{\log A}{\log \lambda})$  time in total.
   
    The items of the $j$-th instance are going to belong to classes $\L_{j-1}^{(\ell)}\odot C_j$ and $\R_{j+1}^{(r)}$,
    where $\L_{j-1}^{(\ell)}\odot C_j = \{L\cup \{c\} : L\in \L_{j-1}^{(\ell)} , c\in C_j\}$.
    The set $\L_{j-1}^{(\ell)}\odot C_j$ can be constructed by merging $|C_j|\le \lambda$ sorted lists each of size $\ell=\Oh(\sqrt{A_V/\lambda})$,
    i.e., in $\Oh(\sqrt{A_V\lambda}\log \lambda)$ time.
    Summing up over all indices $j$, this gives $\Oh(\sqrt{A_V \lambda}\log\lambda \frac{\log A}{\log \lambda})=\Oh(\sqrt{A_V \lambda}\log A)$ time.
       
    Clearly, each feasible solution of the constructed instances represents a feasible solution of the initial instance,
    and by \cref{lem:decomp}, every feasible solution of the initial instance has its counterpart in one of the constructed instances. 
  \end{proof}
 
  \subsection{Proof of Lemma~\ref{lem:knapred2}}\label{ss}
  Our reduction consists of two steps.   Its implementation uses the following notions:
  For each class $C_i$, let $v_{\min}(i) = \min\{v(c) : c\in C_i\}$. % and $w_{\min}(i)=\min\{w(c):c\in C_i\}$.
  Also, let $V_{\min} = \sum_{i=1}^n v_{\min}(i)$; note that $V_{\min}$ is the smallest possible value $v(S)$ of a choice $S$.
  We symmetrically define $w_{\min}(i)$ and $W_{\min}$.
  First, we make sure that $n=\Oh(\log A)$.
  \begin{lemma}\label{lem:knapred}
    Given an instance $I$ of the \MK problem, one can compute in linear time an equivalent instance $I'$ with $N'\le N$, $A_V'\le A_V$, $A_W'\le A_W$, $\lambda'\le \lambda$, and $n' \le 2\log A$.
  \end{lemma}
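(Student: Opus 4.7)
I will construct $I'$ by a short sequence of linear-time cleaning passes, each of which can only weakly decrease $N$, $A_V$, $A_W$, and $\lambda$. First, within each class I discard items that are dominated coordinate-wise in both $v$ and $w$ by another item of the same class; any choice using a dominated $c$ can be rerouted through a dominator $c'$ without increasing either $v(S)$ or $w(S)$, so the sets of $V$- and $W$-feasible choices only shrink, and in particular $A_V$ and $A_W$ do not grow. The survivors of each class form a Pareto frontier, sortable simultaneously by $v$ ascending and $w$ descending, so the $v$-rank and $w$-rank of any item within $C_i$ sum to $|C_i|+1$. Next, any class left with a single item is fully forced, so I add its value and weight to running offsets, subtract them from $V$ and $W$, and remove the class. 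If $V$ or $W$ ever becomes negative the instance is infeasible, in which case I output a trivial no-instance with $n'=0$. After these passes every remaining class contains at least two Pareto-optimal items.

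The heart of the proof is a counting bound assuming that a feasible choice $S^*$ (one with $v(S^*) \le V$ and $w(S^*) \le W$) exists. Let $k_i$ denote the $v$-rank of $S^*[i]$ inside $C_i$. Replacing $S^*[i]$ with any of the $k_i$ items of $v$-value at most $v(S^*[i])$ yields another $V$-feasible choice, so $A_V \ge \prod_i k_i$. By Pareto-optimality the $w$-rank of $S^*[i]$ equals $|C_i|-k_i+1$, so analogously $A_W \ge \prod_i(|C_i|-k_i+1)$. Multiplying, and using the elementary estimate $k(|C_i|-k+1) \ge |C_i|$ (minimized at the endpoints $k \in \{1, |C_i|\}$) together with $|C_i| \ge 2$, I obtain $A_V\,A_W \ge \prod_i |C_i| \ge 2^n$, so $A = \max(A_V, A_W) \ge 2^{n/2}$ and hence $n \le 2\log A$.

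The remaining gap is that the cleaning alone may leave an infeasible instance unnoticed, in which case no $S^*$ exists and the counting collapses. To close this, I plan a further set of linear-time infeasibility tests: compute $M = \sum_i \min(\Delta v_i, \Delta w_i)$ (where $\Delta v_i$, $\Delta w_i$ are the $v$- and $w$-ranges of class $i$), and declare infeasibility whenever $V_{\min}+W_{\min}+M > V+W$, since this is exactly $\min_S(v(S)+w(S)) > V+W$. In addition, for each class I force the min-$v$ item whenever $\Delta v_i > V-V_{\min}$, and symmetrically the min-$w$ item whenever $\Delta w_i > W-W_{\min}$; if both forcings apply in a class with two distinct Pareto items, the instance is infeasible. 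These forcings propagate because each reduces $V-V_{\min}$ or $W-W_{\min}$ for the remaining classes, and the whole process still runs in linear time. I expect the main technical obstacle to be the case analysis certifying that every non-trivial surviving instance is genuinely feasible, which is what allows the counting bound above to be invoked.
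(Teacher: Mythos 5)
Your counting argument for the feasible case is correct and is essentially the same phenomenon the paper exploits: a feasible $S^*$ in an instance where no class can be settled greedily must have large rank, hence $A\ge 2^{n/2}$. In fact your argument can be made simpler and genuinely linear-time: you do not need sorted Pareto frontiers (computing the exact set of non-dominated items per class is a sorting-type task, not $\Oh(N)$ — this is precisely why the paper later needs the weaker \cref{lem:redstep}). It suffices to remove, as the paper does, every class containing an item that attains both $v_{\min}(i)$ and $w_{\min}(i)$; afterwards every remaining item $c$ has $\rank_V(c)\cdot\rank_W(c)\ge 2$, and \cref{fct:comb} gives $A_V A_W\ge \prod_i\rank_V(c_i)\rank_W(c_i)\ge 2^n$ for any feasible choice, which is your bound without the rank-sum identity (and without worrying about ties).

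The genuine gap is the infeasible case, and the plan you sketch to close it is aimed at the wrong target. The lemma must, for \emph{every} input, either return an equivalent instance with $n'\le 2\log A$ or detect infeasibility and return a dummy NO-instance; your counting says nothing about infeasible instances with small $A$, so you must catch all of them. But "certifying that every non-trivial surviving instance is genuinely feasible" by linear-time tests is hopeless: such a battery of tests would decide feasibility of \MK (already NP-hard for $\lambda=2$ via \SubsetSum) in linear time. What is needed is the weaker, two-sided condition the paper checks: with $\Delta_V(i)$ the gap from $v_{\min}(i)$ to the \emph{second smallest} value in $C_i$ and $\Delta_V^{\mmid}$ the sum of the $\ceil{\frac{n}{2}}$ smallest such gaps (computable by linear-time selection), the test "$V_{\min}+\Delta_V^{\mmid}\le V$ or $W_{\min}+\Delta_W^{\mmid}\le W$" implies $A\ge 2^{n/2}$ whenever it holds — even for infeasible instances, since it directly exhibits a choice with $\ceil{\frac n2}$ items of rank at least $2$ — and its failure certifies infeasibility by a pigeonhole over the classes where a feasible solution would deviate from the minimum. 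Your concrete substitutes do not have this property and contain errors besides: $V_{\min}+W_{\min}+\sum_i\min(\Delta v_i,\Delta w_i)$ with $\Delta$ denoting ranges is not $\min_S(v(S)+w(S))$ (the class $\{(0,10),(2,3),(10,0)\}$ gives $10$ versus the true $5$), and forcing the min-$v$ item whenever the \emph{range} exceeds $V-V_{\min}$ discards solutions that use affordable intermediate items, so the resulting instance need not be equivalent; the legitimate trigger is the gap to the second-smallest value, which is exactly the quantity the paper's test is built from.
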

  \begin{proof}   
    Observe that if some class $C_i$ contains a single item $c$ for which both $v(c)=v_{\min}(i)$
    and $w(c)=w_{\min}(i)$, then we can greedily include it in the solution $S$.  
    Hence, we can remove such a class, setting $V := V- v_{\min}(i)$ and $W := W- w_{\min}(i)$.
    We execute this reduction rule exhaustively, which clearly takes $\Oh(N)$ time in total
    and may only decrease the parameters $A_V$ and $A_W$.
    After the reduction, each class contains at least two items.

    We shall prove that now we can either find out that $A \ge 2^{n/2}$ or 
    that we are dealing with a NO-instance.
    To decide which case holds, let us define $\Delta_V(i)$ as the difference between the second smallest
    value in the multiset $\{v(c) : c\in C_i\}$ and $v_{\min}(i)$.
    We set $\Delta_V^{\mmid}$ as the sum of the $\ceil{\frac{n}{2}}$ smallest values $\Delta_V(i)$
    for $1\le i \le n$; analogously we define $\Delta_W^{\mmid}$.
   
    \begin{claim}
      If $V_{\min} + \Delta_V^{\mmid} \le V$, then $A_V \ge 2^{n/2}$;
      if $W_{\min} + \Delta_W^{\mmid} \le W$, then $A_W \ge 2^{n/2}$;
      otherwise, we are dealing with a NO-instance.
    \end{claim}
    \begin{proof}
      First, assume that $V_{\min} + \Delta_V^{\mmid} \le V$.
      This means that there is a choice $S$ with $v(S)\le V$ containing at least $\frac{n}{2}$ items $c$ such that $\rank_V(c)\ge 2$.
      \cref{fct:comb} yields $\rank_V(S)\ge 2^{\ceil{n/2}}$ and consequently $A_V \ge 2^{n/2}$, as claimed.
      Symmetrically, if $W_{\min} + \Delta_W^{\mmid} \le W$, then $A_W \ge 2^{n/2}$.
   
      Now, suppose that there is a feasible solution $S$.
      As no class contains a single item minimizing both $v(c)$ and $w(c)$, there are at least $\ceil{\frac{n}{2}}$
      classes for which $S$ contains an item not minimizing $v(c)$, or at least $\ceil{\frac{n}{2}}$ classes
      for which $S$ contains an item not minimizing $w(c)$. Without loss of generality, we assume that the former holds.
      Let $D$ be the set of at least $\ceil{\frac{n}{2}}$ classes $i$ satisfying the condition. 
      If $c\in C_i$ does not minimize $v(c)$, then $v(c)\ge v_{\min}(i)+\Delta_V(i)$. 
      Consequently, $V\ge v(S) = V_{\min} + \sum_{i\in D} \Delta_V(i)$. 
      However, observe that $ \sum_{i\in D} \Delta_V(i) \ge \Delta_V^{\mmid}$,
      so $V \ge V_{\min} + \Delta_V^{\mmid}$, as claimed.
    \end{proof}
   
    The conditions from the claim can be verified in $\Oh(N)$ time using a linear-time selection algorithm to compute $\Delta_V^{\mmid}$ and $\Delta_W^{\mmid}$.
    If any of the first two conditions holds, we return the instance obtained using our reduction.
    Otherwise, we output a dummy NO-instance.
  \end{proof}

    Before we proceed with the second reduction,   let us introduce an auxiliary notion.
    An item $c\in C_j$ is \emph{irrelevant} if there is another item $c'\in C_j$ that \emph{dominates} $c$, i.e., 
    such that $v(c)>v(c')$ and $w(c)>w(c')$.
    Removing irrelevant items leads to an equivalent instance of the \MK problem, and it may only decrease the parameters.
    
   \begin{lemma}\label{lem:redstep}
   Consider a class of items in an instance of the \MK problem.
   In linear time, we can remove some irrelevant items from the class so that the resulting class $C$
   satisfies $\max(\rank_V(c),\rank_W(c)) > \frac13 |C|$ for each item $c\in C$.
   \end{lemma}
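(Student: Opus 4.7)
The plan is to use linear-time median selection as the main algorithmic primitive together with a prune-and-search strategy: identify the items that potentially violate the rank condition, argue that a constant fraction of them are dominated and can be removed as irrelevant, and iterate.

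First, using linear-time selection I would compute $v_{\ast}$, the $\lceil |C|/3 \rceil$-th smallest value in $\{v(c) : c \in C\}$, and $w_{\ast}$, defined analogously for $w$. Every item $c$ with $v(c) > v_{\ast}$ already satisfies $\rank_V(c) > \frac{1}{3}|C|$, and symmetrically for $w(c) > w_{\ast}$. Hence only items in $B = \{c \in C : v(c) \le v_{\ast} \text{ and } w(c) \le w_{\ast}\}$ might violate the condition; note $|B| \le \lceil |C|/3 \rceil$. This locates the problem with only $\Oh(|C|)$ work.

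Next, I would prune items inside $B$ by exploiting domination. Using one more linear-time pass, identify an item $c^{\ast} \in B$ that minimises $v$ (with ties broken by minimum $w$) and, symmetrically, one that minimises $w$; every $c \in B$ with both $v(c) > v(c^{\ast})$ and $w(c) > w(c^{\ast})$ is dominated by $c^{\ast}$ and can be flagged irrelevant. A short case analysis shows that each such pass removes at least a constant fraction of $B$, so recursing on the residual set (which is again a ``bottom-third in both coordinates'' region, now of smaller size) gives an $\Oh(|C|)$ geometric series overall. Crucially, every item removed this way lies in $B$, so the ranks $\rank_V(c')$ and $\rank_W(c')$ of any surviving item $c' \in C \setminus B$ decrease by at most $|B|$, while the new class size $|C'|$ decreases by exactly the same amount; therefore every $c' \in C \setminus B$ still has $\rank_V(c') > \lceil |C|/3 \rceil \ge \frac{1}{3}|C'|$ or the analogous bound for $w$.

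The main obstacle, and the reason the constant $\frac{1}{3}$ rather than $\frac{1}{2}$ appears, is simultaneously guaranteeing three things: (i) that only genuinely irrelevant (dominated) items are ever deleted, (ii) that each prune-and-search round removes a constant fraction of $B$ so that the total work sums to $\Oh(|C|)$, and (iii) that the rank bound $\max(\rank_V(c),\rank_W(c)) > \frac{1}{3}|C|$ holds with respect to the \emph{final} class size rather than the original one. The slack between $\frac{1}{3}$ and $\frac{1}{2}$ is exactly what allows a recursive argument on $B$ (whose size is already $\le \frac{1}{3}|C|$) to absorb the shrinkage of $|C|$ without ever pushing a surviving item into the forbidden bottom-third regime.
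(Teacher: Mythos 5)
There is a genuine gap, and it comes from removing the wrong items. Your plan deletes only items of $B$ that are dominated \emph{within} $B$, but the items that must be deleted to make the lemma true are, in general, the items \emph{outside} $B$ that a violator dominates. The paper's mechanism is exactly this: an item $c$ with $\rank_V(c)\le\frac13|C|$ and $\rank_W(c)\le\frac13|C|$ is not something to be removed (it may well be undominated); it is a \emph{certificate} that at least $|C|-\rank_V(c)-\rank_W(c)\ge\frac13|C|$ other items are dominated by $c$, and those are scanned for and removed from the whole class, after which the class has shrunk by a constant factor and the test is repeated on the smaller class. Your scheme never performs such removals, and both of your key claims fail on a concrete instance: take $|C|=3k$ with $B=\{(i,\,k+1-i):1\le i\le k\}$ (an antichain in the bottom-left box) and $2k$ further items whose $v$- and $w$-values all exceed $k$. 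No element of $B$ is dominated by the minimum-$v$ or minimum-$w$ element of $B$, so your pass removes nothing --- refuting the claim that each pass deletes a constant fraction of $B$ --- and the class is returned unchanged, yet the item $(1,k)$ has $\max(\rank_V,\rank_W)=k=\frac13|C|$, violating the required strict inequality. The only way to repair it is to delete the $2k$ dominated items outside $B$ (which shrinks the class and hence the threshold $\frac13|C|$), precisely what the paper does and what your procedure never attempts.

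A secondary flaw is the preservation argument for $c'\in C\setminus B$: if $m$ items of $B$ are removed, then (since they all lie below $v_*$, resp.\ $w_*$) the relevant rank of $c'$ drops by the full amount $m$, while the threshold $\frac13|C'|$ drops only by $m/3$, so ``rank and class size decrease by the same amount'' does not yield $\rank>\frac13|C'|$ once $m\ge 2$. Your opening step (linear-time selection of the $\lceil|C|/3\rceil$-th smallest $v$ and $w$ to locate the potential violators) does coincide with the paper's, but the subsequent prune-and-search on $B$ needs to be replaced by the certificate-based removal described above, with the guarantee re-checked against the current class size after each round.
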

   \begin{proof}
   First, note that using a linear-time selection algorithm, we can determine for each item $c$
   whether $\rank_V(c)\le \frac13|C|$ and whether $\rank_W(c)\le \frac13|C|$. 
   If there is no item satisfying both conditions, we keep $C$ unaltered.
   Otherwise, we have an item which dominates at least $|C|-\rank_V(c)-\rank_W(c) \ge \frac13|C|$
   other items. We scan through all items in $C$ and remove those dominated by $c$.
   Next, we repeat the algorithm.
   The running time of a single phase is clearly linear, and since $|C|$ decreases geometrically,
   the total running time is also linear.
   \end{proof}    
   
   A straightforward way to decrease the number of classes is to replace two distinct classes $C_i$, $C_j$
   with their Cartesian product $C_i \times C_j$, 
   assuming that the weight of a pair $(c_i,c_j)$ is the sum of weights of $c_i$ and $c_j$. 
   This clearly leads to an equivalent instance of the \MK problem, does not alter the parameters $A_V$, $A_W$, and decreases $n$. 
   On the other hand $N$ and $\lambda$ may increase; the latter happens only if $|C_i| \cdot |C_j| > \lambda$.
      
   These two reduction rules let us implement our reduction procedure which constitutes the proof of Lemma~\ref{lem:knapred2}.
    \begin{proof}
    First, we apply \cref{lem:knapred} to make sure that $n\le 2\log A$ and $N = \Oh(\lambda \log A)$.
    We may now assume that $\lambda \ge 3^6$, as otherwise we already have $n = \Oh(\frac{\log A}{\log \lambda})$.
     
    Throughout the algorithm, whenever there are distinct classes of size at most $\sqrt{\lambda}$, 
    we shall replace them with their Cartesian product.
    This may happen only $n-1$ times, and a single execution takes $\Oh(\lambda)$ time,
    so the total running time needed for this part is $\Oh(\lambda \log A)$.
    
    Furthermore, for every class that we get in the input instance or obtain as a Cartesian product,
    we apply \cref{lem:redstep}.  The total running time spent on this is also $\Oh(\lambda \log A)$.
    
    Having exhaustively applied these reduction rules, we are guaranteed that
    $\max(\rank_V(c),\rank_W(c))>\frac13\sqrt{\lambda}\ge \lambda^{\frac13}$ for items $c$ from all but one class. 
    Without loss of generality, we assume that the classes satisfying this condition are $C_1,\ldots,C_k$.
    
    Recall that $v_{\min}(i)$ and $w_{\min}(i)$ are defined as minimum values and weights of items in class $C_i$
    and that $V_{\min}$ and $W_{\min}$ are their sums over all classes.
    For $1\le i \le k$, we define $\Delta_V(i)$ as the difference between the 
    $\big\lceil{\lambda^{\frac13}}\big\rceil$-th smallest value in the multiset  $\{v(c) : c\in C_i\}$ and $v_{\min}(i)$.
    Next, we define $\Delta_V^{\mmid}$ as the sum of the $\ceil{\frac{k}{2}}$ smallest values $\Delta_V(i)$.
    Symmetrically, we define $\Delta_W(i)$ and $\Delta_W^{\mmid}$.
    We shall prove a claim analogous to that in the proof of \cref{lem:knapred}.
    
    \begin{claim}
    If $V_{\min} + \Delta_V^{\mmid}\le V$, then $A_V \ge \lambda^{\frac16 k}$; if $W_{\min} + \Delta_W^{\mmid}\le W$, then $A_W \ge \lambda^{\frac16 k}$;  otherwise, we are dealing with a NO-instance. 
    \end{claim}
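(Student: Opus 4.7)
The plan is to mirror the structure of the analogous claim in the proof of \cref{lem:knapred}, with the constant $2$ there replaced by $\lceil\lambda^{1/3}\rceil$, and to restrict the argument to the $k$ classes $C_1,\ldots,C_k$ which, after exhaustively applying the Cartesian-product rule and \cref{lem:redstep}, are guaranteed to satisfy $\max(\rank_V(c),\rank_W(c))>\lambda^{1/3}$ for every item $c$. Since $\lambda\ge 3^6$, each such class has more than $\tfrac13\sqrt{\lambda}\ge \lambda^{1/3}$ items, so $\Delta_V(i)$ and $\Delta_W(i)$ are well-defined for $i\le k$.

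For the first existential direction, I would assume $V_{\min}+\Delta_V^{\mmid}\le V$ and construct an explicit witness. Let $D\subseteq\{1,\ldots,k\}$ consist of the $\lceil k/2\rceil$ indices realising $\Delta_V^{\mmid}$. In each class $C_i$ with $i\in D$, pick an item attaining the $\lceil\lambda^{1/3}\rceil$-th smallest value (so $v(c)=v_{\min}(i)+\Delta_V(i)$ and $\rank_V(c)\ge\lceil\lambda^{1/3}\rceil$), and in every other class pick an item of value $v_{\min}(i)$. Then $v(S)=V_{\min}+\Delta_V^{\mmid}\le V$, and iterating \cref{fct:comb} over the singletons indexed by $D$ yields $\rank_V(S)\ge \lceil\lambda^{1/3}\rceil^{\lceil k/2\rceil}\ge \lambda^{k/6}$. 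Since $v(S)\le V$, every partial choice counted by $\rank_V(S)$ is an actual choice feasible for the $V$-threshold, so $A_V\ge \lambda^{k/6}$. The weight case is handled symmetrically.

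For the NO-instance case, suppose for contradiction that some feasible choice $S=\{c_1,\ldots,c_n\}$ exists while $V_{\min}+\Delta_V^{\mmid}>V$ and $W_{\min}+\Delta_W^{\mmid}>W$. For every $i\le k$ the reduction invariant forces $\rank_V(c_i)>\lambda^{1/3}$ or $\rank_W(c_i)>\lambda^{1/3}$, and pigeonhole produces a set $D\subseteq\{1,\ldots,k\}$ with $|D|\ge\lceil k/2\rceil$ on which one of the two alternatives holds uniformly; without loss of generality, the $v$-version. A rank strictly above $\lambda^{1/3}$ forces $c_i$ to be at least the $\lceil\lambda^{1/3}\rceil$-th smallest element of $C_i$, so $v(c_i)\ge v_{\min}(i)+\Delta_V(i)$ for $i\in D$. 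Summing over all $n$ classes, and using that $\Delta_V^{\mmid}$ is by definition the smallest sum of $\lceil k/2\rceil$ values $\Delta_V(i)$, I obtain $v(S)\ge V_{\min}+\sum_{i\in D}\Delta_V(i)\ge V_{\min}+\Delta_V^{\mmid}>V$, contradicting feasibility.

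The only real subtlety I expect concerns the at-most-one exceptional class $C_{k+1}$ that may violate the rank invariant: I would need to check that its item contributes at least $v_{\min}(k+1)$ and $w_{\min}(k+1)$ (already absorbed into $V_{\min}$ and $W_{\min}$), and that $\Delta_V^{\mmid}$ and $\Delta_W^{\mmid}$ are indeed defined using indices from $\{1,\ldots,k\}$ only, so that the pigeonhole step never needs to look at $C_{k+1}$. Beyond this bookkeeping, the only computation is the identity $(\lambda^{1/3})^{k/2}=\lambda^{k/6}$, and the proof is otherwise a direct lift of the earlier claim.
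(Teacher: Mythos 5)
Your proposal is correct and follows essentially the same route as the paper: the witness choice realising $\Delta_V^{\mmid}$ with $\lceil k/2\rceil$ items of rank at least $\lambda^{1/3}$, the submultiplicativity of ranks (\cref{fct:comb}) to get $\rank_V(S)\ge\lambda^{k/6}$, and the pigeonhole-plus-summation argument for the NO-instance case are exactly the paper's argument, just spelled out more explicitly (the paper leaves the construction of the witness choice implicit). Your bookkeeping remark about the single exceptional class is also consistent with how the paper restricts $\Delta_V^{\mmid}$, $\Delta_W^{\mmid}$ to the classes $C_1,\ldots,C_k$.
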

    \begin{proof}
    First, suppose that $V_{\min} + \Delta_V^{\mmid}\le V$.
    This means that there is a choice $S$ with $v(S)\le V$ 
    which contains at least $\frac{k}{2}$ items $c$ with $\rank_V(c)\ge \lambda^{\frac13}$. 
    By \cref{fct:comb}, the rank of this choice is at least $\lambda^{\frac16 k}$, so $A_V \ge \lambda^{\frac16 k}$, as claimed.
    The proof of the second case is analogous.
    
    Now, suppose that there is a feasible solution $S=\{c_1,\ldots,c_n\}$. For $1\le i \le k$, we have $\rank_V(c_i)\ge \lambda^{\frac13}$
    or $\rank_W(c_i) \ge \lambda^{\frac13}$. Consequently, $\rank_V(c_i)\ge \lambda^{\frac13}$ holds for at least $\ceil{\frac{k}{2}}$
    classes or $\rank_W(c_i)\ge \lambda^{\frac13}$ holds for at least $\ceil{\frac{k}{2}}$ classes.
    Without loss of generality, we assume that the former holds. Let $D$ be the set of (at least $\ceil{\frac{k}{2}}$) classes $i$ satisfying the condition.
    For each $i\in D$, we clearly have $v(c_i)\ge v_{\min}(i)+\Delta_V(i)$, while for each $i\notin D$, we have $v(c_i)\ge v_{\min}(i)$.
    Consequently, $V\ge v(S) \ge V_{\min} + \sum_{i\in D} \Delta_V(i) \ge V_{\min} + \Delta_V^{\mmid}$.
    Hence, $V \ge V_{\min} + \Delta_V^{\mmid}$, which concludes the proof.
    \end{proof}
       
    The condition from the claim can be verified using a linear-time selection algorithm: 
    first, we apply it for each class to compute $\Delta_V(i)$ and $\Delta_W(i)$, 
    and then, globally, to determine $\Delta_V^{\mmid}$ and $\Delta_W^{\mmid}$.
    If one of the first two conditions hold, we return the instance obtained through the reduction.
    It satisfies $A \ge \lambda^{\frac 16 k}$, i.e., $n \le 1+k \le 1+6\frac{\log A}{\log \lambda}$.
    Otherwise, we construct a dummy NO-instance.
    \end{proof}

  \section{Weighted Consensus and General Weighted Pattern Matching}\label{sec:GWPMReduction}
    The \WC problem is formally defined as follows.
    \defdsproblem{\WC}{
      Two weighted sequences $X$ and $Y$ of length $n$ with at most $\lambda$ letters at each position and $R$ in total,
      and a threshold probability \fr.
    }{
      A string $S$ such that $S \match X$ and $S \match Y$ or NONE if no such string exists.
    }

    If two weighted sequences satisfy the consensus, we write $X \match Y$ and say that $X$ \emph{matches} $Y$
    with probability \fr.
    With this definition of a match, we extend the notion of an occurrence and the notation $\Occ_\fr(P,T)$ to arbitrary weighted sequences.

    \defdsproblem{\GWPMFull (\GWPM)}{
      Two weighted sequences $P$ and $T$ of length $m$ and $n$, respectively, with at most $\lambda$ letters at each position
      and $R$ in total, and a threshold probability \fr.
    }{
      The set $\Occ_\fr(P,T)$.
    }

    In the case of the \GWPM problem, it is more useful to provide an \emph{oracle} that finds solid factors that correspond to particular occurrences of the pattern.
    Such an oracle, given $i \in \Occ_\fr(P,T)$, computes a string that matches both $P$ and $T[i..i+m-1]$.
    %In each of our algorithms we produce an oracle for the \GWPM problem of $\Oh(n \log z)$ size supporting queries in $\Oh(m)$ time.

    %\begin{example}
    %  Let
    %  $P=[(\mathtt{a},0.5), (\mathtt{b},0.5)]\mathtt{ba},\ T=\mathtt{abba}[(\mathtt{a},0.5), (\mathtt{b},0.5)][(\mathtt{a},0.75), (\mathtt{b},0.25)]\mathtt{a}$,
    %  and $\frac1z=\frac38$.
    %  Then $\Occ_\fr(P,T)=\{2, 4\}$. For $i=4$, the oracle may return $\mathtt{aba}$, which matches both $P$ and $T[4..6]$.
    %\end{example}

    We say that a string $P$ is a \emph{maximal \fr-solid prefix} of a weighted sequence $X$
    if $P$ is a \fr-solid prefix of $X$ and no string $P' = Ps$, for $s \in \Sigma$, is a \fr-solid prefix of $X$.
    Our algorithms rely on the following simple combinatorial observation, originally due to Amir et al.\ \cite{amir_weighted_property_matching_j}.

    \begin{fact}[\cite{amir_weighted_property_matching_j}]\label{fct:maxprefixes}
      A weighted sequence has at most $z$ different maximal \fr-solid prefixes.
    \end{fact}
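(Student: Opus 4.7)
The plan is to use a probabilistic potential argument on the trie of all $\frac{1}{z}$-solid prefixes of $X$. Concretely, I would consider the set $\mathcal{T}$ of \emph{all} $\frac{1}{z}$-solid prefixes of $X$ (including the empty prefix), organized as a tree under the prefix relation: the parent of a non-empty $P \in \mathcal{T}$ is $P[1..|P|-1]$, and the children of $P$ are exactly those strings $Ps$ with $s \in \Sigma$ that are themselves $\frac{1}{z}$-solid. The maximal $\frac{1}{z}$-solid prefixes are then precisely the leaves of this tree.

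The heart of the argument is the following submultiplicativity observation: for any internal node $P$ of length $k$, the sum of matching probabilities of its children is
\[
\sum_{s : Ps \in \mathcal{T}} \mathbb{P}(Ps, X[1..k+1]) \;\le\; \mathbb{P}(P, X[1..k]) \cdot \sum_{s \in \Sigma} \pi^{(X)}_{k+1}(s) \;\le\; \mathbb{P}(P, X[1..k]),
\]
since the letter probabilities at position $k{+}1$ sum to at most $1$.

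I would then define, for each $k \ge 0$, the quantity $S_k$ equal to the sum of $\mathbb{P}(P, X[1..|P|])$ over all leaves of $\mathcal{T}$ at depth at most $k$ plus all internal nodes at depth exactly $k$. The empty string has probability $1$, so $S_0 = 1$, and the inequality above shows $S_{k+1} \le S_k$ for every $k$. Taking $k = n$, all remaining terms correspond to leaves, so $\sum_{P \text{ maximal}} \mathbb{P}(P, X[1..|P|]) \le 1$. Since each maximal $\frac{1}{z}$-solid prefix contributes at least $\frac{1}{z}$ to this sum by definition, the number of such prefixes is at most $z$, which is the claim.

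I don't expect a real obstacle here: the only subtle point is organizing the bookkeeping so that leaves at different depths are simultaneously accounted for, which the potential $S_k$ handles cleanly. A purely level-by-level sum would also work but requires slightly more care, so the leaves-plus-frontier potential is the cleanest presentation.
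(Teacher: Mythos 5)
Your argument is correct. Note that the paper itself does not prove this fact at all; it imports it from Amir et al.\ \cite{amir_weighted_property_matching_j}, so there is no internal proof to compare against. What you give is a valid self-contained derivation, and it is essentially the standard one: maximal $\frac1z$-solid prefixes form an antichain under the prefix order (equivalently, the leaves of your trie), and a Kraft-type inequality shows that the matching probabilities over any such antichain sum to at most $1$, whence at most $z$ of them can each have probability at least $\frac1z$. Your frontier potential $S_k$ is just a clean way to run the level-by-level induction establishing that inequality, and it works verbatim under the paper's weaker assumption that the letter probabilities at each position sum to \emph{at most} $1$. Two small points worth making explicit if this were written out: (i) the trie is well defined because deleting the last letter of a $\frac1z$-solid prefix can only increase its probability (each letter probability is at most $1$), so the parent of a node of $\mathcal{T}$ is again in $\mathcal{T}$; and (ii) length-$n$ solid prefixes are maximal by convention since they admit no extension, and they are indeed leaves, so the count is complete.
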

    %This immediately bounds the number of strings matching a given weighted sequence:
    %\begin{observation}\label{obs:num}
    %  A weighted sequence $X$ has at most $z$ strings $P$ such that $P\match X$.
    %\end{observation}

    The \WC problem is actually a special case of \MK.
    Namely, given an instance of the former, we can create an instance of the latter with $n$ classes $C_i$,
    each containing an item $c_{i,s}$ for every letter $s$ which has non-zero probability at position $i$ in both $X$ and $Y$.
    We set $v(c_{i,s})=-\log \pi^{(X)}_i(s)$ and $w(c_{i,s})=-\log \pi^{(Y)}_i(s)$ for this item,
    whereas the thresholds are $V=W=\log z$. It is easy to see that this reduction indeed yields an equivalent instance
    and that it can be implemented in linear time.
    By \cref{fct:maxprefixes}, we have $A\le z$ for this instance, so \cref{thm:knap} yields the following result:
    \begin{corollary}\label{cor:red_simple}
    \WC problem can be solved in $\Oh(R+\sqrt{z\lambda}\log z)$ time.
    \end{corollary}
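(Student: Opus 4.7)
The plan is to formalize the reduction from \WC to \MK sketched in the paragraph preceding the corollary and then invoke \cref{thm:knap}. First I would verify that the construction yields an equivalent instance. A string $S$ witnesses $S \match X$ and $S \match Y$ exactly when $\prod_{i=1}^n \pi^{(X)}_i(S[i]) \ge \frac1z$ and $\prod_{i=1}^n \pi^{(Y)}_i(S[i]) \ge \frac1z$. Taking $-\log$ on both sides and using $v(c_{i,s}) = -\log \pi^{(X)}_i(s)$, $w(c_{i,s}) = -\log \pi^{(Y)}_i(s)$, this becomes $\sum_{i} v(c_{i,S[i]}) \le \log z$ and $\sum_{i} w(c_{i,S[i]}) \le \log z$, which is precisely the feasibility condition of the constructed \MK instance. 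Letters with $\pi^{(X)}_i(s)=0$ or $\pi^{(Y)}_i(s)=0$ can never participate in a matching string, so their omission from $C_i$ is safe; hence choices of the \MK instance are in bijection with candidate consensus strings.

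Next I would bound the parameters of the produced instance. The total number of items is $N = \Oh(R)$, since each letter with nonzero probability at a given position in $X$ contributes at most one item, and similarly for $Y$. Each class has size at most $\lambda$ by construction. For the crucial parameter $A_V$, observe that a choice $S$ with $\sum_i v(c_{i,S[i]}) \le \log z$ corresponds bijectively to a string of length $n$ that is a \fr-solid prefix of $X$. Every \fr-solid prefix of length $n$ is maximal (it cannot be extended, as $X$ has no further positions), so \cref{fct:maxprefixes} gives $A_V \le z$. Symmetrically $A_W \le z$, and therefore $a \le A \le z$.

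Plugging these parameters into \cref{thm:knap} gives a running time of $\Oh(N + \sqrt{a\lambda}\log A) = \Oh(R + \sqrt{z\lambda}\log z)$, as required. To output an actual witness string rather than a yes/no answer, the decomposition $S = L \cup \{c\} \cup R$ produced by the algorithm of \cref{thm:knap} directly yields, through the correspondence above, a concrete string of length $n$ matching both $X$ and $Y$, so no additional post-processing is needed.

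The only delicate point I foresee is the $A \le z$ bound: \cref{fct:maxprefixes} is phrased in terms of maximal \fr-solid prefixes of a single weighted sequence, whereas here the bound must apply to full-length \fr-solid strings and to a joint condition over two weighted sequences. Both gaps are superficial: a length-$n$ \fr-solid prefix is automatically maximal, and conditioning on simultaneous matching with $Y$ can only decrease $A_V$ below the count for $X$ alone. Once this is spelled out, the rest of the argument is a routine assembly of the pieces.
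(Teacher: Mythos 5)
Your proposal is correct and follows the paper's own route: the same reduction of \WC to \MK with $v(c_{i,s})=-\log\pi^{(X)}_i(s)$, $w(c_{i,s})=-\log\pi^{(Y)}_i(s)$, $V=W=\log z$, the bound $a\le A\le z$ obtained from \cref{fct:maxprefixes} (noting that full-length \fr-solid prefixes are maximal), and a final application of \cref{thm:knap}. You merely spell out details the paper leaves implicit (equivalence, $N=\Oh(R)$, witness recovery), so there is nothing further to add.
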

    
    The \GWPM problem can be clearly reduced to $n+m-1$ instances of \WC. 
    This leads to a naive $\Oh(nR + n\sqrt{z\lambda}\log z)$-time algorithm.
    Below, we remove the first term in this complexity. 
    Our solution applies the approach used in \cref{sec:EWPM} for \WPM and uses an observation
    analogous to \cref{obs:crucial}.
    
    \begin{observation}\label{obs:crucial2}
      If $X$ and $Y$ are weighted sequences that match with threshold $\fr$,
      then $\HammingDistance(\H(X),\H(Y)) \le 2\floor{\log z}$.
      Moreover there exists a consensus string $S$ such that $S[i] = \H(X)[i] = \H(Y)[i]$ unless $\H(X)[i]\ne \H(Y)[i]$.
    \end{observation}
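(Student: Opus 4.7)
The plan is to derive both parts from Observation \ref{obs:crucial} applied to each of $X$ and $Y$ with an arbitrary consensus string.

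First, since $X\match Y$, by definition there is some string $S_0$ with $\P(S_0,X)\ge \fr$ and $\P(S_0,Y)\ge \fr$. Applying Observation \ref{obs:crucial} to $S_0$ with respect to each of $X$ and $Y$ gives $\HammingDistance(\H(X),S_0)\le \floor{\log z}$ and $\HammingDistance(\H(Y),S_0)\le \floor{\log z}$. Since Hamming distance is a metric, the triangle inequality
\[
\HammingDistance(\H(X),\H(Y)) \;\le\; \HammingDistance(\H(X),S_0)+\HammingDistance(S_0,\H(Y)) \;\le\; 2\floor{\log z}
\]
yields the first claim.

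For the second claim, the plan is to improve the consensus string $S_0$ at every position where the two heavy strings already agree. Concretely, define $S$ by setting $S[i]:=\H(X)[i]=\H(Y)[i]$ whenever $\H(X)[i]=\H(Y)[i]$, and $S[i]:=S_0[i]$ otherwise. I would then argue, position by position, that this substitution cannot decrease either matching probability: at any index $i$ where $\H(X)[i]=\H(Y)[i]=s$, the letter $s$ maximises $\pi_i^{(X)}$ and simultaneously maximises $\pi_i^{(Y)}$ (by definition of the heavy string, with a consistent tie-break), so $\pi_i^{(X)}(s)\ge \pi_i^{(X)}(S_0[i])$ and $\pi_i^{(Y)}(s)\ge \pi_i^{(Y)}(S_0[i])$. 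Taking products over $i$ shows $\P(S,X)\ge \P(S_0,X)\ge \fr$ and $\P(S,Y)\ge \P(S_0,Y)\ge \fr$, so $S$ is still a consensus and has the required property by construction.

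The only point that requires mild care is the tie-breaking convention in the definition of $\H(\cdot)$: the argument needs $\H(X)[i]$ and $\H(Y)[i]$ to each be a maximiser of the corresponding probability distribution (which holds by definition), not that they agree with some canonical choice. With that observation the existence argument is immediate, so I do not expect any real obstacle; the proof is essentially a triangle inequality followed by a monotone local correction.
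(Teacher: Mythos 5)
Your proof is correct, and it is exactly the argument the paper intends: the paper states Observation~\ref{obs:crucial2} without proof as an analogue of Observation~\ref{obs:crucial}, and the intended justification is precisely your combination of Observation~\ref{obs:crucial} (applied to a consensus string with respect to both $X$ and $Y$) with the triangle inequality, plus the local replacement by the common heavy letter, which cannot decrease either matching probability. Your closing remark on tie-breaking is also the right one -- all that is needed is that $\H(X)[i]$ and $\H(Y)[i]$ are maximisers of their respective distributions, which holds by definition.
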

 
    Our algorithm starts by computing $P'=\H(P)$ and $T'=\H(T)$ and the data structure for $\lcp$-queries in $P'T'$.
    We try to match $P$ with every factor $T[p..p+m-1]$ of the text.
    Following \cref{obs:crucial2}, we check if $\HammingDistance(T'[p..p+m-1],P') \le 2\floor{\log z}.$
    If not, then we know that no match is possible.
    Otherwise, let $D$ be the set of positions of mismatches between $T'[p..p+m-1]$ and $P'$.
    Assume that we store
    $\alpha = \prod_{j=1}^{m} \pi^{(T)}_{p+j-1}(T'[p+j-1])$ and $\beta = \prod_{j=1}^m \pi^{(P)}_j(P'[j]).$
    Then, in $\Oh(|D|)$ time we can compute $\alpha'=\prod_{j\notin D} \pi^{(T)}_{p+j-1}(T'[p+j-1])$ and
    $\beta' = \prod_{j \notin D} \pi^{(P)}_j(P'[j])$.
    %being the contribution of positions $j\notin D$ to $\P(S,T[p..p+m-1])$ and $\P(S,P)$ for any string $S$
    %satisfying \cref{obs:crucial2}.
    Now, we only need to check what happens at the positions in $D$.

    If $D = \emptyset$, then it suffices to check if $\alpha \ge \frac1z$ and $\beta \ge \frac1z$.
    Otherwise, we construct two weighted sequences $X$ and $Y$ by selecting only the positions from $D$ in $T[p..p+m-1]$ and in $P$.
    We multiply the probabilities of all letters at the first position in $X$ by $\alpha'$ and in $Y$ by $\beta'$.
    It is clear that $X\match Y$ if and only if $T[p..p+m-1]\match P$.
    
    Thus, we reduced the \GWPM problem to at most $n-m+1$ instances of the \WC problem for strings of length $\Oh(\log z)$. 
    By \cref{cor:red_simple}, solving such an instance takes $\Oh(\lambda\log z + \sqrt{z\lambda}\log z)=\Oh(\sqrt{z\lambda}\log z)$ time.   
    Our reduction requires $\Oh(R\log^2 \log \lambda)$ time to preprocess the input (as in \cref{thm:wpm}),
    but this is dominated by the $\Oh(n\sqrt{z\lambda}\log z )$ total time of solving the \WC instances.
    If we memorize the solutions to all those instances
    together with the sets of mismatches $D$ that lead to those instances, % (which takes $\Oh(n \log z)$ space in total),
    then we can also implement the oracle for the \GWPM problem with $\Oh(m)$-time queries.
     In \cref{app:SDWC}, we design a tailor-made solution to replace the generic algorithm for the \MK problem,
     which lets us improve the $\log z$ factor to $\log\log z + \log \lambda$. 
        
    The following reduction from \MK to \WC immediately yields that any significant improvement in the dependence
    on $z$ and $\lambda$ in the running time of our algorithm
    would lead to breaking long-standing barriers for special cases of \MK.
   \begin{lemma}\label{lem:red}
    Given an instance $I$ of the \MK problem with $n$ classes of size $\lambda$, in linear time one can construct
    an equivalent instance of the \WC problem with $z=\Oh(\prod_{i=1}^{n}|C_i|)$ and sequences of length $\Oh(n)$ over alphabet of size $\lambda$.
  \end{lemma}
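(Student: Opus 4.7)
The plan is to encode each item's value and weight into probabilities using an exponential scaling whose base is tuned so that the probability threshold $\frac{1}{z}$ of the \WC\ instance reproduces exactly the two thresholds $V$ and $W$ of the \MK\ instance. Since \WC\ offers only a single value of $z$, the two bases (one for $X$, one for $Y$) must be chosen so that both constraints reach the threshold simultaneously, while leaving $z$ of order $\prod_i |C_i|$.

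Concretely, I would take the alphabet $\Sigma=\{s_1,\ldots,s_\lambda\}$ to index the items inside a class, build weighted sequences $X$ and $Y$ of length $n$, and at position $i$ set, for each $j\in\{1,\ldots,|C_i|\}$,
\[
\pi^{(X)}_i(s_j) \;=\; K_V^{\,v(c_{i,j})}/|C_i|,\qquad
\pi^{(Y)}_i(s_j) \;=\; K_W^{\,w(c_{i,j})}/|C_i|,
\]
with $K_V = 2^{-1/V}$ and $K_W = 2^{-1/W}$, so that $K_V^V = K_W^W = \tfrac12$. The \WC\ threshold is $\tfrac1z$ with $z = 2\prod_{i=1}^n |C_i|$. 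Before this step I would shift values and weights (per class) so that $v(c),w(c)\ge 0$, and add one dummy item to handle the degenerate cases $V\le 0$ or $W\le 0$; the shifts are absorbed into $V$ and $W$ and preserve the instance.

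Three routine checks remain. First, validity: $\sum_j \pi^{(X)}_i(s_j) = |C_i|^{-1}\sum_j K_V^{v(c_{i,j})} \le 1$ since each term is at most $1$, and analogously for $Y$. Second, correctness: for a choice $\{c_{1,j_1},\ldots,c_{n,j_n}\}$, the candidate consensus $s_{j_1}\cdots s_{j_n}$ satisfies
\[
\prod_{i=1}^n \pi^{(X)}_i(s_{j_i}) \;=\; \frac{K_V^{\,\sum_i v(c_{i,j_i})}}{\prod_i |C_i|} \;\ge\; \frac{1}{z} \;=\; \frac{K_V^{V}}{\prod_i |C_i|}
\]
iff $K_V^{\sum_i v(c_{i,j_i})} \ge K_V^{V}$, which by monotonicity of $x\mapsto K_V^x$ on $[0,\infty)$ is equivalent to $\sum_i v(c_{i,j_i})\le V$; the same computation with $W$ handles $Y$. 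Third, sizes: the alphabet has size $\lambda$, the sequences have length $n$, $z = 2\prod_i |C_i| = \Oh(\prod_i |C_i|)$, and the whole construction is done in $\Oh(N)$ time.

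The only slightly delicate point is the representability of the bases $K_V$ and $K_W$: $2^{-1/V}$ is irrational, which could clash with the log-probability model. I would sidestep this by using the rational base $1-1/V$ (after guaranteeing $V\ge 2$ with a dummy item), whose $V$-th power is bounded below by $\tfrac14$, so that the same argument yields $z \le 4\prod_i |C_i|$ and the asymptotic bound is preserved.
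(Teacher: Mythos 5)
Your high-level strategy is the same as the paper's: encode values and weights as exponents of probabilities, divide by (roughly) $|C_i|$ at each position so the probabilities sum to at most $1$, and set the threshold to $\Theta\bigl(1/\prod_i|C_i|\bigr)$ so that matching both sequences is equivalent to satisfying both knapsack constraints. Your twist of using two different bases $K_V=2^{-1/V}$ and $K_W=2^{-1/W}$ with $K_V^V=K_W^W=\tfrac12$ is a nice way to get a \emph{single} threshold directly; the paper instead works in base $2$ with two thresholds $z_X\neq z_Y$ and appends one extra position (a single letter with probabilities $1$ and $2^{\log z_Y-\log z_X}$) to merge them.

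However, there is a genuine gap exactly at the point you flag and then dismiss: representability in the model of computation. In the log-probability model every probability and the threshold must be of the form $c^{p/2^{dw}}$ for \emph{constant} $c$ and an integer $p$ fitting in $\Oh(1)$ words, and the \WC lower-bound transfer (Theorem~\ref{thm:lb}) needs the constructed instance to be a legitimate instance in that model, built in linear time. Your probabilities $K_V^{v(c)}/|C_i|$ have base-$2$ logarithm $-v(c)/V-\log_2|C_i|$, which is not a dyadic rational, and switching to the base $1-1/V$ does not help: the probabilities are still not of the required form, explicit rational arithmetic is infeasible since $v(c)$ can be exponential in the input size so $(1-1/V)^{v(c)}$ has exponentially many bits, and, worse, the fix destroys your own key identity --- $(1-1/V)^V\neq(1-1/W)^W\neq\tfrac12$ in general, so with a common threshold $1/(4\prod_i|C_i|)$ the condition $(1-1/V)^{\sum_i v}\ge\tfrac14$ is no longer equivalent to $\sum_i v\le V$ (it shifts the effective threshold by a constant factor $\approx 2\ln 2$). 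The paper resolves precisely this: it picks $M$ as a power of two with $M\ge\max(n,V,W)$, uses exponents $-\bigl(\ceil{M\log|C_i|}+v(c)\bigr)/M$, which are integers over a power of two and hence exactly representable, and the rounding occurs only in the $\ceil{M\log|C_i|}$ terms, which appear identically in the probabilities and in $\log z_X$, $\log z_Y$ and therefore cancel, keeping the equivalence exact; the residual mismatch between $z_X$ and $z_Y$ is then absorbed by the extra position. To repair your argument you would need an analogous exact dyadic discretization (and then, most likely, the extra-position trick anyway, since the single-threshold alignment is what the discretization breaks).
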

    \begin{proof}
    We construct a pair of weighted sequences $X,Y$ of length $n$
    over alphabet $\Sigma=\{1,\ldots,\lambda\}$. 
    Intuitively, choosing letter $j$ at position $i$ will correspond to taking the $j$-th element of $C_i$ to the solution $S$,
    which we denote as $c_{i,j}$.
    Without loss of generality, we assume that weights and values are non-negative. Otherwise, we may subtract $v_{\min}(i)$ from $v(c_{i,j})$
    and $w_{\min}(i)$ from $w(c_{i,j})$ for each item $c_{i,j}$, as well $V_{\min}$ from $V$ and $W_{\min}$ from $W$.
    
    We set $M$ to the smallest power of two such that $M\ge\max(n, V, W)$.
    Let $p_i^{(X)}(j) = \log \pi_i^{(X)}(j)$ and $p_i^{(Y)}(j) = \log \pi_i^{(Y)}(j)$ for $j \in \Sigma$.
    For $j\in \{1,\ldots,|C_i|\}$, we set:
    $$p_i^{(X)}(j) = -\frac{\ceil{M\log|C_i|} + v(c_{i,j})}{M}, \quad p_i^{(Y)}(j)=-\frac{\ceil{M\log|C_i|} +w(c_{i,j})}{M}.$$
    Clearly, $\sum_{j=1}^{|C_i|} 2^{p_i^{(X)}(j)}\le 1$ and $\sum_{j=1}^{|C_i|} 2^{p_i^{(Y)}(j)}\le 1$.
    Moreover,
    we set 
    $$\log z_X = \frac1M \left(V + \sum_{i=1}^n\ceil{M\log|C_i|}\right)   \quad \text{and} \quad \log z_Y = \frac1M\left(W + \sum_{i=1}^n\ceil{M\log|C_i|}\right).$$
    By the choice of $M$, we have $\max(z_X,z_Y) \le 2^{\frac1M(\max(V,W)+n)}\prod_{i=1}^{n}|C_i|\le 4\prod_{i=1}^{n}|C_i|$. 
    
    This way, for a string $P$ of length $n$, we have 
    $$\log \P(P,X)=-\frac1M\left(\sum_{i=1}^n\ceil{M\log|C_i|}+\sum_{i=1}^n v(c_{i,P[i]})\right) \ge -\log z_X \; \Longleftrightarrow \; \sum_{i=1}^n
     v(c_{i,P[i]}) \le V$$
    and 
    $$\log \P(P,Y)=-\frac1M\left(\sum_{i=1}^n\ceil{M\log|C_i|}+\sum_{i=1}^n w(c_{i,P[i]})\right) \ge -\log z_Y \; \Longleftrightarrow \; \sum_{i=1}^n
     w(c_{i,P[i]}) \le W.$$
    
    Thus, $P$ is a solution to the constructed instance of the \WC problem with two threshold probabilities, $\frac{1}{z_X}$ and $\frac{1}{z_Y}$,
    if and only if $S = \{c_{i,j}\,:\,P[i]=j\}$ is a solution to the underlying instance of the \MK problem.
    To have a single threshold $z=\max(z_X,z_Y)$, we append an additional position $n+1$ with symbol 1 only, 
    with $p_{n+1}^{(X)}(1)=0$ and $p_{n+1}^{(Y)}(1)=\log z_Y - \log z_X$ provided that $z_X \ge z_Y$,
    and symmetrically otherwise.
    
    If one wants to make sure that the probabilities at each position sum up to exactly one, two further letters can be introduced,
    one of which gathers the remaining probability in $X$ and has probability 0 in $Y$, and the other gathers
    the remaining probability in $Y$, and has probability 0 in $X$. 
   \end{proof} 

  \begin{theorem}\label{thm:lb}
  \WC problem is NP-hard and cannot be solved in:
  \begin{enumerate}
   \item $\Ohstar(z^{o(1)})$ time unless the Exponential Time Hypothesis (ETH) fails;
   \item $\Ohstar(z^{0.5-\varepsilon})$ time for some $\varepsilon>0$, unless there is an $\Ohstar(2^{(0.5-\varepsilon)n})$-time algorithm for the \SubsetSum problem;
    \item $\Ohtilde(R+z^{0.5}\lambda^{0.5-\varepsilon})$ time for some $\varepsilon>0$ and for $n=\Oh(1)$, unless
    there is an $\Oh(\lambda^{2(1-\varepsilon)})$-time algorithm for 3-\Sum.
    \end{enumerate}
  \end{theorem}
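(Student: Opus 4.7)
The plan is to derive all three lower bounds as essentially mechanical corollaries of \cref{lem:red}, which reduces \MK to \WC. The strategy is to start from a hard instance of the source problem, express it as an instance of \MK with the appropriate parameters, feed it through \cref{lem:red}, and then read off the parameters $z$ and $\lambda$ of the resulting \WC instance to see what a hypothetical fast \WC algorithm would buy us. NP-hardness is then immediate since \SubsetSum is NP-hard and the reduction is polynomial-time.

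For parts (1) and (2), I would encode an $n$-variable \SubsetSum instance with integers $a_1,\ldots,a_n$ and target $Q$ as an \MK instance with $n$ classes of size $\lambda=2$: the $i$-th class has one item with $v=w=a_i$ (the integer is taken) and one with $v=w=0$ (it is skipped), and we set $V=Q$ and $W=-Q$ using $w=-v$ on a parallel copy (or, equivalently, split \SubsetSum into the upper and lower inequality and intersect the two \MK runs). \cref{lem:red} then delivers an equivalent \WC instance with sequences of length $\Oh(n)$ over a binary alphabet and $z=\Oh(\prod_i|C_i|)=\Oh(2^n)$. Consequently, an $\Ohstar(z^{o(1)})$-time algorithm for \WC would solve \SubsetSum in $\Ohstar(2^{o(n)})$ time, contradicting ETH by the result of~\cite{DBLP:conf/mfcs/EtscheidKMR15,DBLP:books/daglib/0069796} cited in the introduction, and an $\Ohstar(z^{0.5-\varepsilon})$-time algorithm would give $\Ohstar(2^{(0.5-\varepsilon)n})$ for \SubsetSum.

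For part (3), I would reduce 3-\Sum directly to \MK: given three arrays $A_1,A_2,A_3$ of size $\lambda$, create three classes $C_1,C_2,C_3$, one item per array entry with value equal to that entry, and (by doubling as above) encode the constraint $\sum=0$ via a pair of opposite \MK instances. This gives $n=3$ and $|C_i|=\lambda$, so \cref{lem:red} produces a \WC instance with sequences of length $\Oh(1)$, alphabet size $\lambda$, and $z=\Oh(\lambda^3)$; in particular $R=\Oh(\lambda)$. An $\Ohtilde(R+z^{0.5}\lambda^{0.5-\varepsilon})$-time algorithm for \WC therefore yields an $\Ohtilde(\lambda+\lambda^{1.5}\cdot\lambda^{0.5-\varepsilon})=\Ohtilde(\lambda^{2-\varepsilon})$-time algorithm for 3-\Sum, and absorbing the polylogarithmic factor hidden by $\Ohtilde$ into a slightly smaller exponent turns this into an $\Oh(\lambda^{2(1-\varepsilon')})$-time algorithm for some $\varepsilon'>0$.

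The main obstacle I anticipate is bookkeeping rather than conceptual: \cref{lem:red} reduces \MK (which has \emph{two} upper-bound thresholds $V,W$) from problems that naturally have \emph{equality} constraints (\SubsetSum asks $\sum=Q$, 3-\Sum asks $\sum=0$). I would handle this uniformly by running two \MK instances, one enforcing $\le$ and the other $\ge$ via a negated copy of the weights, and then intersecting the answers; this at most doubles the running time and leaves every complexity bound intact. A secondary point to check carefully is that \cref{lem:red} does not inflate $\lambda$, which it does not, since the alphabet in the constructed weighted sequences has size exactly $\max_i|C_i|$ (plus $\Oh(1)$ padding symbols), so the parameter $\lambda$ from the 3-\Sum reduction is preserved.
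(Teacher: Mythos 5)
Your overall route is the same as the paper's: both treat \SubsetSum and 3-\Sum as special cases of \MK via the folklore encodings and then push the resulting \MK instances through \cref{lem:red}, reading off $z=\Oh(\prod_i|C_i|)$ so that a hypothetical fast \WC algorithm yields fast algorithms for \SubsetSum and 3-\Sum; your parameter bookkeeping for parts (1)--(3) ($\lambda=2$, $z=\Oh(2^n)$ for \SubsetSum; $n=3$, $R=\Oh(\lambda)$, $z=\Oh(\lambda^3)$ for 3-\Sum) and the NP-hardness remark match the paper's proof.

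There is, however, a genuine flaw in how you propose to handle the equality constraints. You offer two formulations and claim they are equivalent, and in your closing paragraph you commit to the wrong one: running two separate \MK instances, one enforcing $\sum\le Q$ and one enforcing $\sum\ge Q$ via negated weights, and ``intersecting the answers.'' Intersecting the YES/NO answers of two independent decision instances does not certify a single choice satisfying both inequalities: one choice may sum to strictly less than $Q$ and a different choice to strictly more than $Q$ while no choice sums to exactly $Q$. The correct (and standard) encoding --- which you also state, with $v(c)=a_i$, $w(c)=-a_i$, $V=Q$, $W=-Q$ --- puts both inequalities into a \emph{single} \MK instance; this is precisely why \MK carries two thresholds, and it is this version that \cref{lem:red} (which explicitly accommodates negative values by normalizing with $v_{\min}$ and $w_{\min}$) converts into one \WC instance whose two probability constraints play the roles of the two thresholds. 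The same correction applies to your 3-\Sum step: ``a pair of opposite \MK instances'' must instead be one instance whose value and weight of each item are the array entry and its negation, with $V=W=0$. With that fix the rest of your argument goes through exactly as in the paper.
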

    \begin{proof}
    We use \cref{lem:red} to derive algorithms for the \MK problem based on hypothetical solutions for \WC.
    \SubsetSum is a special case of \MK with $\lambda=2$, i.e., $\prod_{i}|C_i|=2^n$. Hence,
    an $\Ohstar(z^{o(1)})$-time solution for \WC would yield an $\Ohstar(2^{o(n)})$-time algorithm for \SubsetSum,
    which contradicts ETH by the results of Etscheid et al.~\cite{DBLP:conf/mfcs/EtscheidKMR15} and Gurari~\cite{DBLP:books/daglib/0069796}.
    Similarly, an $\Ohstar(z^{0.5-\varepsilon})$-time solution for \WC would yield an $\Ohstar(2^{(0.5-\varepsilon)n})$-time algorithm for \SubsetSum. 
    Moreover, $k$-\Sum is a special case of \MK with $n=k=\Oh(1)$, i.e., $\prod_{i}|C_i|=\lambda^{k}$.
    Hence, an $\Ohtilde(R+z^{0.5}\lambda^{0.5-\varepsilon})$-time solution for \WC yields
    an $\Oh(\lambda + \lambda^{1.5+0.5-\varepsilon})=\Oh(\lambda^{2-\varepsilon})$-time algorithm for 3-\Sum.
    \end{proof}

    Nevertheless, it might still be possible to improve the dependence on $n$ in the \GWPM problem.
    For example, one may hope to achieve $\Ohtilde(nz^{0.5-\varepsilon}+z^{0.5})$ time for $\lambda=\Oh(1)$.

  \section{Faster \GWPM via Short Dissimilar Weighted Consensus}\label{app:SDWC}
  This section provides a faster solution for the \GWPMFull problem.
  The key ingredient is an improved solution for the following \SDWCFull problem:
     \defdsproblem{\SDWCFull (\SDWC)}{
      A  threshold probability \fr\ and two weighted sequences $X$ and $Y$ of length $n\le 2\floor{\log z}$ with at most $\lambda\le z$ letters at each position
      and such that  $\H(X)$ and $\H(Y)$ are \emph{dissimilar}, i.e., $\H(X)[i] \ne \H(Y)[i]$ for each position~$i$.
    }{
      A string $S$ such that $S \match X$ and $S \match Y$ or NONE if no such string exists.
    }
    
    Note that the instances of the \WC problem produced by the reduction of \cref{sec:GWPMReduction} are actually instances of the \SDWC problem.
    Our tailor-made solution for the \SDWC problem works in $\Oh(\sqrt{z\lambda} (\log\log z + \log \lambda))$ time.    
    It assumes that the letters at each position of the weighted sequences are sorted according to probabilities
    (in addition to storing the dictionary of letters and probabilities).
    This can be achieved in $\Oh(\lambda \log \lambda)$ time for each position.
    We have just proved:

    \begin{lemma}\label{lem:sdwc}
      The \GWPM problem and the computation of its oracle can be reduced
      in $\Oh(n \lambda \log \lambda)$ time to at most $n-m+1$ instances of the \SDWC problem.
    \end{lemma}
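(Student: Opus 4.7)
The plan is to instantiate the generic reduction from \GWPM to \WC that was outlined right before this lemma, but now verifying that each instance produced actually satisfies the stronger hypotheses of \SDWC (length at most $2\floor{\log z}$, dissimilar heavy strings, letters sorted by probability at every position) and that all the bookkeeping fits into $\Oh(n\lambda \log \lambda)$ time overall.

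First I would execute the one-off preprocessing. I compute $T' = \H(T)$ and $P' = \H(P)$ by scanning the lists of $T$ and $P$, and I build the $\lcp$ data structure of \cref{fct:ver} on $P'T'$; both take $\Oh(n+m) = \Oh(n)$ time once the heavy strings are available. In the same preprocessing pass, for every position of $T$ and of $P$ I sort the at most $\lambda$ letters with non-zero probability in order of decreasing probability; over all $n+m$ positions this costs $\Oh((n+m)\lambda \log \lambda) = \Oh(n \lambda \log \lambda)$, which will turn out to be the dominant term.

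Next, for each alignment $p \in \{1,\dots,n-m+1\}$ I use at most $2\floor{\log z}+1$ $\lcp$-queries to walk through the mismatches between $P'$ and $T'[p..p+m-1]$. If more than $2\floor{\log z}$ mismatches are found, \cref{obs:crucial2} says no consensus exists and I discard the alignment. Otherwise, letting $D$ be the set of mismatch offsets, I construct the weighted sequences $X$ and $Y$ exactly as in \cref{sec:GWPMReduction}: restrict $T[p..p+m-1]$ and $P$ to positions in $D$, and multiply all letter probabilities at the first retained position of $X$ by $\alpha'$ and those of $Y$ by $\beta'$ (the accumulated match probabilities outside $D$, updated incrementally as $p$ advances). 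The sorted lists for these positions are inherited directly from the preprocessing, because multiplying all letters at a single position by the same scalar preserves the order. By construction $|D| \le 2\floor{\log z}$, and at each retained position $T'$ and $P'$ disagree, so the heavy strings of $X$ and $Y$ are dissimilar; thus the resulting instance is a valid input to \SDWC. The oracle for the occurrence at position $p$ can be realized by storing $D$ together with the string returned by the \SDWC call and by reading off the common heavy-string letters at the non-mismatch positions in $\Oh(m)$ time per query.

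The only step that feels delicate is bounding the cost of producing the per-window instances without repeating work. The mismatch walk costs $\Oh(\log z)$ per alignment, so $\Oh(n \log z)$ in total; emitting a \SDWC instance takes $\Oh(\lambda \log z)$ time since each of its $\Oh(\log z)$ positions carries a presorted list of $\Oh(\lambda)$ letters (which is merely referenced, not re-sorted). Summed over alignments this is $\Oh(n \lambda \log z)$, and since $\lambda \le z$ the preprocessing cost $\Oh(n\lambda \log \lambda)$ dominates only after observing that we may cap the description length of each emitted instance at $\Oh(\lambda \log \lambda)$ by reusing pointers into the preprocessed sorted lists rather than copying them. With that pointer-based representation, the bound $\Oh(n\lambda \log \lambda)$ claimed by the lemma follows, and the number of emitted \SDWC instances is at most $n-m+1$ as required.
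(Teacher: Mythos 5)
Your proposal is correct and follows essentially the same route as the paper: the paper's proof of this lemma is simply the observation that the reduction already described in \cref{sec:GWPMReduction} (heavy strings, $\lcp$-based mismatch walk, restriction to the mismatch set $D$ with the scalars $\alpha',\beta'$) outputs instances that are short and have dissimilar heavy strings, plus the $\Oh(\lambda\log\lambda)$-per-position sorting required by the \SDWC solver, which is exactly what you verify in more detail (including that scaling a position by a scalar preserves both the heavy letter and the sorted order, and that storing $D$ yields the $\Oh(m)$-time oracle). Your closing accounting remark is slightly more careful than the paper itself, which simply attributes the $\Oh(n\lambda\log\lambda)$ bound to the per-position sorting.
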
     
   
   \subsection{Combinatorial Prerequisites}
    Our improvement upon the algorithm of \cref{thm:knap} is based on \cref{fct:maxprefixes}, 
    whose analogue does not hold for \MK in general.
    Technically, instead of the notion of maximal \fr-solid prefixes, the algorithm
    relies on \emph{light \fr-solid prefixes} defined as follows:
    We say that a string $P$ of length $k$ is a light \fr-solid prefix of a weighted sequence $X$ if $k=0$ or
    $P$ is a \fr-solid prefix of $X$ such that $P[k] \ne \H(X)[k]$. 
    We symmetrically define \emph{light \fr-solid suffixes} of $X$.
    \cref{fct:maxprefixes} lets us bound the number of light solid prefixes.

    \begin{fact}\label{fct:lightprefixes}
      A weighted sequence has at most $z$ different light \fr-solid prefixes.
    \end{fact}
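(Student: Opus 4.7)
The plan is to exhibit an injection from the set of light $\frac{1}{z}$-solid prefixes of $X$ into the set of maximal $\frac{1}{z}$-solid prefixes of $X$, and then invoke \cref{fct:maxprefixes} to obtain the bound $z$.

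Given any light $\frac{1}{z}$-solid prefix $P$ of length $k$, I would define $\phi(P)$ as the (unique) extension of $P$ obtained by greedily appending heavy letters: starting from $P$, repeatedly append $\H(X)[k+1],\H(X)[k+2],\ldots$ as long as the resulting string remains a $\frac{1}{z}$-solid prefix of $X$. The process halts either because we have reached length $|X|$ or because appending any further letter drops the probability below $\frac{1}{z}$ (since $\H(X)[k+i]$ has the largest probability at its position, no other letter could extend it either). Thus $\phi(P)$ is a maximal $\frac{1}{z}$-solid prefix of $X$, and $\phi$ is well-defined.

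The key step is proving that $\phi$ is injective. Suppose $P_1$ and $P_2$ are light $\frac{1}{z}$-solid prefixes with $\phi(P_1) = \phi(P_2) = P^*$, and write $k_i = |P_i|$ with $k_1 \le k_2$. Both $P_1$ and $P_2$ are prefixes of $P^*$. If $k_1 = k_2$, then $P_1 = P_2$ immediately. Otherwise $k_1 < k_2$, and by the construction of $\phi(P_1)$ every position strictly greater than $k_1$ in $P^*$ carries a heavy letter; in particular $P^*[k_2] = \H(X)[k_2]$. But since $P_2$ is a light solid prefix of positive length $k_2$, we have $P_2[k_2] \ne \H(X)[k_2]$, and $P_2[k_2] = P^*[k_2]$ because $P_2$ is a prefix of $P^*$, a contradiction. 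Hence $\phi$ is injective.

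I do not expect a real obstacle here; the only subtlety is handling the edge cases $k=0$ (the empty prefix maps to the all-heavy maximal solid prefix, which collides with no other image by the same argument) and the case where $P$ admits no heavy extension (then $\phi(P)=P$, which is already maximal). Combining injectivity with \cref{fct:maxprefixes} yields the desired bound of $z$.
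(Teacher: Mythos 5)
Your proof is correct and follows essentially the same route as the paper: the paper exhibits a pair of inverse mappings between maximal \fr-solid prefixes and light \fr-solid prefixes (extend by heavy letters / strip trailing heavy letters), and your map $\phi$ is exactly the extension direction, with injectivity argued directly instead of via the explicit inverse. Both arguments then conclude by \cref{fct:maxprefixes}.
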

    \begin{proof}
      We show a pair of inverse mappings between the set of maximal \fr-solid prefixes of a weighted sequence $X$ and the
      set of light \fr-solid prefixes of $X$.
      If $P$ is a maximal \fr-solid prefix of $X$, then we obtain a light \fr-solid prefix 
      by removing all trailing letters of $P$ that are heavy letters at the corresponding positions in $X$. 
      For the inverse mapping, we extend each light \fr-solid prefix by heavy letters as long as the prefix is \fr-solid.
    \mayqed\end{proof}

    We use the notions of light \fr-solid prefixes and light \fr-solid suffixes to express a result 
    that we will use instead of \cref{lem:decomp,fct:bound}.
    
     \begin{lemma}\label{fct:key}
     Consider an instance of the \SDWC problem,
     and let $z_\ell,z_r \ge 1$ be real numbers such that $z_\ell \cdot z_r \ge z$. 
      Every consensus string $S$
      can be decomposed into $S= L \cdot c\cdot  C \cdot R$ such that the following conditions hold for some $U,V\in \{X,Y\}$:
      \begin{itemize}
        \item $L$ is a light $\frac{1}{z_\ell}$-solid prefix of $U$,
        \item $c$ is a single letter,
        \item all characters of $C$ are heavy in $V$,
        \item $R$ is a light $\frac{1}{z_r}$-solid suffix of $V$.
        \end{itemize}
    \end{lemma}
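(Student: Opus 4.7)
The plan is to construct the claimed decomposition $S = L \cdot c \cdot C \cdot R$ explicitly, using two tools: a pointwise product identity and the dissimilarity of the heavy strings. Writing $p_V(k) = \prod_{j=1}^{k} \pi_j^{(V)}(S[j])$ and $q_V(k) = \prod_{j=k}^{n} \pi_j^{(V)}(S[j])$ for $V \in \{X, Y\}$, the consensus assumption $\P(S, V) = p_V(n) \ge \fr$, combined with $z_\ell \cdot z_r \ge z$, yields
\[
  p_V(k-1) \cdot q_V(k+1) = \frac{p_V(n)}{\pi_k^{(V)}(S[k])} \ge p_V(n) \ge \frac{1}{z_\ell \cdot z_r}
\]
for every $k \in \{1, \ldots, n\}$ and $V \in \{X, Y\}$, forcing at least one of $p_V(k-1) \ge \frac{1}{z_\ell}$ or $q_V(k+1) \ge \frac{1}{z_r}$.

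For each $V \in \{X, Y\}$, I define $s_V$ as the smallest index in $\{1, \ldots, n+1\}$ with $q_V(s_V) \ge \frac{1}{z_r}$ and ($s_V > n$ or $S[s_V] \ne \H(V)[s_V]$), and $k_V$ as the largest index $\le s_V - 1$ with $S[k_V] \ne \H(V)[k_V]$ (or $k_V = 0$ if none). The candidate decomposition takes $R := S[s_V..n]$ (a light $\frac{1}{z_r}$-solid suffix of $V$ by construction), $C := S[k_V + 1..s_V - 1]$ (consisting only of $V$-heavy letters by the choice of $k_V$), $c := S[k_V]$, and $L := S[1..k_V - 1]$. Minimality of $s_V$ implies $q_V(k_V) < \frac{1}{z_r}$ (otherwise $k_V$ itself would qualify in the definition of $s_V$), so the product identity yields $p_V(k_V - 1) > \frac{z_r}{z} \ge \frac{1}{z_\ell}$, giving the $\frac{1}{z_\ell}$-solidity of $L$ interpreted as a prefix of $V$.

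The main obstacle is verifying the \emph{light} condition on $L$: that its last letter (when $k_V \ge 2$) is non-heavy in some $U \in \{X, Y\}$. Choosing $U = V$ works whenever $S[k_V - 1] \ne \H(V)[k_V - 1]$; in the delicate case $S[k_V - 1] = \H(V)[k_V - 1]$, the dissimilarity $\H(X)[k_V-1] \ne \H(Y)[k_V - 1]$ forces $S[k_V - 1] \ne \H(\bar V)[k_V - 1]$, so the light condition is satisfied by $U = \bar V$, but the $\frac{1}{z_\ell}$-solidity $p_{\bar V}(k_V - 1) \ge \frac{1}{z_\ell}$ must then be re-derived. I will complete the argument by choosing the pair $(V, U)$ adaptively: if the candidate for $V$ leaves the light condition unsatisfied with $U = V$, I will show that a combined application of the product identity at the position $k_V$ in the sequence $\bar V$ together with the dissimilarity at position $k_V - 1$ yields the required bound on $p_{\bar V}(k_V - 1)$, or alternatively that the symmetric construction based on $\bar V$ succeeds. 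Degenerate cases—in particular $k_V = 0$ for both $V$, which by dissimilarity at position $1$ forces $s_V = 1$ for at least one $V$ and reduces to a trivial decomposition with empty $L$ and $C$—are verified directly.
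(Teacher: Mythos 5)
The suffix side of your construction is sound: $R$, $C$, $c$ are well defined, and minimality of $s_V$ together with the product identity indeed gives $p_V(k_V-1) > z_r/z \ge \frac{1}{z_\ell}$, so $L$ is a $\frac{1}{z_\ell}$-solid prefix of $V$. But the proof stops exactly at the step that carries the content of the lemma: making $L$ \emph{light}. In the delicate case $S[k_V-1]=\H(V)[k_V-1]$ you only promise a fix, and the first route you sketch cannot work: the problematic sub-case is precisely the one where $p_{\bar V}(k_V-1) < \frac{1}{z_\ell}$, and then the product identity at position $k_V$ in $\bar V$ merely gives $q_{\bar V}(k_V) > z_\ell/z$ --- it restates the difficulty; no combination with dissimilarity at position $k_V-1$ can yield $p_{\bar V}(k_V-1)\ge \frac{1}{z_\ell}$, since that is exactly what fails there. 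What would actually save your plan is only the second, unproven route: the two symmetric constructions cannot both fail. Indeed, failure for $V$ forces $p_{\bar V}(k_V-1) < \frac{1}{z_\ell}$, while the construction for $\bar V$ guarantees $p_{\bar V}(k_{\bar V}-1) > \frac{1}{z_\ell}$; since prefix probabilities are non-increasing, failure for $X$ gives $k_X > k_Y$ and failure for $Y$ gives $k_Y > k_X$, a contradiction. Without this (or an equivalent) argument, and without handling the degenerate cases $k_V\le 1$ (where your $c=S[k_V]$ or $L$ is undefined), the proof is incomplete.

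For comparison, the paper avoids the case analysis altogether by building the decomposition from the left: it takes $L$ to be the longest proper prefix of $S$ that is a $\frac{1}{z_\ell}$-solid prefix of \emph{both} $X$ and $Y$. Dissimilarity of $\H(X)$ and $\H(Y)$ then immediately makes $L$ light in one of them (that is the only place dissimilarity is needed); the sequence $V$ in which $S[1..|L|+1]$ fails to be $\frac{1}{z_\ell}$-solid yields, via $z_\ell z_r\ge z$, that $S[|L|+2..n]$ is a $\frac{1}{z_r}$-solid suffix of $V$, and peeling off the maximal run of $V$-heavy letters produces $C$ and a light $R$. If you wish to keep your right-to-left construction, you must add the cross-comparison argument above; otherwise, requiring solidity of $L$ in both sequences, as the paper does, is the cleaner repair.
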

      \begin{proof}
        We set $L$ as the longest proper prefix of $S$ which is a $\frac{1}{z_{\ell}}$-solid prefix of both $X$ and $Y$,
        and we define $k := |L|$.  Note that $L$ is a light $\frac{1}{z_{\ell}}$-solid prefix of $X$ or $Y$, because $\H(X)$ and $\H(Y)$
        are dissimilar. If $k=n-1$, we conclude the proof setting $c=S[n]$ and $C=R$ to empty strings.
        
        Otherwise, we have $\P(S[1..k+1],V[1..k+1])<\frac{1}{z_\ell}$ for $V=X$ or $V=Y$.
        Since $\P(S,V)\ge \fr$ and $z_\ell \cdot z_r \ge z$, this implies
        $\P(S[k+2..n],V[k+2..n])\ge \frac{1}{z_r}$, i.e., that $C\cdot R = S[k+2..n]$ is a $\frac{1}{z_r}$-solid suffix of $V$.
        We set $C$ as the longest prefix of $S[k+2..n]$ composed of letters heavy in $V$.
        This way $R$ is clearly a light $\frac{1}{z_r}$-solid suffix of $V$.
      \end{proof}

      \subsection{Computing Light Solid Prefixes}
    
    We say that a string $P$ is a common \fr-solid prefix (suffix) of $X$ and $Y$
    if it is a \fr-solid prefix (suffix) of both $X$ and $Y$.
    A \emph{standard representation} of a common \fr-solid prefix $P$ of length $k$ of $X$ and $Y$ is a triple $(P,p_1,p_2)$
    such that $p_1$ and $p_2$ are the probabilities $p_1 = \P(P,X[1..k])$ and $p_2 = \P(P,Y[1..k])$.
    The string $P$ is written using variable-length encoding so that a letter that occurs at
    a given position with probability $p$ in $X$ or $Y$ has a representation that consists of $\Oh(\log\frac1p)$ bits. For every position $i$, the encoding can be constructed as follows: we sort letters $c$ according to $\max(\pi_i^{(X)}(c), \pi_i^{(Y)}(c))$ and assign subsequent integer identifiers according to this order.
     This lets us store a \fr-solid factor using $\Oh(\log z)$ bits: 
     we concatenate the variable-length representations of its letters and we store a bit mask of size $\Oh(\log z)$ 
     that stores the delimiters between the representations of single letters.
    An analogous representation can be applied also to common \fr-solid suffixes.
    Our assumptions on the model of computations imply that the standard representation takes constant space.
    Moreover, constant time is sufficient to extend a common \fr-solid prefix by a given letter.

      The following observation describes longer light solid prefixes in terms of shorter ones.
      \begin{observation}\label{obs:light_step}
        Let $P$ be a non-empty light \fr-solid prefix of $X$.
        If one removes its last letter and then removes all the trailing letters which are heavy at the respective
        positions in $X$, then a shorter light \fr-solid prefix of $X$ is obtained.
      \end{observation}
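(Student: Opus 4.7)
The plan is a direct verification from the definition of a light $\frac{1}{z}$-solid prefix. Let $P$ have length $k\ge 1$, and let $Q$ denote the string of length $j$ obtained after the two-stage removal: first removing $P[k]$, then stripping all trailing letters that are heavy at their respective positions in $X$. I need to check that (a) $Q$ is a $\frac{1}{z}$-solid prefix of $X$, (b) $Q$ is \emph{light}, and (c) $|Q| < |P|$.

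Parts (b) and (c) are immediate from the construction. The length strictly decreases because the first stage alone already removes the final character $P[k]$. For lightness, either $j = 0$ (in which case $Q$ is the empty string, which is a light $\frac{1}{z}$-solid prefix by definition) or $j > 0$, and then the second stage halts precisely because $Q[j] = P[j] \neq \H(X)[j]$.

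The only quantitative claim is (a). Here I would observe that
\[
\P(P,X[1..k]) \;=\; \P(Q,X[1..j]) \cdot \prod_{i=j+1}^{k} \pi_i^{(X)}(P[i]),
\]
and every factor in the product on the right lies in $[0,1]$. Since $P$ is $\frac{1}{z}$-solid, the left-hand side is at least $\frac{1}{z}$, so $\P(Q,X[1..j]) \ge \frac{1}{z}$ as well.

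There is no real obstacle; the statement is essentially a bookkeeping lemma tailored to the recursive structure used in the subsequent algorithm for enumerating light $\frac{1}{z}$-solid prefixes. The only minor subtlety worth flagging is that one must invoke the assumption that all individual letter probabilities are at most $1$ (which follows from the weighted-sequence definition, where probabilities at each position sum to at most $1$) to justify that removing letters can only increase the matching probability.
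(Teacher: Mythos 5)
Your verification is correct and is precisely the argument the paper treats as immediate: the paper states this observation without proof, and your three checks (monotonicity of the matching probability under deleting letters with probabilities in $[0,1]$, lightness from where the stripping halts or from the empty-string case, and the strict length decrease from removing $P[k]$) are exactly the intended justification. No issues.
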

	We build upon \cref{obs:light_step} to derive an efficient algorithm constructing light solid prefixes.%
    \begin{lemma}\label{lem:lightprefixes_algo}
      Let $(X,Y,\fr)$ be an instance of the \SDWC problem and let $z'\le z$.
      All common \fr-solid prefixes of $X$ and $Y$ being light $\frac{1}{z'}$-solid prefixes of $X$,
      sorted first by their length and then by the probabilities in $X$,
      can be generated in $\Oh(z' (\log \log z+\log \lambda)+\log^2 z)$ time.
    \end{lemma}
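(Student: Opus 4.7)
The plan is to enumerate the desired prefixes via a lazy best-first traversal of the tree of light $\frac{1}{z'}$-solid prefixes of $X$ whose parent-child relation is implicit in \cref{obs:light_step}: each non-empty node $P$ of length $k$ is obtained from a unique parent $P'$ of length $j < k$ by appending the $X$-heavy letters at positions $j{+}1,\ldots,k{-}1$ followed by one non-heavy letter of $X$ at position $k$. Candidates failing the \fr-solidity requirement in $Y$ are pruned; since \cref{fct:lightprefixes} caps the number of light $\frac{1}{z'}$-solid prefixes of $X$ at $z'$, the number of survivors is bounded by the same quantity.

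First, I would spend $\Oh(\log^2 z)$ preprocessing time. For each of the $n = \Oh(\log z)$ positions I sort the non-heavy letters of $X$ by descending $\pi^{(X)}_i$ (an $\Oh(\lambda\log\lambda)$ cost per position, absorbed into the main-loop budget when $\lambda\log\lambda$ is non-trivial), install constant-time dictionaries for $\pi^{(X)}_i$ and $\pi^{(Y)}_i$ using the techniques from \cref{thm:wpm}, and tabulate the running products $\prod_{i'\le i}\pi^{(X)}_{i'}(\H(X)[i'])$ and $\prod_{i'\le i}\pi^{(Y)}_{i'}(\H(X)[i'])$. This allows the $X$- and $Y$-probabilities of any candidate $(P', j, k, c)$ to be computed in $\Oh(1)$ from a constant-space handle to $P'$. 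I then run a global priority queue $Q$ keyed by $(|P|, -\log p_X(P))$; because at each moment every live parent contributes only a single active candidate, and there are at most $n = \Oh(\log z)$ live parents, $|Q|$ stays $\Oh(\log z)$, so a binary heap supports each operation in $\Oh(\log\log z)$.

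The main obstacle is bounding the total heap activity by $\Oh(z')$ despite the need to skip candidates that violate either the $\frac{1}{z'}$-solidity in $X$ or the \fr-solidity in $Y$. Within a fixed parent-position pair the non-heavy letters are examined in descending $\pi^{(X)}_k$ order, so as soon as the current letter falls below the $\frac{1}{z'}$ threshold in $X$ the entire remainder of that position can be closed in one step. Failures in $Y$ are each charged to a distinct letter-iterator advance on the same parent-position pair; these advances telescope to $\Oh(z')$ across the whole execution because every advance either eventually produces a new light prefix (of which there are $\le z'$) or conclusively closes the (parent, position) combination. Combining the $\Oh(\log\log z)$ cost per global heap operation with the $\Oh(\log\lambda)$ cost of maintaining each parent's internal priority over positions (needed to switch to the next position once the current one is closed) yields the claimed running time of $\Oh(z'(\log\log z + \log\lambda) + \log^2 z)$.
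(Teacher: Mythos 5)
Your high-level plan is the paper's: the implicit tree from \cref{obs:light_step}, lazy generation ordered by length and then by probability in $X$, pruning of candidates that fail the threshold in $Y$, and charging the surviving tests to the at most $z'$ light $\frac{1}{z'}$-solid prefixes guaranteed by \cref{fct:lightprefixes}. The genuine gap is in how you account for the tests that \emph{fail} the $\frac{1}{z'}$ threshold in $X$. You close a (parent, position) pair as soon as its heaviest remaining non-heavy letter fails, and you charge each failed advance to that pair; but a parent stays live for all later positions, and a failure at position $k$ does not imply failure at a position $k'>k$ (the heavy letters at positions $k,\ldots,k'-1$ may have probability close to $1$ while the rejected non-heavy letter at $k$ was tiny), so the parent must indeed probe every later position. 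Hence the wasted work is only bounded by one test per (parent, position) pair, i.e.\ $\Theta(z'\cdot n)=\Theta(z'\log z)$ in the worst case (already for $\lambda=\Oh(1)$, e.g.\ uniform probabilities $\frac12$ with $z'=\sqrt z$, each of the $\approx z'$ parents wastes $\Theta(\log z)$ probes), which exceeds the claimed $\Oh(z'(\log\log z+\log\lambda)+\log^2 z)$; your statement that these advances ``telescope to $\Oh(z')$'' is not justified. The paper's proof avoids exactly this: for each target length $k$ it scans each class $\B_i$ of parents in non-increasing probability in $X$, so once the heaviest candidate letter fails for some parent it fails for every later parent of $\B_i$, and the whole class is abandoned for this length. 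This caps the wasted tests at $\Oh(1)$ per (class, length) pair, i.e.\ $\Oh(\log^2 z)$ overall, which is where the additive $\log^2 z$ term really comes from.

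Two further points. First, your heap-size claim ``at most $n=\Oh(\log z)$ live parents, so $|Q|=\Oh(\log z)$'' is unsupported: every light prefix generated so far is a potential parent, so up to $z'$ of them can be live simultaneously, and a per-parent ``priority over positions'' does not by itself deliver the length-$k$ outputs sorted by probability in $X$. The streams that must be merged for length $k$ are indexed by (parent length, letter) — the paper's $k\lambda$ lists $L_{i,s}$ — and merging them with a heap of size $\Oh(k\lambda)$ is what yields the $\Oh(\log\log z+\log\lambda)$ cost per output element; your $\log\lambda$ term is attached to the wrong object. Second, the per-position sorting of letters by probability should not be ``absorbed into the main loop'' (it need not fit when $z'\ll\lambda$); in the paper it is an assumption of the \SDWC setup, charged in \cref{lem:sdwc}, and you should invoke it as such.
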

      \begin{proof}
        For $k \in \{0,\ldots,n\}$, 
        let $\B_k$ be a list of the requested solid prefixes of length $k$ sorted by their probabilities $p_1$ in $X$.
        \cref{fct:lightprefixes} guarantees that $\sum_{k=0}^n |\B_k| \le z'$.
        
        We compute the lists $\B_k$ for subsequent lengths $k$.
        We start with $\B_0$ containing the empty string with its probabilities $p_1=p_2=1$.
        To compute $\B_k$ for $k>0$, we use \cref{obs:light_step}.
        We consider all candidates $i=k-1,\ldots,0$ for the length of the shorter light $\frac{1}{z'}$-solid prefix,
        and then all letters $s\ne \H(X)[k]$ to put at position $k$ of the new  light $\frac{1}{z'}$-solid prefix.
        
        For a given $i$, we iterate over all elements $(P,p_1,p_2)$ of $\B_i$ ordered by the non-increasing probabilities $p_1$, 
        and try to extend each of them by the heavy letters in $X$ at positions $i+1,\ldots,k-1$ and by the letter $s$ at position $k$.
        We process the letters $s$ ordered by $\pi_k^{(X)}(s)$, ignoring the first one ($\H(X)[k]$) and stopping as soon as we do not get a $\frac{1}{z'}$-solid prefix of $X$. 
                
        More precisely, with $X'=\H(X)$, we compute
        $$p'_1:=p_1 \cdot \prod_{j=i+1}^{k-1} \pi^{(X)}_j(X'[j]) \cdot \pi^{(X)}_k(s)\quad\mbox{and}\quad
          p'_2:=p_2 \cdot \prod_{j=i+1}^{k-1} \pi^{(Y)}_j(X'[j]) \cdot \pi^{(Y)}_k(s),$$
        check if $p'_1 \ge \frac{1}{z'}$ and $p'_2 \ge \frac1z$, and, if so, insert
        $(P \cdot X'[i+1..k-1] \cdot s,p'_1,p'_2)$ at the beginning of a new list $L_{i,s}$, indexed both by the letter $s$
        and by the length $i$ of the shorter light $\frac{1}{z'}$-solid prefix.      
        When we encounter an element $(P,p_1,p_2)$ of $\B_i$ and a letter $s$ for which $p'_1 < \frac{1}{z'}$, we proceed to the next element of $\B_i$.
        If this happens for the heaviest letter $s\ne \H(X)[k]$, we stop considering the current list $\B_i$ and proceed to $\B_{i-1}$.
        The final step consists in merging all the $k\lambda$ lists $L_{i,s}$ in the order of probabilities in $X$;
        the result is $\B_k$.

        Let us analyse the time complexity of the $k$-th step of the algorithm.
        If an element $(P,p_1,p_2)$ and letter $s$ that we consider satisfy $p'_1 \ge \frac{1}{z'}$, this accounts for a new light $\frac{1}{z'}$-solid prefix of $X$.
        Hence, in total (over all steps) we consider $\Oh(z')$ such elements.
        Note that some of these elements may be discarded due to the condition on $p'_2$.
        
        For each inspected element $(P,p_1,p_2)$, we also consider at most one letter $s$ for which $p'_1$ is not sufficiently large.
        If this is not the only letter considered for this element, such candidates can be charged to the previous class.
        The opposite situation may happen once for each list $\B_i$, which may give $\Oh(k)$ additional operations in the $k$-th step,
        $\Oh(\log^2 z)$ in total.
        
        Thanks to the order in which the lists are considered, the products of probabilities
        $\prod_{j=i+1}^{k-1} \pi^{(X)}_j(X'[j])$, $\prod_{j=i+1}^{k-1} \pi^{(Y)}_j(X'[j])$ and factors $X'[i+1..k-1]$ can be stored
        so that the representation of each subsequent light \frsq-solid prefix of length $k$ is computed in $\Oh(1)$ time.
        Finally, the merging step in the $k$-th phase takes $\Oh(|\B_k|\log(k\lambda)) = \Oh(|\B_k| (\log \log z+\log \lambda))$ time
        if a binary heap is used.

        The time complexity of the whole algorithm is $\Oh(\log^2 z + \sum_{k=1}^{n}|\B_k| (\log \log z+\log \lambda))$.
        By the already mentioned \cref{fct:lightprefixes}, this is $\Oh(\log^2 z+z' (\log \log z+\log \lambda))$.
      \mayqed\end{proof}

    \subsection{Merge-in-the-Middle Implementation}
    In this section we apply \cref{fct:key} to solve the \SDWC problem.
    We use \cref{lem:lightprefixes_algo} to generate all candidates for $L\cdot c$ and $R$,
    and we apply a divide-and-conquer procedure to fill this with $C$.
    Our procedure works for fixed $U,V\in \{X,Y\}$; the algorithm repeats it for all four choices.
    
      Let $\L_i$ denote a list of all common \fr-solid prefixes of $X$ and $Y$ obtained by extending
      a light $\frac{\sqrt{\lambda}}{\sqrt{z}}$-solid prefix of $U$ of length $i-1$ by a single letter $s$ at position $i$,
      and let $\R_i$ denote a list of all common $\frac{1}{z}$-solid suffixes of $X$ and $Y$ of length $n-i+1$ that are light $\frac1{\sqrt{z\lambda}}$-solid suffixes of $V$.  We assume that the lists $\L_i$ and $\R_i$ are sorted according to the probabilities in $U$ and $V$, respectively.

   \begin{lemma}\label{lem:L_R}
      The lists $\L_i$ and $\R_i$ for $i \in \{1,\ldots,n+1\}$ can be computed in $\Oh(\sqrt{z\lambda} (\log \log z+\log \lambda))$ time.
      Their total size is $\Oh(\sqrt{z \lambda})$.
    \end{lemma}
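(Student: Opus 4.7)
The plan is to compute $\R_i$ directly via the suffix analogue of \cref{lem:lightprefixes_algo}, and to compute $\L_i$ by a two-stage procedure: first generate the relevant light solid prefixes of $U$, then extend them by one character and merge. The main leverage is \cref{fct:lightprefixes}, which bounds the number of light $\frac{1}{z'}$-solid prefixes (or suffixes) of a weighted sequence by $z'$, so setting $z' = \sqrt{z/\lambda}$ on the prefix side and $z' = \sqrt{z\lambda}$ on the suffix side keeps each intermediate set of size $\Oh(\sqrt{z\lambda})$ after extension.

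For $\R_i$, I would apply the suffix version of \cref{lem:lightprefixes_algo} to $V$ as the distinguished sequence and the other of $X,Y$, with $z' = \sqrt{z\lambda}$. This immediately returns all common $\frac{1}{z}$-solid suffixes of $X$ and $Y$ that are light $\frac{1}{\sqrt{z\lambda}}$-solid suffixes of $V$, grouped by length and sorted within each length by probability in $V$, in time $\Oh(\sqrt{z\lambda}(\log\log z + \log\lambda) + \log^2 z)$. Each group is precisely some $\R_i$.

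For $\L_i$, stage one applies \cref{lem:lightprefixes_algo} to $U$ and the other sequence with $z' = \sqrt{z/\lambda}$, yielding all common $\frac{1}{z}$-solid prefixes of $X$ and $Y$ that are light $\sqrt{\lambda/z}$-solid in $U$, sorted by length and by probability in $U$; let $m_k$ be the count at length $k$, so by \cref{fct:lightprefixes} we have $\sum_k m_k \le \sqrt{z/\lambda}$. Stage two, for every $k$ with $m_k>0$, extends each such prefix by every letter at position $k+1$ and retains the extensions that remain $\frac{1}{z}$-solid in both $X$ and $Y$; these form $\L_{k+1}$. The total candidate count is $\sum_k m_k\lambda = \Oh(\sqrt{z\lambda})$, which also bounds $\sum_i |\L_i|$, matching the claimed total size (together with the $\R_i$ bound).

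The main obstacle is sorting each $\L_{k+1}$ by probability in $U$ within the tight budget: a naive heap over the $m_k$ prefixes would cost $\log m_k$ per step, which can reach $\log z$. I would instead index merge-iterators by letter rather than by prefix: since multiplying each prefix's probability by the fixed factor $\pi_{k+1}^{(U)}(s)$ preserves order, each of the $\lambda$ letters at position $k+1$ contributes one sorted sublist of length $m_k$, and the letters can be pre-sorted by probability in $U$ during preprocessing. Merging $\lambda$ sorted lists with a size-$\lambda$ binary heap costs $\Oh(\log\lambda)$ per extraction, with invalid extensions simply discarded and $\Oh(\lambda)$ spent on initialization. Since only $\Oh(\sqrt{z/\lambda})$ values of $k$ carry $m_k>0$, total initialization sums to $\Oh(\sqrt{z\lambda})$ and total merging to $\Oh(\sqrt{z\lambda}\log\lambda)$, which together with stage one gives the desired $\Oh(\sqrt{z\lambda}(\log\log z + \log\lambda))$ bound.
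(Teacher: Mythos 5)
Your proposal is correct and follows essentially the same route as the paper: $\R_i$ comes directly from the suffix version of \cref{lem:lightprefixes_algo} with threshold $\frac{1}{\sqrt{z\lambda}}$, and $\L_i$ is obtained by generating the light $\frac{\sqrt{\lambda}}{\sqrt{z}}$-solid prefixes of $U$ via \cref{lem:lightprefixes_algo} and then extending by one letter and merging the $\lambda$ per-letter sorted sublists with a size-$\lambda$ binary heap at $\Oh(\log\lambda)$ per element. Your extra remarks (order preservation under multiplication by a fixed $\pi^{(U)}_{k+1}(s)$ and the explicit $\sum_k m_k\lambda=\Oh(\sqrt{z\lambda})$ size accounting) only make explicit what the paper leaves implicit.
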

      \begin{proof}
        $\Oh(\sqrt{z\lambda} (\log \log z+\log\lambda))$-time computation of the lists $\R_i$ is directly due to \cref{lem:lightprefixes_algo}.
        As for the lists $\L_i$, we first compute in $\Oh(\frac{\sqrt{z}}{\sqrt{\lambda}}(\log \log z+\log\lambda))$ time the lists of all light $\frac{\sqrt{\lambda}}{\sqrt{z}}$-solid prefixes of $U$, sorted by the lengths of strings and then by the probabilities in $U$, again using \cref{lem:lightprefixes_algo}.
        Then for each length $i-1$ and for each letter $s$ at the $i$-th position, we extend all these prefixes by a single letter.
        This way we obtain $\lambda$ lists for a given $i-1$ that can be merged according to the probabilities in $U$ to form the list~$\L_i$.
        Generation of the auxiliary lists takes $\Oh(\frac{\sqrt{z}}{\sqrt{\lambda}}\cdot \lambda)=\Oh(\sqrt{z\lambda})$ time in total,
        and merging them using a binary heap takes $\Oh(\sqrt{z\lambda} \log \lambda)$ time.
        This way we obtain an $\Oh(\sqrt{z\lambda} (\log \log z+\log\lambda))$-time algorithm.
      \mayqed\end{proof}

      Let $\Ls_{a,b}$ be a list of common \fr-solid prefixes of $X$ and $Y$ of length $b$
      obtained by taking a common \fr-solid prefix from $\L_i$ for some $i \in \{a,\ldots,b\}$
      and extending it by $b-i$ letters that are heavy at the respective positions in $V$.
      Similarly, $\Rs_{a,b}$ is a list of common \fr-solid suffixes of length $n-a+1$
      obtained by taking a common \fr-solid suffix from $\R_i$ for some $i \in \{a,\ldots,b\}$
      and prepending it by $i-a$ letters that are heavy in $V$.
      Again, we assume that each of the lists $\Ls_{a,b}$ and $\Rs_{a,b}$ is sorted according to the probabilities in $U$ and $V$, respectively.
      
      A \emph{basic interval} is an interval $[a,b]$ represented by its endpoints $1 \le a \le b \le n+1$ such that
      $2^j \mid a-1$ and $b=\min(n+1,a+2^j-1)$ for some integer $j$ called the \emph{layer} of the interval.
      For every $j=0,\ldots,\ceil{\log n}$, there are $\Theta(\frac{n}{2^j})$ basic intervals and they are pairwise disjoint.
      
      \begin{example}
        For $n=7$, the basic intervals are $[1,1],\ldots,[8,8],[1,2],[3,4],[5,6],[7,8],\allowbreak[1,4],[5,8],[1,8]$.
      \end{example}

      \begin{lemma}\label{lem:Ls_Rs}
        The lists $\Ls_{a,b}$ and $\Rs_{a,b}$ for all basic intervals $[a,b]$ use $\Oh(\sqrt{z\lambda}\log\log z)$ space
        and can be constructed in $\Oh(\sqrt{z\lambda}(\log\log z+\log \lambda))$ time.
      \end{lemma}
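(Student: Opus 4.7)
The plan is to build the lists bottom-up over the layers of basic intervals, starting from the base lists $\L_i$ and $\R_i$ produced by \cref{lem:L_R}. The base case $j=0$ is trivial: $\Ls_{a,a}=\L_a$ and $\Rs_{a,a}=\R_a$. For a basic interval $[a,b]$ at layer $j>0$, I would set $m=a+2^{j-1}-1$ and split it into the left child $[a,m]$ and right child $[m+1,b]$, both basic intervals at layer $j-1$ (the latter possibly shorter than $2^{j-1}$). Classifying every element of $\Ls_{a,b}$ by the index $i\in[a,b]$ of the originating $\L_i$ yields a clean decomposition: entries with $i\in[a,m]$ come from $\Ls_{a,m}$ extended by the fixed heavy string $\H(V)[m+1..b]$, while entries with $i\in[m+1,b]$ are already in $\Ls_{m+1,b}$. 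The construction of $\Rs_{a,b}$ is symmetric, with the right child's list being prepended by $\H(V)[a..m]$.

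To perform the merge efficiently, I would first precompute, for every basic interval $[a,b]$, the heavy fragment $\H(V)[a..b]$ in the variable-length encoding together with the products $\alpha_U(a,b)=\prod_{i=a}^b \pi_i^{(U)}(\H(V)[i])$ and $\alpha_V(a,b)=\prod_{i=a}^b \pi_i^{(V)}(\H(V)[i])$; each is assembled from its two children in $\Oh(1)$ time, for $\Oh(n)$ time overall. Then, for each $[a,b]$ at layer $j>0$, I scan $\Ls_{a,m}$ once, multiplying the stored probabilities by $\alpha_U(m+1,b)$ and $\alpha_V(m+1,b)$, appending the encoding of $\H(V)[m+1..b]$ in $\Oh(1)$ time, and discarding entries whose new probability in $X$ or $Y$ falls below $\frac{1}{z}$. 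Because multiplication by a positive constant preserves the order by probability in $U$, the surviving list remains sorted and can be merged with $\Ls_{m+1,b}$ in linear time.

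For the size analysis, each entry of $\Ls_{a,b}$ descends from a unique element of some $\L_i$ with $i\in[a,b]$, and each index $i$ belongs to exactly one basic interval per layer, so $\sum_{[a,b]\text{ at layer }j}|\Ls_{a,b}|\le\sum_i |\L_i|=\Oh(\sqrt{z\lambda})$ by \cref{lem:L_R}. With $\Oh(\log n)=\Oh(\log\log z)$ layers in view of $n\le 2\floor{\log z}$, this delivers the claimed $\Oh(\sqrt{z\lambda}\log\log z)$ space bound. Since the merge-and-filter step for a single basic interval runs in time linear in the combined sizes of its two children, and each layer-$(j-1)$ list is used as a child of at most two layer-$j$ intervals, the total work across all layers is $\Oh(\sqrt{z\lambda}\log\log z)$; adding the $\Oh(\sqrt{z\lambda}(\log\log z+\log\lambda))$ cost of \cref{lem:L_R} then yields the running time stated in the lemma.

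The subtle point I expect will require the most care is the filtering after scaling: multiplying an element of $\Ls_{a,m}$ by $\alpha_U(m+1,b)$ and $\alpha_V(m+1,b)$ may drop either probability below $\frac{1}{z}$, so $\Ls_{a,m}$ cannot simply be inherited wholesale and the inequality $|\Ls_{a,b}|\le |\Ls_{a,m}|+|\Ls_{m+1,b}|$ can be strict. The filter itself is just a sequential scan through the child list, and since each left-child list is charged to at most one parent per layer, the amortised cost stays within budget with no extra data structure needed.
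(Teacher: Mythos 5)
Your proposal matches the paper's proof in essence: the same bottom-up construction over layers of basic intervals, extending the left child's list by the heavy fragment of $V$ with rescaled probabilities, filtering out entries that drop below $\frac1z$, merging with the right child's list in linear time, and bounding space and time by noting that each layer's total list length is at most that of the previous layer (equivalently, at most $\sum_i|\L_i|=\Oh(\sqrt{z\lambda})$) over $\Oh(\log n)=\Oh(\log\log z)$ layers. Your explicit precomputation of the per-interval probability products is only a minor packaging difference from the paper's remark that these products take $\Oh(\log z)$ total time.
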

      \begin{proof}
        We compute all the lists $\Ls_{a,b}$ and $\Rs_{a,b}$ for consecutive layers $j=0,\ldots,\ceil{\log n}$ of basic intervals $[a,b]$.
        For $j=0$, we have $\Ls_{a,a} = \L_a$ and $\Rs_{a,a} = \R_a$.
        Suppose that we wish to compute $\Ls_{a,b}$ for $a<b$ at layer $j$ (the computation of $\Rs_{a,b}$ is symmetric).
        Take $c=a+2^{j-1}-1$.
        Let us iterate through all the elements $(P,p_1,p_2)$ of the list $\Ls_{a,c}$, extend each string $P$ by $\H(V)[c+1..b]$,
        and multiply the probabilities $p_1$ and $p_2$ by
        $$\prod_{i=c+1}^{b} \pi^{(X)}_i(\H(V)[i]) \quad\mbox{and}\quad \prod_{i=c+1}^{b} \pi^{(Y)}_i(\H(V)[i]),$$
        respectively.
        If a common \fr-solid prefix is obtained, it is inserted at the end of an auxiliary list $L$.
        The resulting list $L$ is merged with $\Ls_{c+1,b}$ according to the probabilities in $U$; the result is $\Ls_{a,b}$.

        Thus, we can compute $\Ls_{a,b}$ in time proportional to the sum of lengths of $\Ls_{a,c}$ and $\Ls_{c+1,b}$.
        (Note that the necessary products of probabilities can be computed in $\Oh(n) = \Oh(\log z)$ total time.)
        For every $j=1,\ldots,\ceil{\log n}$, the total length of the lists from the $j$-th layer
        does not exceed the total length of the lists from the $(j-1)$-th layer.
        By \cref{lem:L_R}, the lists at the $0$-th layer have size $\Oh(\sqrt{z\lambda})$.
        The conclusion follows from the fact that $\log n = \Oh(\log\log z)$.
      \mayqed\end{proof}

 	  Next, we provide an analogue of \cref{lem:knap2}.
      \begin{lemma}\label{lem:meet}
        Let $L$ and $R$ be lists containing, for some $k\in\{0,\ldots,n\}$, standard representations of common \fr-solid prefixes of length $k$ and
        common \fr-solid suffixes of length $n-k$ of $X$ and $Y$.
        If the elements of each list are sorted according to non-decreasing probabilities in $X$ or $Y$,
        one can check in $\Oh(|L|+|R|)$ time whether the concatenation of any \fr-solid prefix from $L$
        and \fr-solid suffix from $R$ yields a string $S$ such that $S \match X$ and $S \match Y$.
      \end{lemma}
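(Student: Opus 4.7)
The plan is to reduce this to a two-class instance of the \MK problem and invoke \cref{lem:knap2}. For each triple $(P,p_1,p_2)\in L$, I introduce an item with value $v=-\log p_1$ and weight $w=-\log p_2$, and I do the same for every $(Q,q_1,q_2)\in R$; both thresholds are set to $\log z$. By multiplicativity of matching probabilities, the concatenation $P\cdot Q$ satisfies $PQ\match X$ and $PQ\match Y$ precisely when the corresponding pair of items witnesses a YES answer to the resulting \MK instance with $n'=2$ classes. This translates the existence question directly into the format handled by \cref{lem:knap2}.

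The main obstacle is a small mismatch with the hypothesis of \cref{lem:knap2}, which assumes both classes to be sorted by values $v$, whereas in our setting each of $L$ and $R$ may instead be sorted by the other coordinate (probabilities in $Y$, i.e., by $w$). I would argue that this causes no loss. The first step of the proof of \cref{lem:knap2} is a linear scan that discards dominated items, and this scan can be performed from either of the two monotone orderings: a single pass from one end, while maintaining the extreme value of the other coordinate seen so far, extracts exactly the Pareto frontier. After this cleanup the surviving items of each class are totally ordered both by $v$ ascending and by $w$ descending, so I can relabel them so that the precondition of \cref{lem:knap2} is met, without any asymptotic overhead.

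Once both lists have been reduced to their Pareto frontiers, I would apply the two-pointer procedure from the proof of \cref{lem:knap2}: processing the items of $L$ in order of non-decreasing $v$ while advancing a pointer through $R$ maintains, for each $P\in L$, the candidate $Q\in R$ minimising $w(Q)$ subject to $v(P)+v(Q)\le \log z$; it then suffices to test $w(P)+w(Q)\le \log z$. The total running time is $\Oh(|L|+|R|)$. If a witnessing pair is found, the standard representation of the concatenation $P\cdot Q$, together with its probabilities in $X$ and $Y$, is obtained in $\Oh(1)$ time by combining the representations stored with $P$ and $Q$.
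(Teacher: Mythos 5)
Your proposal is correct and essentially matches the paper's proof: the paper likewise removes dominated elements, reverses a list if needed so both are ordered consistently, and then runs the same two-pointer scan (the argument of \cref{lem:knap2}) directly on the probabilities. Recasting it as a two-class \MK instance via $-\log p_1,-\log p_2$ is only a cosmetic reformulation of that argument.
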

      \begin{proof}
        First, we filter out dominated elements of the lists, i.e., elements $(P,p_1,p_2)$ such that there exists
        another element $(P',p_1',p_2')$ with $p_1'> p_1$ and $p_2'> p_2$. 
        After this operation,  we make sure that both lists are sorted with respect to the non-decreasing probabilities in $X$;
        this might require reversing the list.

        For every element $(P,p_1,p_2)$ of $L$, we compute the first (leftmost) element $(P',p'_1,p'_2)$ of $R$
        such that $p_1 p'_1 \ge \frac1z$. This element maximizes $p'_2$ among all elements satisfying the latter condition.
        Hence, it suffices to check if $p_2 p'_2 \ge \frac1z$, and if so, report the result $S=PP'$.
        As the lists are ordered by $p_1$ and $p'_1$, respectively, all such elements
        can be computed in $\Oh(|L|+|R|)$ total time.
      \mayqed\end{proof}

      Finally, we are ready to apply a divide-and-conquer approach to the \SDWC problem:

      \begin{lemma}\label{lem:DWM_hard}
        The \SDWC problem can be solved in $\Oh(\sqrt{z\lambda} (\log \log z + \log \lambda))$ time.
      \end{lemma}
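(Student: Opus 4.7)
The plan is, for each of the four choices of $(U,V)\in\{X,Y\}^2$, to search for a consensus string structured as $S = L\cdot c\cdot C\cdot R$ according to \cref{fct:key} with the parameters $z_\ell = \sqrt{z/\lambda}$ and $z_r = \sqrt{z\lambda}$ (which satisfy $z_\ell z_r = z$). Such an $L\cdot c$ is precisely an element of some $\L_i$ for $i = |L|+1$, and such an $R$ is an element of some $\R_j$ for $j = n-|R|+1$, with $i<j$, while the middle block $C = \H(V)[i+1..j-1]$ is predetermined. I first invoke \cref{lem:L_R} to build all the lists $\L_i$ and $\R_i$, and then \cref{lem:Ls_Rs} to build $\Ls_{a,b}$ and $\Rs_{a,b}$ for every basic interval $[a,b]$, at a total cost of $\Oh(\sqrt{z\lambda}(\log\log z + \log\lambda))$.

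To efficiently range over all admissible pairs $(i,j)$ with $i<j$, I would traverse the segment tree on $\{1,\ldots,n+1\}$ whose nodes are exactly the basic intervals. For every internal node with range $[a,b]$ and split $c=\lfloor(a+b)/2\rfloor$, I apply \cref{lem:meet} to $\Ls_{a,c}$ and $\Rs_{c+1,b}$, returning any consensus string produced. Correctness follows from a lowest-common-ancestor argument: given a valid decomposition with indices $i<j$, the LCA of the leaves $i$ and $j$ has left child $[a,c]$ containing $i$ and right child $[c+1,b]$ containing $j$, so that extending $L\cdot c$ by heavy letters of $V$ up to position $c$ places it in $\Ls_{a,c}$ while prepending $R$ with heavy letters of $V$ from position $c+1$ places it in $\Rs_{c+1,b}$; the concatenation of the resulting pair is a length-$n$ common \fr-solid string, i.e., a valid consensus.

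The running-time bookkeeping is clean because every non-root basic interval appears as $\Ls_{a,c}$ in exactly one parent (as a left child) and as $\Rs_{c+1,b}$ in exactly one parent (as a right child). Hence the total work over all \cref{lem:meet} invocations is $\Oh\bigl(\sum_{[a,b]} (|\Ls_{a,b}|+|\Rs_{a,b}|)\bigr) = \Oh(\sqrt{z\lambda}\log\log z)$ by \cref{lem:Ls_Rs}; combined with the construction cost and the constant factor of four from the choices of $(U,V)$, the overall time is $\Oh(\sqrt{z\lambda}(\log\log z + \log\lambda))$. The step I expect to be most delicate is the coverage argument, i.e., showing that the specific LCA pair of basic intervals always suffices to witness a decomposition from \cref{fct:key}, so that iterating solely over the parent-level splits of the segment tree never misses a consensus. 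A secondary technicality is that \cref{lem:meet} expects both input lists to be sorted by the same probability, which in the case $U\ne V$ requires one of the lists to be re-sorted; however, since each list is already ordered by either $X$- or $Y$-probabilities, this contributes only the $\log\lambda$ factor already present in the target bound.
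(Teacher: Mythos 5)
Your proposal is correct and is essentially the paper's own proof: the paper's recursive divide-and-conquer over basic intervals starting from $[1,n+1]$ is exactly your segment-tree traversal that applies \cref{lem:meet} at every split node, with the same completeness argument (the LCA split of the pair $(i,j)$ from \cref{fct:key}) and the same accounting charging each list $\Ls_{a,b}$, $\Rs_{a,b}$ to $\Oh(1)$ invocations, giving $\Oh(\sqrt{z\lambda}\log\log z)$ for the merging plus the construction cost. The only nitpick is your final worry about re-sorting when $U\ne V$: \cref{lem:meet} as stated already accepts one list ordered by $X$-probabilities and the other by $Y$-probabilities (its proof filters dominated elements and possibly reverses a list in linear time), so no re-sorting and no extra factor is needed there.
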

      \begin{proof}
        The algorithm goes along \cref{fct:key}, considering all choices of $U$ and $V$.
        For each of them, we proceed as follows:
        
        First, we compute the lists $\L_i$, $\R_i$ and $\Ls_{a,b}$, $\Rs_{a,b}$ for all basic intervals.
        By \cref{lem:L_R,lem:Ls_Rs}, this takes $\Oh(\sqrt{z\lambda} (\log\log z+\log \lambda))$ time.
        In order to find out if there is a feasible solution, it suffices to attempt joining
        a common \fr-solid prefix from $\L_j$ with a common \fr-solid suffix from $\R_k$ for some indices $1 \le j < k \le n+1$
        by heavy letters of $V$ at positions $j+1,\ldots,k-1$.

        We use a recursive routine to find such a pair of indices $j$, $k$ in a basic interval $[a,b]$
        which has positive length and therefore can be decomposed into two basic subintervals $[a,c]$ and $[c+1,b]$.
        Then either $j \le c < k$, or both indices $j$, $k$ belong to the same interval $[a,c]$ or $[c+1,b]$.
        To check the former case, we apply the algorithm of \cref{lem:meet} to $L = \Ls_{a,c}$ and $R = \Rs_{c+1,b}$.
        The two latter cases are solved by recursive calls for the subintervals.

        The recursive routine is called first for the basic interval $[1,n+1]$.
        The computations performed by the routine for the basic intervals at the $j$-th level
        take at most the time proportional to the total size of lists $\Ls_{a,b}$, $\Rs_{a,b}$ at the $(j-1)$-th level.
        \cref{lem:Ls_Rs} shows that the total size of the lists at all levels is $\Oh(\sqrt{z\lambda} \log\log z)$.
        Consequently, the whole procedure works in $\Oh(\sqrt{z\lambda} (\log\log z+\log \lambda))$ time.
      \mayqed\end{proof}
      \cref{lem:DWM_hard} combined with \cref{lem:sdwc} provides an efficient implementation of the \GWPMFull.

    \begin{theorem}\label{lem:gwpm}
    The \GWPM problem can be solved in $\Oh(n\sqrt{z\lambda}(\log \log z + \log \lambda))$ time.
     An oracle for the \GWPM problem using $\Oh(n \log z)$ space and supporting queries in $\Oh(m)$ time can be
        computed within the same time complexity.
    \end{theorem}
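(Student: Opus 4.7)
The plan is to combine the reduction from \cref{lem:sdwc} with the fast \SDWC algorithm of \cref{lem:DWM_hard}. Specifically, I would first invoke \cref{lem:sdwc} to transform the \GWPM instance in $\Oh(n \lambda \log \lambda)$ time into at most $n-m+1$ instances of the \SDWC problem, each of length at most $2\floor{\log z}$. Note that $n\lambda\log\lambda = \Oh(n\sqrt{z\lambda}\log\lambda)$ because $\lambda \le z$ (hence $\sqrt{\lambda}\le\sqrt{z}$), so this preprocessing cost is absorbed into the target bound.

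Next, I would solve each of the at most $n-m+1 \le n$ produced \SDWC instances using \cref{lem:DWM_hard} in $\Oh(\sqrt{z\lambda}(\log\log z + \log \lambda))$ time each. Summing over all instances gives the claimed $\Oh(n\sqrt{z\lambda}(\log\log z + \log \lambda))$ total running time. A position $i$ belongs to $\Occ_\fr(P,T)$ precisely when the corresponding \SDWC call returns a consensus string (rather than NONE), so we obtain the output set within the stated time bound.

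For the oracle, the key observation is that although the full consensus string has length $m$, the text factor $T[i..i+m-1]$ and the pattern $P$ can differ from $\H(T)$ and $\H(P)$ only at the $\Oh(\log z)$ positions $D$ in the mismatch set used to construct the $i$-th \SDWC instance (recall \cref{obs:crucial2}); at all other positions the consensus letter coincides with $\H(T)[i+j-1] = \H(P)[j]$. Thus, for every $i\in \Occ_\fr(P,T)$ I would store the set $D$ together with the consensus letters on $D$ returned by the \SDWC solver, using $\Oh(\log z)$ space per occurrence and $\Oh(n \log z)$ space in total. To answer a query for a given $i\in\Occ_\fr(P,T)$, I iterate over $j=1,\ldots,m$ outputting $\H(P)[j]$ unless $j\in D$, in which case I output the stored consensus letter; this takes $\Oh(m)$ time.

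The main obstacle I foresee is bookkeeping rather than new algorithmic ideas: making sure the $\Oh(n\lambda\log\lambda)$ sorting cost from \cref{lem:sdwc} is indeed dominated by $\Oh(n\sqrt{z\lambda}\log\lambda)$, and that the oracle's compact $\Oh(\log z)$-size record per occurrence is sufficient to reconstruct a valid consensus string efficiently. Both follow cleanly from the fact that $\lambda \le z$ and from \cref{obs:crucial2}, so no substantial new argument is required beyond the two lemmas already established.
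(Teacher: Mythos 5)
Your proposal matches the paper's own argument: the theorem is proved exactly by combining the reduction of \cref{lem:sdwc} with the \SDWC algorithm of \cref{lem:DWM_hard}, with the oracle obtained (as you describe) by memorizing, for each occurrence, the mismatch set $D$ and the consensus letters on $D$, reconstructing the remaining positions from the heavy strings. Your bookkeeping observations (the $\Oh(n\lambda\log\lambda)$ preprocessing being absorbed since $\lambda\le z$, and the $\Oh(\log z)$-size record per occurrence) are correct and consistent with the paper.
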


    \section{Faster Algorithms for Large $\lambda$}\label{app:fast}
    In this section we analyse the running times of algorithms for the \MK problem expressed
    as $\Oh(n^{\Oh(1)}\cdot T(a,\lambda))$  for some function $T$ monotone with respect to both arguments.
    The algorithm of \cref{thm:knap} proves that achieving $T(a,\lambda)=\sqrt{a\lambda}$ is possible. 
    On the other hand, if we assume that \SubsetSum does not admit an $\Oh^*(2^{(0.5-\varepsilon)n})$-time solution,
    then we immediately get that we cannot have $T(a,2)=\Oh(a^{0.5 -\varepsilon})$ for any $\varepsilon \ge 0$.
    Similarly, the 3-\Sum conjecture implies that $T(\lambda^3,\lambda)=\Oh(\lambda^{2-\varepsilon})$ is impossible.
    While this already refutes the possibility of having $T(a,\lambda)=\Oh(a^{0.5}\lambda^{0.5-\varepsilon})$
    across all arguments $(a,\lambda)$, such a bound may still hold for some special cases covering an infinite number of arguments.
    For example, we may potentially achieve $T(a,\lambda)=\Oh((a\lambda)^{0.5-\varepsilon})=\Oh(\lambda^{1.5-\varepsilon})$ for $a=\lambda^2$.
    
    Before we prove that this is indeed possible, let us see the consequences of the hardness of 3-\Sum and, in general, $(2k-1)$-\Sum.   
    For a non-negative integer $k$, the $(2k-1)$-\Sum conjecture refutes $T(\lambda^{2k-1},\lambda)=\Oh(\lambda^{k-\varepsilon})$.
    By monotonicity of $T$ with respect to the first argument, we conclude that $T(\lambda^{c},\lambda)=\Oh(\lambda^{k-\varepsilon})$
    is impossible for $c\ge 2k-1$. 
    On the other hand, monotonicity with respect to the second argument shows that
    $T(\lambda^{c},\lambda)=\Oh(\lambda^{c\frac{k}{2k-1}-\varepsilon})$
    is impossible for $c\le 2k-1$. The lower bounds following from $(2k-1)$-\Sum and $(2k+1)$-\Sum
    turn out to meet at $c=2k-1+\frac{1}{k+1}$; see \cref{fig:graph}.

       \begin{figure}[hb]
   \begin{center}
    \begin{tikzpicture}
  \draw[->] (0.5,0.5) -- (7.5,0.5) node[right] {$c$};
  \draw[->] (0.5,0.5) -- (0.5,4.5);
  \draw (1,1) -- (1.5, 1) -- (3,2) -- (10/3,2) -- (5,3) -- (5.25,3) -- (7,4);
  \draw[thin, dotted] (1,1) -- (7,4);
  \foreach \x in {1,...,7} {
  	\draw (\x,0.6) -- (\x, .4) node[below] {\tiny $\x$};
  }
  \foreach \x in {1,...,4} {
  	\draw (0.6,\x) -- (.4,\x) node[left] {\tiny $\x$};
  }
  
\end{tikzpicture}
\end{center}
\caption{Illustration of the upper bound (dotted) and lower bound (solid) on $\log_{\lambda}T(\lambda^c,\lambda)$.}\label{fig:graph}
\end{figure}

  Consequently, we have some room between the lower and the upper bound of $\sqrt{a \lambda}$.
	In the aforementioned case of $a=\lambda^2$, the upper bound is $\lambda^{\frac32}$,
	compared to the lower bound of $\lambda^{\frac43-\varepsilon}$.
	Below, we show that the upper bound can be improved to meet the lower bound.
	More precisely, we show an algorithm whose running time is $\Oh(N + (a^{\frac{k+1}{2k+1}}+\lambda^k)\log\lambda \cdot n^k)$
	for every positive integer $k$.
	Note that $a^{\frac{k+1}{2k+1}}+\lambda^k = \lambda^{c\frac{k+1}{2k+1}}+ \lambda^k$, so for $2k-1\le c \le 2k+1$ the running time indeed matches the lower bounds up to the $n^k$ term.

    Due to \cref{lem:knapred2}, the extra $n^k$ term reduces to $\Oh((\frac{\log A}{\log \lambda})^k)$,
    and if we measure the running time using $A$ instead of $a$, it becomes a constant ($k^{\Oh(k)}$).
    In particular, this lets us prove that the \GWPM problem can be solved in $\Oh(n(z^{\frac{k+1}{2k+1}}+\lambda^k)\log\lambda)$ time
    for any integer $k=\Oh(1)$, improving upon the solution of \cref{sec:GWPMReduction}
    unless $z=\lambda^{\omega(1)}$ or $z=\lambda^{c\pm o(1)}$ for an odd integer $c$.

    \subsection{Algorithm for Multichoice Knapsack}\label{app:fastmk}    
	Let us start by discussing the bottleneck of the algorithm of \cref{thm:knap} for large $\lambda$.
	The problem is that the size of the classes does not let us partition every choice $S$ into a prefix $L$ and a suffix $R$
	with ranks both $\Oh(\sqrt{A_V})$. \cref{lem:decomp} leaves us with an extra letter $c$ between $L$ and $R$,
	and in the algorithm we append it to the prefix (while generating $\L_{j-1}^{(\ell)}\odot C_j$).

    We provide a workaround based on reordering of classes.
    Our goal is to make sure that items with large rank appear only in a few leftmost classes.
    For this, we guess the classes of the $k$ items with largest rank (in a feasible solution) and move them to the front.
    Since this depends on the sought feasible solution, we shall actually verify all $\binom{n}{k}$ possibilities.
    
    Now, our solution considers two cases:
    For $j>k$, the reordering lets us assume $\rank_V(c)\le \ell^{\frac{1}{k}}$, so we do not need
    to consider all items from $C_j$.
    For $j\le k$, on the other hand, we exploit the fact that $|\L_{j-1}^{(\ell)}\odot C_j|\le \lambda^{j}$,
    which at most $\lambda^k$.
         
    Combinatorial foundation of this intuition is formalized as a variant of \cref{lem:decomp}:
     
  \begin{lemma}\label{lem:decomp2}
  Let $\ell$ and $r$ be positive integers such that $V^{(\ell)}_{\L_j}+V^{(r)}_{\R_{j+1}}> V$  for every $0\le j \le n$.
  Let $k \in \{1,\ldots,n\}$ and suppose that $S$ is a choice with $v(S)\le V$ such that $\rank_V(S\cap C_i)\ge \rank_V(S\cap C_j)$ for $i \le k < j$.
    There is an index $j\in\{1,\ldots,n\}$ and a decomposition $S=L\cup\{c\}\cup R$
  such that $L\in \L_{j-1}^{(\ell)}$, $R\in \R_{j+1}^{(r)}$, $c\in C_j$, and either $\rank_V(c)\le \ell^{\frac{1}{k}}$ or $j \le k$.
  \end{lemma}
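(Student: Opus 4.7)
My plan is to reuse the decomposition constructed in the proof of \cref{lem:decomp} and verify that, under the additional hypothesis on the choice $S$, one of the two extra alternatives must hold automatically. First I would write $S = \{c_1, \ldots, c_n\}$ with $c_i \in C_i$, set $S_i = \{c_1, \ldots, c_i\}$, and pick $j$ exactly as in \cref{lem:decomp}: either $j = n$ (when $v(S_{n-1}) < V^{(\ell)}_{\L_{n-1}}$) or $j$ as the smallest index with $v(S_j) \ge V^{(\ell)}_{\L_j}$. In both cases, taking $L := S_{j-1}$, $c := c_j$, and $R := S \setminus S_j$ gives $L \in \L^{(\ell)}_{j-1}$, $c \in C_j$, and $R \in \R^{(r)}_{j+1}$ exactly as in the earlier lemma.

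If the chosen $j$ satisfies $j \le k$, the second alternative of the conclusion holds and the proof is finished. Otherwise $j > k$, and it remains to show $\rank_V(c) \le \ell^{\frac{1}{k}}$. The natural upper bound is $\rank_V(L) < \ell$, which follows from $v(L) < V^{(\ell)}_{\L_{j-1}}$ and the definition of $V^{(\ell)}_{\L_{j-1}}$. For a matching lower bound on $\rank_V(L)$, I would decompose $L = \{c_1\} \cup \cdots \cup \{c_{j-1}\}$ and iterate \cref{fct:comb} to obtain $\rank_V(L) \ge \prod_{i=1}^{j-1} \rank_V(\{c_i\})$. The hypothesis that items of $S$ in the first $k$ classes have rank at least as large as those in the remaining classes supplies $\rank_V(\{c_i\}) \ge \rank_V(\{c_j\})$ for each $i \le k$, while the remaining factors are trivially at least $1$. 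Combining these bounds yields $\rank_V(\{c_j\})^k \le \rank_V(L) < \ell$, i.e.\ $\rank_V(c) < \ell^{\frac{1}{k}}$, as required.

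The key thing to watch out for is the direction of the inequalities: \cref{fct:comb} gives a lower bound on the rank of a union in terms of the ranks of its parts, while the choice of $j$ supplies an upper bound on $\rank_V(L)$; funnelling these into an upper bound on the single rank $\rank_V(c_j)$ is precisely the reason for placing the high-rank items at the left. I do not anticipate any substantial obstacle beyond checking that the degenerate cases (in particular $j = n$, when $R$ is empty, and small $j$ where a few factors in the product are absent) are handled consistently with the conventions for $\L^{(\ell)}_j$ and $\R^{(r)}_j$.
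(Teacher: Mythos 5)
Your proposal is correct and follows essentially the same route as the paper: reuse the decomposition from \cref{lem:decomp}, note the case $j\le k$ is immediate, and for $j>k$ combine the upper bound $\rank_V(L)<\ell$ (from $v(L)<V^{(\ell)}_{\L_{j-1}}$) with an iterated application of \cref{fct:comb} and the ordering hypothesis to get $\rank_V(c)^k\le \rank_V(L)<\ell$.
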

  \begin{proof}
  We claim that the decomposition constructed in the proof of \cref{lem:decomp} satisfies the extra $\rank_V(c)\le \ell^{\frac{1}{k}}$
  condition if $j > k$. Let $S = \{c_1,\ldots,c_n\}$ and $S_i = \{c_1,\ldots,c_i\}$.
  Obviously $\rank_V(c_i)\ge 1$ for $k < i < j$ and, by the extra assumption, $\rank_V(c_i)\ge \rank_V(c)$ for $1\le i \le k$. Hence, \cref{fct:comb} yields $\rank_V(S_{j-1})\ge \rank_V(c)^k$. Simultaneously, we have $v(S_{j-1})<V^{(\ell)}_{\L_j}$,
  so $\rank_V(S_{j-1})<\ell$. Combining these inequalities, we immediately get the claimed bound.
  \end{proof}

   \begin{theorem}\label{thm:knap3}
    For every positive integer $k=\Oh(1)$, the \MK problem can be solved in $\Oh(N+\allowbreak {(a^{\frac{k+1}{2k+1}}+\lambda^k)}\log A (\frac{\log A}{\log \lambda})^{k})$ time.
  \end{theorem}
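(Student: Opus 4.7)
The plan is to iterate the meet-in-the-middle strategy of Theorem \ref{thm:knap}, but with two adjustments: a guessing step that reorders the classes to match Lemma \ref{lem:decomp2}, and a rebalancing of the parameters $\ell$ and $r$ to exploit the sharper decomposition.

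First I would apply Lemma \ref{lem:knapred2} to reduce to $n = \Oh(\log A / \log \lambda)$ in $\Oh(N + \lambda \log A)$ time. Next I would iterate over every size-$k$ subset $K \subseteq \{1, \ldots, n\}$ (there are $\binom{n}{k} = \Oh(n^k)$ of them), reorder the classes so that those in $K$ appear first, and run the main algorithm on each reordered instance; as in Theorem \ref{thm:knap}, an additional symmetric copy with $(v,V)$ exchanged with $(w,W)$ runs in parallel. If $K$ matches the indices of the $k$ items with largest $\rank_V$ in a feasible solution $S$, Lemma \ref{lem:decomp2} guarantees a decomposition $S = L \cup \{c\} \cup R$ with $L \in \L_{j-1}^{(\ell)}$, $R \in \R_{j+1}^{(r)}$, $c \in C_j$, where either $j \le k$ or $\rank_V(c) \le \ell^{1/k}$.

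Within a single guess, I would iterate $r = 1, 2, \ldots$ with $\ell = \lceil r^{k/(k+1)} \rceil$, generating $\L_{j-1}^{(\ell)}$ and $\R_{j+1}^{(r)}$ online via Lemma \ref{lem:generate}. At each value of $r$ and for each $j$, I build the two-class \MK instance whose classes are $\R_{j+1}^{(r)}$ and $\L_{j-1}^{(\ell)} \odot C_j'$, where $C_j' := C_j$ for $j \le k$ and $C_j'$ is the set of the $\lceil \ell^{1/k} \rceil$ items of $C_j$ with smallest $v$ for $j > k$; the restriction is sound by Lemma \ref{lem:decomp2}. Each two-class instance is solved via Lemma \ref{lem:knap2}. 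The process terminates either when a feasible solution surfaces, or when $V^{(\ell)}_{\L_j} + V^{(r)}_{\R_{j+1}} \le V$ for some $j$, in which case Fact \ref{fct:bound} certifies $\ell \cdot r \le A_V$. Combined with $\ell = \Theta(r^{k/(k+1)})$, this bounds $r = \Oh(a^{(k+1)/(2k+1)})$ and $\ell = \Oh(a^{k/(2k+1)})$; as in Theorem \ref{thm:knap}, if the $V$-side run exceeds this budget, the symmetric $W$-side run certifies the bound via $A_W$, so the effective parameter is $a = \min(A_V, A_W)$.

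For each $j$, the size $|\L_{j-1}^{(\ell)} \odot C_j'|$ is at most $\lambda^k$ when $j \le k$ (using $|\L_{j-1}^{(\ell)}| \le \lambda^{j-1}$ and $|C_j| \le \lambda$) and at most $\ell \cdot \ell^{1/k} = \Oh(a^{(k+1)/(2k+1)})$ when $j > k$, so the merging plus Lemma \ref{lem:knap2} takes $\Oh((a^{(k+1)/(2k+1)} + \lambda^k) \log \lambda)$ time. Summing over $n = \Oh(\log A / \log \lambda)$ values of $j$ and $\Oh(n^k)$ guesses yields the claimed $\Oh(N + (a^{(k+1)/(2k+1)} + \lambda^k) \log A \cdot (\log A / \log \lambda)^k)$. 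The main obstacle is to select $\ell$ as a function of $r$ so that both the termination condition of Fact \ref{fct:bound} and the size bound for $\L_{j-1}^{(\ell)} \odot C_j'$ scale the same way; the choice $\ell = \Theta(r^{k/(k+1)})$ makes $\ell r = \Theta(r^{(2k+1)/(k+1)})$, which is precisely what is needed for $\ell r \le A_V$ to yield $r = \Oh(a^{(k+1)/(2k+1)})$. A secondary subtlety is the correctness of restricting $C_j$ to its $\lceil \ell^{1/k} \rceil$ items of smallest $v$ for $j > k$, which is exactly where Lemma \ref{lem:decomp2} is used; implementationally, sorting each class by $v$ once at preprocessing lets us read off the restriction in $\Oh(\ell^{1/k})$ time per class without affecting the overall running time.
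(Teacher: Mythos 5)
Your overall route is the same as the paper's: reduce to $n=\Oh(\frac{\log A}{\log\lambda})$ via \cref{lem:knapred2}, guess the $k$ classes holding the largest-rank items of a feasible solution and move them to the front ($\binom{n}{k}$ guesses), maintain $\ell=\lceil r^{k/(k+1)}\rceil$ while generating $\L_{j}^{(\ell)}$ and $\R_{j+1}^{(r)}$ online by \cref{lem:generate}, bound $r$ through \cref{fct:bound}, and finish with two-class instances $\L_{j-1}^{(\ell)}\odot C_j'$ versus $\R_{j+1}^{(r)}$ solved by \cref{lem:knap2}, where $C_j'=C_j$ for $j\le k$ and $C_j'$ is restricted to the $\lceil\ell^{1/k}\rceil$ smallest-value items for $j>k$ (a restriction that is indeed sound by \cref{lem:decomp2}); the size and merging bounds, the summation over $j$ and over guesses, and the parallel $(w,W)$ copy all match the paper.

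There is, however, a genuine flaw in your loop control as written. The generation loop must \emph{continue} as long as $V^{(\ell)}_{\L_j}+V^{(r)}_{\R_{j+1}}\le V$ for some $j$, and \emph{stop} only once $V^{(\ell)}_{\L_j}+V^{(r)}_{\R_{j+1}}> V$ for all $j$: that inequality is exactly the hypothesis of \cref{lem:decomp2}, so only then are the generated lists guaranteed to contain the parts $L$ and $R$ of a decomposition of every feasible solution, and \cref{fct:bound} bounds $r$ because the continuation condition held one step earlier. You state the opposite, terminating when the sum is at most $V$ for some $j$; at $r=\ell=1$ that condition reads $V_{\min}\le V$, so your loop would halt immediately on any instance with $A_V\ge 1$, before the decomposition guarantee applies, and reporting NO at that point could be incorrect. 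Relatedly, you build and solve the two-class instances at every value of $r$ rather than once after the loop stops; even with the condition fixed, re-merging and re-running \cref{lem:knap2} on lists of size $\Theta(r)$ at each step sums to $\Omega(r_{\max}^2)=\Omega\bigl(a^{2(k+1)/(2k+1)}\bigr)$ per guess, which exceeds the claimed bound. Both problems vanish if you flip the termination condition and defer the construction and solving of the two-class instances to after the generation loop ends, which is what the paper does.
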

  \begin{proof}
  As in the proof of \cref{thm:knap}, we actually provide an algorithm whose running time depends on $A_V$ rather than $a$.
  Moreover, \cref{lem:knapred2} lets us assume that $n=\Oh(\frac{\log A}{\log \lambda})$.   
    
    We first guess the $k$ positions where items with largest ranks $\rank_V$ are present in the solution $S$ and move these positions to the front. This gives $\binom{n}{k}=\Oh((\frac{\log A}{\log \lambda})^k)$ possible selections. For each of them, we proceed as follows.

    We increment an integer $r$ starting from $1$, maintaining $\ell=\big\lceil r^{\frac{k}{k+1}}\big\rceil$ and all the lists $\L_{j}^{(\ell)}$ and $\R_{j+1}^{(r)}$ for $0\le j \le n$,
    as long as $V^{(\ell)}_{\L_j}+V^{(r)}_{\R_{j+1}}\le V$ for some $j$.
    By \cref{fct:bound}, we stop with $r=\Oh(A_V^{\frac{k+1}{2k+1}})$ and
    thus the total time of this phase is $\Oh(A_V^{\frac{k+1}{2k+1}}\log A)$ due to the online procedure of \cref{lem:generate}.

    By \cref{lem:decomp2}, every feasible solution $S$  admits a decomposition $S=L\cup\{c\}\cup R$ for some $j$;
    we shall consider all possibilities for $j$.
    For each of them, we reduce searching for $S$ to an instance of the \MK  problem with $N'=\Oh(A_V^{\frac{k+1}{2k+1}}+\lambda^k)$ and $n'=2$. 
    By \cref{lem:knap2}, these instances can be solved in $\Oh((A_V^{\frac{k+1}{2k+1}}+\lambda^k)\frac{\log A}{\log \lambda})$
    time in total.
   
    For $j\le k$, the items of the $j$-th instance are going to belong to classes $\L_{j-1}^{(\ell)}\odot C_j$ and $\R_{j+1}^{(r)}$,
    where $\L_{j-1}^{(\ell)}\odot C_j = \{L\cup \{c\} : L\in \L_{j-1}^{(\ell)} , c\in C_j\}$.
    The set $\L_{j-1}^{(\ell)}\odot C_j$ can be sorted by merging $|C_j|$ sorted lists of size at most $\lambda^{j-1}$ each,
    i.e., in $\Oh(\lambda^k \log \lambda)$ time.
    On the other hand, for $j > k$, we take 
    $\{L\cup \{c\} : L\in \L_{j-1}^{(\ell)} , c\in C_j, \rank_V(c)\le \ell^{\frac{1}{k}}\}$ and $\R_{j+1}^{(r)}$.
    The former set can be constructed by merging $\ell^{\frac{1}{k}}=\Oh(r^\frac{1}{k+1})$ sorted lists of size $\Oh(r^\frac{k}{k+1})$ each,
    i.e., in $\Oh(r\log \lambda)=\Oh(A_V^{\frac{k+1}{2k+1}}\log \lambda)$ time. 
    
    Summing up over all indices $j$, this gives $\Oh((A_V^{\frac{k+1}{2k+1}} + \lambda^k)\log A)$ time
    for a single selection of the $k$ positions with largest ranks,
    and $\Oh((A_V^{\frac{k+1}{2k+1}} + \lambda^k)\log A (\frac{\log A}{\log \lambda})^{k})$ in total.
       
    Clearly, each solution of the constructed instances represents a solution of the initial instance,
    and by \cref{lem:decomp2}, every feasible solution of the initial instance has its counterpart in one of the constructed instances. 
  
      Before we conclude the proof, let us note that the optimal $k$ does not need to be known in advance. 
    To deal with this issue, we try consecutive integers $k$ and stop the procedure if
    \cref{fct:bound} yields that $A_V > \lambda^{2k+1}$, i.e., if $r$ is incremented beyond $\lambda^{k+1}$.
    If the same happens for the other instance of the algorithm (operating on $\rank_W$ instead of $\rank_V$), we conclude that $a>\lambda^{2k+1}$,
    and thus we shall better use larger $k$. 
    The running time until this point is $\Oh(\lambda^{k+1}\log\lambda (\frac{\log A}{\log \lambda})^k)$ due to \cref{lem:generate}. 
    On the other hand, if $r\le \lambda^{k+1}$, the algorithm behaves as if $a \le  \lambda^{2k+1}$, i.e., runs in $\Oh(\lambda^{k+1}\log\lambda (\frac{\log A}{\log \lambda})^k)$ time.
     This workaround (considering all smaller values $k$) adds extra $\Oh(\lambda^{k}\log\lambda (\frac{\log A}{\log \lambda})^{k-1})$
    to the time complexity for the optimal value $k$, which is less than the upper bound on the running time we have for this value $k$.
      \end{proof}

    If we are to bound the complexity in terms of $A$ only, the running time becomes
    $${\Oh(N+ {(A^{\frac{k+1}{2k+1}}+\lambda^k)}\log A (\frac{\log A}{\log \lambda})^{k})}.$$
    Assumptions that $A\le \lambda^{2k+1}$ and $k=\Oh(1)$ let us get rid of the $(\frac{\log A}{\log \lambda})^{k}$ term, which can be bounded by $(2k+1)^k=\Oh(1)$.
    If one of these assumptions is not satisfied, we can improve the bound on the running time anyway:
    using \cref{thm:knap3} with increased $k$ if $A> \lambda^{2k+1}$, and using \cref{thm:knap} if $k=\omega(1)$.
    
    \begin{corollary}
    Let $k=\Oh(1)$ be a positive integer such that $A\le \lambda^{2k+1}$. The \MK problem can be solved in $\Oh(N+ {(A^{\frac{k+1}{2k+1}}+\lambda^k)}\log \lambda)$ time.
    \end{corollary}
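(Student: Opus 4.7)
The plan is to derive this corollary as a direct specialization of \cref{thm:knap3} under the two hypotheses $k=\Oh(1)$ and $A\le \lambda^{2k+1}$; no new algorithmic idea is needed, only a careful simplification of the running-time expression.

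First, I would weaken the $a$-dependence in \cref{thm:knap3} to an $A$-dependence using the trivial monotonicity $a \le A$. This yields, as already noted immediately after \cref{thm:knap3}, a running time of
\[
\Oh\!\left(N + (A^{\frac{k+1}{2k+1}} + \lambda^k)\log A \cdot \Bigl(\tfrac{\log A}{\log \lambda}\Bigr)^{k}\right).
\]
Then I would invoke the hypothesis $A\le \lambda^{2k+1}$ to bound $\log A \le (2k+1)\log\lambda$, so $\frac{\log A}{\log\lambda} \le 2k+1$. Combined with $k=\Oh(1)$, this gives $(\frac{\log A}{\log \lambda})^{k} \le (2k+1)^{k} = \Oh(1)$, and simultaneously $\log A = \Oh(\log\lambda)$. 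Substituting both simplifications collapses the expression to $\Oh(N+(A^{\frac{k+1}{2k+1}}+\lambda^k)\log\lambda)$, which is exactly the bound claimed in the corollary.

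There is no real obstacle; the argument is a two-line specialization. The only point requiring a moment of care is that \cref{thm:knap3} is phrased in terms of $a$ while the corollary is phrased in terms of $A$, so the step $a\le A$ must be made explicit before the two inequalities derived from $A\le \lambda^{2k+1}$ and $k=\Oh(1)$ are applied to remove the $(\frac{\log A}{\log\lambda})^k$ factor and to replace $\log A$ by $\log\lambda$.
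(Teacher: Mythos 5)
Your proposal is correct and follows essentially the same route as the paper: the paper also derives the corollary by passing from $a$ to $A$ in \cref{thm:knap3} and then using $A\le\lambda^{2k+1}$ together with $k=\Oh(1)$ to bound $(\frac{\log A}{\log\lambda})^{k}\le(2k+1)^{k}=\Oh(1)$ and $\log A=\Oh(\log\lambda)$. Nothing is missing.
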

    
    This leads to the following results for weighted pattern matching:
    
   \begin{theorem}\label{thm:fast}
	Suppose that $\lambda^{2k-1}\le z \le \lambda^{2k+1}$ for some positive integer $k=\Oh(1)$.
	Then the \SDWC problem can be solved in $\Oh((z^{\frac{k+1}{2k+1}} + \lambda^{k})\log\lambda)$ time,
	and the \GWPM problem can be solved in $\Oh(n(z^{\frac{k+1}{2k+1}} + \lambda^{k})\log\lambda)$ time.
	\end{theorem}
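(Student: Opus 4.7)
The plan is to derive \cref{thm:fast} by reducing \SDWC to an instance of \MK and invoking the corollary to \cref{thm:knap3}, rather than adapting the tailor-made divide-and-conquer of \cref{sec:GWPMReduction}. Specifically, I would apply the same reduction used in the proof of \cref{cor:red_simple}: from an \SDWC instance $(X,Y,\fr)$ with $|X|=|Y|=n\le 2\floor{\log z}$, build an \MK instance with $n$ classes of size at most $\lambda$, values $v(c_{i,s})=-\log \pi_i^{(X)}(s)$, weights $w(c_{i,s})=-\log \pi_i^{(Y)}(s)$, and thresholds $V=W=\log z$. By \cref{fct:maxprefixes}, this instance satisfies $A_V,A_W\le z$, so $A\le z$, and the instance size satisfies $N = \Oh(\lambda \log z)$.

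Next, using the hypothesis $z\le \lambda^{2k+1}$ with $k=\Oh(1)$, I would observe that $A\le z\le \lambda^{2k+1}$, so the corollary following \cref{thm:knap3} applies and yields an \MK algorithm running in
\[
\Oh\!\big(N+(A^{\frac{k+1}{2k+1}}+\lambda^k)\log \lambda\big)
\;=\;\Oh\!\big((z^{\frac{k+1}{2k+1}}+\lambda^k)\log \lambda\big),
\]
where the $N$ term is absorbed because $N=\Oh(\lambda\log z)=\Oh(\lambda\log\lambda)$ (since $k=\Oh(1)$ forces $\log z=\Oh(\log\lambda)$), and this is dominated by $\lambda^k\log\lambda$ for $k\ge 1$. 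This takes care of the \SDWC part of the statement. For the \GWPM part, I would chain this with \cref{lem:sdwc}: that lemma reduces \GWPM to at most $n-m+1$ instances of \SDWC after $\Oh(n\lambda\log\lambda)$ preprocessing, and each \SDWC instance is now solvable in $\Oh((z^{\frac{k+1}{2k+1}}+\lambda^k)\log\lambda)$ time by the first part. Summing over the $\Oh(n)$ instances and noting that $n\lambda\log\lambda = \Oh(n\lambda^k\log\lambda)$ is absorbed, the overall time is $\Oh(n(z^{\frac{k+1}{2k+1}}+\lambda^k)\log\lambda)$.

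The main obstacle I anticipate is bookkeeping the log factors: one needs to verify that the corollary's hypothesis $A\le \lambda^{2k+1}$ is really ensured by the \SDWC side condition $z\le \lambda^{2k+1}$ (not merely by the weaker lower bound $\lambda^{2k-1}\le z$, which is irrelevant for the upper bound direction), and that the $N$ term from the reduction together with the $\Oh(n\lambda\log\lambda)$ preprocessing cost of \cref{lem:sdwc} genuinely disappear into the $\lambda^k\log\lambda$ term. Once those are checked, no additional combinatorial work is required beyond citing the earlier lemmas. Note also that the lower bound $\lambda^{2k-1}\le z$ in the hypothesis of \cref{thm:fast} serves to certify that the chosen $k$ is the optimal integer for the regime (so the stated bound matches the conditional lower bounds from \cref{thm:lb} up to the $n^k$ and $\log\lambda$ overheads), rather than being used inside the proof itself.
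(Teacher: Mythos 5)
Your proposal is correct and follows essentially the same route the paper intends (it leaves the proof of \cref{thm:fast} implicit): reduce each \SDWC instance to \MK as in \cref{cor:red_simple}, note $A\le z\le\lambda^{2k+1}$ so the corollary after \cref{thm:knap3} applies, and chain with \cref{lem:sdwc} for \GWPM, with the $N$ and preprocessing terms absorbed exactly as you argue. Your observation that the lower bound $\lambda^{2k-1}\le z$ is only needed to certify that this $k$ is the right regime (not for the upper bound itself) also matches the paper's discussion.
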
	
    
    As we noted at the beginning of this section, \cref{lem:red} implies that
    any improvement of the dependence of the running time on $z$ or $\lambda$ by $z^{\varepsilon}$ (equivalently, by $\lambda^{\varepsilon}$)
    would contradict the $k$-\Sum conjecture.

  \section{Final Remarks}\label{sec:fr}
  In \cref{sec:MK}, we gave an $\Oh(N+a^{0.5}\lambda^{0.5}\log A)$-time algorithm for the \MK problem.
  Improvement of either exponent to $0.5 - \varepsilon$ would result in a breakthrough for the \SubsetSum and 3-\Sum problems,
  respectively. 
  Nevertheless, this does not refute the existence of faster algorithms for some particular values $(a,\lambda)$
  other than those emerging from instances of \SubsetSum or 3-\Sum.
  In \cref{app:fast}, we show an algorithm that is superior if $\frac{\log a}{\log \lambda}$ is a constant other than an odd integer.
  We also prove it to be optimal (up to lower order terms) for every constant $\frac{\log a}{\log \lambda}$
  unless the $k$-\Sum conjecture fails.

  \bibliographystyle{plain}
  \bibliography{wpm}

\end{document}